\newtheorem{theorem}{Theorem}[section]
\newtheorem*{theorem*}{Theorem}
\newtheorem{lemma}[theorem]{Lemma}
\newtheorem*{lemma*}{Lemma}
\newtheorem{proposition}[theorem]{Proposition}
\newtheorem{definition}[theorem]{Definition}
\newtheorem*{definition*}{Definition}
\theoremstyle{remark}
\newtheorem{remark}{Remark}
\newtheorem*{remark*}{Remark}
\tikzset{iNode/.style={draw=blue, rectangle}}
\tikzset{fNode/.style={draw=green, circle}}
\tikzset{rNode/.style={draw=red, circle}}
\tikzset{nNode/.style={draw, circle}}
\newcommand{\prob}{\Pr}
\def\zo{\{0,1\}}
\def\mapping{\rightarrow}
\newcommand{\ie}{$\mbox{i.e.}$}
\newcommand{\nat}{{\mathbb N}}
\newcommand{\mctX}{\widetilde{\mathcal{X}}}
\newcommand{\mcX}{\mathcal{X}}
\newcommand{\mcS}{\mathcal{S}}
\newcommand{\mcR}{\mathcal{R}}
\newcommand{\mcM}{\mathcal{M}}
\newcommand{\E}{\mathcal{E}}
\newcommand{\tx}{{\tilde{x}}}
\newcommand{\ttd}{{\cal D}}  
\newcommand{\ttc}{{\cal C}}
\newcommand{\enc}{\mathrm{Enc}}
\newcommand{\dec}{\mathrm{Dec}}
\newcommand{\cha}{\mathrm{Ch}}
\newcommand{\Ch}{\cha}
\newcommand{\tk}{t}
\newcommand{\kt}{k}
\newcommand{\guv}{\rm{GUV}}
\newcommand{\tu}{\rm{TU}}
\newcommand{\rrv}{\rm{RRV}}
\newcommand{\tm}{t_{\rm{max}}}
\newcommand{\F}{\mathbb{F}}
\newcommand{\eps}{\varepsilon}
\renewcommand{\mid}{\,|\,}
\definecolor{lightblue}{rgb}{.60,.60,1}
\definecolor{lightred}{rgb}{1, .60, 0.60}
\newcommand{\marius}[1]{\color{red} #1 \color{black}}
\def\@listI{\leftmargin\leftmargini \parsep 4.5pt plus 1pt minus 1pt\topsep6pt plus 2pt minus 2pt \itemsep  2pt plus 2pt minus 1pt}
\let\@listi\@listI
\title{Universal codes in the  shared-randomness model   for  channels with general distortion capabilities}
\author{
{Bruno Bauwens\/}
\thanks{National Research University Higher School of Economics, Faculty of Computer Science, Moscow, Russia, email:~\texttt{brbauwens@gmail.com};
Chapters 3 and 4 were prepared at the National Research University Higher School of Economics (HSE University) and supported by Russian Science Foundation (grant 20-11-20203).
}
\quad{Marius Zimand\/}
\thanks{  Department of Computer and Information Sciences, Towson University,
Baltimore, MD. http://orion.towson.edu/\~{ }mzimand ; Chapters 1, 2, 5 and appendix A were prepared at Towson University. The author has been supported in part by the National Science Foundation through grant CCF 1811729.}}
\begin{document}
 
\date{}

\maketitle
\begin{abstract}
We put forth new models  for  universal channel coding.
Unlike standard codes which are designed for a specific type of channel,  our most general universal  code makes  communication resilient on every channel,  
 provided the noise level is below the tolerated bound, where the noise level $t$ of a channel is the logarithm of its ambiguity (the maximum number of strings that can be distorted into a given one). The other  more restricted universal codes that we introduce  still work for large classes of natural channels. 
  In a universal code,  encoding  is channel-independent,  but  the decoding function knows the type of channel.  We allow the encoding and the decoding functions to share randomness, which is unavailable to the channel.  There are two scenarios for the type of attack that a channel can perform. In the oblivious scenario, 
  codewords belong to an additive group and the channel distorts a codeword by adding a vector from a fixed set. The selection is
  based on the message and the encoding function, but not on the codeword. 
  In the Hamming scenario, the channel knows the codeword and is fully adversarial.   For a universal code, there are two parameters of interest: the rate, which is the ratio between the message length $k$ and the codeword length $n$, and the number of shared random bits.  We show the existence in both scenarios  of universal codes with rate $1-t/n - o(1)$,  which is optimal modulo the $o(1)$ term.  The number of shared random bits is $O(\log n)$ in the oblivious scenario, and $O(n)$  in the Hamming scenario, which, for typical values of the noise level,  we show to be optimal, modulo the constant hidden in the $O(\cdot)$ notation.  In both scenarios,  the universal encoding is done in time polynomial in $n$, but  the channel-dependent decoding procedures are  in general not efficient. For some weaker classes of channels which produce the distortion based on short blocks of the codeword (rather than the entire codeword), we construct universal codes with polynomial-time encoding and decoding. Furthermore, for channels that work in the memoryless oblivious scenario, where they choose the noise vector  for each block randomly (rather than adversarially) and independently, there exists a universal code with deterministic polynomial-time encoding/decoding.
\end{abstract}

\newpage



\section{Introduction}

In the problem of  {\em channel coding} a sender  needs to communicate data  over a noisy channel to a receiver. 
In the most general setting a message $m$ is encoded into a codeword $x$. This codeword is transmitted over the channel who distorts $x$ into $\tx$. Then a decoder tries to reconstruct $m$ from~$\tx$.
\[
 m  \xrightarrow[\text{\quad Encoder \quad }]{} x  \xrightarrow[\text{\quad \quad  \quad Channel \quad \quad \quad }]{} \tx \xrightarrow[\text{\quad Decoder \quad}]{} m
\]

The channel is viewed as an adversary and is characterized by the type of operations it uses to produce the noise, and by a parameter $\tk$, which quantitatively describes the maximum noise that we want to tolerate. 
Roughly speaking, most studies have focused on channels  defined by a fixed set ${\cal T}$ of possible operations  that add noise and by setting $\tk$ to be  the maximum  number of operations from ${\cal T}$ that the (encoder, decoder) pair can handle. 
Perhaps the most investigated setting  is the theory of error-correcting codes, where ${\cal T}$ consists of the single operation of $1$-bit flip ($0 \mapsto1, 1 \mapsto 0$).  In this case, $\tk$ is the maximum Hamming distance between $x$ and $\tx$ that is tolerated.   Another case where there has recently been significant progress is when ${\cal T}$ consists of the operations of $1$-bit flip, $1$-bit deletion, and $1$-bit insertion.   In this case, $\tk$ is the maximum edit distance between $x$ and $\tx$ that is tolerated.  Still another case is when ${\cal T}$ consists of the erasure operation which  transforms a bit into ``?". Other types of  channel that distort in various ways have also been investigated.

\medskip
Our setting is different in two important ways. 
Firstly, we consider channels that can do \emph{arbitrary} distortion. We consider two different scenarios on how the channel does the distortion, depending on whether it ``knows'' the codeword or only the message. 

\begin{itemize}[leftmargin=0pt,itemindent=*]
  \item[--] 
    In the {\em Hamming scenario} a channel is defined by a bipartite graph where left nodes represent codewords that are inputs of the channel, and right nodes represent distorted codewords that are outputs. A left and a right element are connected, if the channel may distort the codeword at the left, to the one at the right. 
    The level $t$ of the noise in the channel is the logarithm of the maximal degree of a right node, \ie, the logarithm of  the maximal number of input codewords of the channel that can produce the same distorted codeword. All left degrees are at least $1$.
No other assumptions are made on the channel.

  \item[--] 
In the {\em oblivious scenario} 
a channel takes as input codewords from an additive group. The channel is defined by a set of error vectors. On input a codeword for a message, it will add a vector from this set to the codeword. 
The choice of the error vector does not depend on the codeword, but on the message. The level $t$ of the noise is the logarithm of the size of the set.
\end{itemize}

Secondly,  our goal for each scenario is to have a single encoding function that is channel-independent. We call this a \emph{universal code.}  Differently said, a universal code is resilient to any type of distortion, provided the noise level is within the tolerated bound. On the other hand, for every channel there is a corresponding decoding function.\footnote{For any given decoding function. one can construct a channel that defeats it, and thus, unlike the encoding function,   it is impossible to  have a single decoding function  for all channels as well.  Anyway, decoding happens after the channel attack, and typically in  theoretical and practical applications,  the decoder knows the type of attack, or, at least, has a few candidates for it.} 

In order to construct universal codes, we  assume a special set-up for the communication process:  the universal encoder and the decoder functions are probabilistic and  share random bits. Such codes are called \emph{private codes}. They  have been introduced by Shannon~\cite{sha:j:private} (under the name random codes), and more recently studied by Langberg~\cite{lan:c:privatecode} (see also~\cite{smith2007scrambling, guruswami2016optimal}). 
%
The channel does not have access to the random shared bits, although, in the Hamming setting, the codeword might reveal some information about the randomness indirectly.

There are two important parameters. 
The first is the \emph{rate} of the code, which is defined by $\log K / \log N$, where $K$ is the number of messages that we can send, and $N$ is the number of codewords that the channel can transmit.  The second is the \emph{shared randomness} of the code, which is the number of random bits that the encoder and decoder share.  Given a noise level $t=\log T$, we want to maximize the rate and minimize the shared randomness.

It is not difficult to show that for a universal code, the value of the product $KT$ can not be larger than $N/(1-\epsilon)$, where $\epsilon$ is the error probability of the reconstruction of the message. 
This implies that the rate of such a code is at most $1- t/n - o(1)$, where $n = \log N$, see section~\ref{s:ub}.  
We construct universal codes with rates that converge to the optimal value and have small shared randomness.   The following simplified statements are valid for constant probability error. 

\begin{theorem}[Main Result - informal statement]\label{t:maininformal}
  \begin{itemize}[leftmargin=2em]
    \item[]
    \item[(a)] 
      There exists a universal code in the Hamming scenario with rate $1-t/n - o(1)$ and shared randomness $O(n)$. 
    \item[(b)] 
      There exists a universal code in the oblivious scenario with rate $1-t/n - o(1)$  and  shared randomness $O(\log n)$. 
  \end{itemize}
\end{theorem}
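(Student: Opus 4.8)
The plan is to prove the two parts separately, though both follow a common template: use shared randomness to ``randomize away'' the adversarial power of the channel, reducing the problem to a much easier random-noise situation, and then apply a good off-the-shelf code (or a counting/probabilistic argument) on top.

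For part (b), the oblivious scenario, I would use the shared randomness to choose a random shift of the codeword space. Concretely, fix a rate-$(1-t/n-o(1))$ linear (or merely combinatorial) code $C \subseteq \F_2^n$ with minimum distance comfortably larger than what a set of $T=2^t$ fixed error vectors can exploit — here the key point is that in the oblivious scenario the error set is chosen before seeing the codeword, so it only depends on the message and the encoder. The encoder sends $\enc(m) = c_m + r$ where $r$ is a shared uniformly random vector in $\F_2^n$ (this is where $O(\log n)$ bits, not $O(n)$, must come in: we cannot afford a fully random $r$, so $r$ should be drawn from a small-bias or $k$-wise independent family, or — more likely for the optimal rate — $r$ should index a random coset/translate from a family of size $\poly(n)$, using a pseudorandom generator against the relevant test). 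The decoder, knowing the channel's error set $S$ of size $T$ and knowing $r$, receives $\tx = c_m + r + e$ with $e\in S$ and must recover $m$. Because $r$ is (pseudo)random and independent of $S$, with high probability over $r$ the translated codewords $\{c_{m'}+r\}$ are ``spread out'' relative to $S$, so that the sets $c_{m'}+r+S$ are pairwise disjoint for all but an $\eps$ fraction of messages; then decoding is just: find the unique $m'$ with $\tx \in c_{m'}+r+S$. The counting inequality $KT \le N/(1-\eps)$ is exactly tight enough to allow this, and the rate loss is the $o(1)$. The main technical work is showing a $\poly(n)$-size family of shifts suffices — i.e.\ derandomizing the union bound over the $\le K\cdot K \cdot T$ bad coincidences — which should follow from a standard pseudorandomness argument since each bad event is a low-complexity test on $r$.

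For part (a), the Hamming scenario, the channel is fully adversarial and sees the codeword, so a mere shift does not help — the channel can adapt $e$ to $\tx$. This is why we must pay $O(n)$ shared random bits: the idea is to use the shared randomness to select the entire code at random from a large family (a ``private code'' in the sense of Shannon/Langberg). I would fix $R = 2^{\Theta(n)}$ codebooks $\{C_\rho\}$, each of rate $1-t/n-o(1)$, indexed by the shared string $\rho$; the encoder uses $C_\rho$, the decoder knows both $\rho$ and the channel graph $G$ (with all right-degrees $\le T$). The claim is that for a random choice of the $R$ codebooks, with high probability, simultaneously for every channel $G$ of noise level $t$, decoding succeeds with error $\le \eps$ on average over messages and over $\rho$. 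The decoder's rule, given $\tx$: among all $(\rho', m')$ consistent with $\rho'$ being the shared string and $\tx$ being a possible output of $C_{\rho'}(m')$ under some degree-$\le T$ graph — but since $\rho$ is known, it's: find $m'$ such that $C_\rho(m')$ is a left-neighbor of $\tx$ in $G$; output it if unique. A codeword $C_\rho(m)$ is ``confusable'' at $\tx$ only if some other $C_\rho(m')$ is also a $G$-left-neighbor of $\tx$; since $\tx$ has $\le T$ left-neighbors, the number of confusable pairs is controlled, and a random codebook makes, for each fixed $G$ and most $\rho$, at most an $\eps$-fraction of messages land on an already-occupied $\tx$. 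The subtlety — and the main obstacle — is the quantifier order: there are doubly-exponentially many channels $G$, far too many for a naive union bound over ``bad for $G$'' events. The standard fix is that what matters is not $G$ itself but its action on the (relatively few) codewords actually in use; one restricts attention to the $\le KR \cdot 2^t$ relevant (codeword, output) incidences, or equivalently observes that a ``bad'' configuration is witnessed by a small set of codewords colliding under some degree bound, and there are only $2^{O(n)}$ such small witnesses. Making this counting precise — choosing the family size $R=2^{O(n)}$ so that the union bound over witnesses beats the failure probability per witness, while keeping the rate at $1-t/n-o(1)$ — is the crux of part (a); the matching $\Omega(n)$ lower bound on shared randomness (claimed for typical $t$ and proved in the lower-bound section) shows this cost is unavoidable.

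In both parts, once the probabilistic existence argument is in place, the stated ``polynomial-time encoding'' follows by taking the code $C$ (resp.\ the family $\{C_\rho\}$) to be an explicit good code such as a Reed--Solomon or expander-based code of the right rate, for which encoding is $\poly(n)$; the decoder is allowed to be inefficient, so the brute-force ``search for the unique consistent message'' rule is acceptable. I expect the oblivious case to be essentially routine given a pseudorandom generator fooling affine/combinatorial tests, and the Hamming case to require the careful witness-counting described above as the genuinely new ingredient.
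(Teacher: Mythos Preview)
Your proposal for part~(b) contains a fatal flaw: an additive shift $r$ does nothing to separate messages. Your decoding rule is ``find the unique $m'$ with $\tx \in c_{m'}+r+S$,'' but with $\tx = c_m+r+e$ this condition reads $c_m-c_{m'}+e \in S$, which is independent of~$r$. So the shared randomness you introduce has no effect whatsoever on whether two messages collide under the error set; you are left with a \emph{fixed} code $C$ that must defeat every set $E$ of size $2^t$, and no such code exists at rate $1-t/n-o(1)$ (the adversary simply takes $E$ to contain $c_{m}-c_{m'}$ for many pairs). The paper's approach is structurally different: the shared randomness selects a random \emph{parity-check matrix} $H_\rho$, the codewords lie in its null space, and the decoder uses syndrome decoding---$H_\rho\tx = H_\rho e$ is a random fingerprint of $e$, from which $e$ is recovered by search in~$E$. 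The reduction to $O(\log n)$ random bits is then achieved not by a generic PRG but by building $H_\rho$ from explicit \emph{linear condensers} (Guruswami--Umans--Vadhan and Ta-Shma--Umans), via an intermediate ``invertible function'' abstraction and a matching property of condenser graphs. This is the technical heart of the paper and is nowhere near routine.

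For part~(a) your instinct to randomize the codebook is correct, but your main worry---a union bound over doubly-exponentially many channels---is a red herring. Read the quantifiers in Definition~\ref{def:resilientHamming}: for every channel there \emph{exists} a decoder, and then for all channel functions and messages the success probability (over $\rho$) is $\ge 1-\epsilon$. So fix the channel graph first; the decoder may depend on it. Now for a fixed message $m$, the codeword $x=\enc_\rho(m)$ is independent of all other $\enc_\rho(m')$, hence so is $\tx=\Ch(x)$; the probability that some $m'\neq m$ lands in the $\le T$ left-neighbours of $\tx$ is at most $KT/N\le\epsilon$ by a single union bound over $m'$. No union bound over channels or channel functions is ever taken. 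The reduction to $2n$ shared bits is then immediate once you notice that only \emph{pairwise} independence of the $\enc_\rho(m)$ was used: take $\enc_\rho(m)=am+b$ over $\F_{2^n}$ with $\rho=(a,b)$. Your witness-counting plan would eventually work but is far more complicated than needed.
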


One would expect the rate of a universal code to be lower than the rate of a code that is optimal  for a specific channel.  However,  the $1 - t/n - o(1)$ upper bound is valid for  every  channel  in a large class of channels in the Hamming scenario, defined by graphs in which the left degrees are not much smaller than the maximum right degree (see Remark~\ref{r:rateub}). This class  includes all channels  in the oblivious scenario.  Therefore, surprisingly, the universal codes in Theorem~\ref{t:maininformal}   have  (asymptotically) optimal  rate  even among  codes that are specifically tailored for  each channel satisfying the above condition on left degrees. 

For both codes in Theorem~\ref{t:maininformal}, the universal encoding function is polynomial-time computable, but decoding depends on the channel and requires exponential time. In some settings we obtain polynomial time decoding.
First, in the oblivious scenario, if $t=O(\log n)$ the corresponding decoding functions run in polynomial time. 
Secondly, using concatenation schemes, we obtain  universal codes with efficient encoding and decoding  for weaker classes of channels in which the channel acts on  short blocks of the codeword, rather than on the entire codeword.  Furthermore, in a relaxed version of the  oblivious scenario, in which the distortion vector is chosen at random and independenly on blocks of the codeword, no shared randomness is required: there exists a universal code with encoding and  decoding functions computable by deterministic polynomial-time algorithms. These results together with the full details of the corresponding models are presented in Section~\ref{s:justesen} and Section~\ref{s:piecewise}.


We prove lower bounds for the amount of shared randomness in both scenarios. When  $t$ is a constant fraction of $n$, which is typical in most applications,  the amount of shared randomness is optimal, among universal codes with optimal rate, according to our precise model for shared randomness.\footnote{
  In this model we assume that all randomness is shared, thus no non-shared randomness is used. We are currently investigating a model that allows the encoder  to use both shared and nonshared randomness. Our results indicate that the codes presented here, also use an optimal amount of shared randomness in this more general model. However, the analysis is more difficult and a trade-off between rate and shared randomness exists. The analysis will be given in an upcoming extended version of this paper. An explicit code without shared randomness and non-optimal rate is given in appendix~\ref{s:nonshared}.
  }
Thus, for $t = \Omega(n)$, the universal codes in Theorem~\ref{t:maininformal} are optimal for both rate and randomness.

Theorem~\ref{t:maininformal} (b) shows that the oblivious scenario is a way to restrict channels to allow universal codes with optimal rate and logarithmic shared randomness.     Another sensible way to restrict channels is to bound the computational power of the channel. Under a common hardness assumption which implies the existence of an appropriate type of pseudo-random generators, we show  that there is a universal code with optimal rate, that uses only $O(\log n)$ randomness and is resilient to  all channels that distort adversarially like  in the Hamming scenario except that they use space bounded by a fixed polynomial (for the exact statement, see Theorem~\ref{t:spacebounded}).

 Note that one can always  remove the shared randomness by letting the decoder try all possible random strings. In this way we obtain a list decodable code in which encoding is still probabilistic but decoding is deterministic and with list size exponential in the randomness of the code (the list has one element for each possible random string). Thus, Theorem~\ref{t:maininformal}  (b) implies a universal list decodable code for the oblivious scenario with a deterministic decoder that produces a list of polynomial size, which, with high probability,  contains the message that was encoded.  The same implication can be derived from Theorem~\ref{t:spacebounded}   for channels that compute in bounded space, as described above.

In general, by simple random coding  one can  easily obtain private codes, but this method uses 
many  random bits. In the proof of Theorem~\ref{t:maininformal} (a) the number of shared random bits is reduced by standard pairwise independent hashing.  The proof of Theorem~\ref{t:maininformal} (b) is more involved and uses some recently established properties of condensers related to bipartite matching.

We next present the full details of our model and state the main results formally.

\subsection{Definitions and results}\label{s:defs}

\begin{itemize}[leftmargin=0pt,itemindent=*,label={--}]
  \item 
  A {\em Hamming channel} from a set $\mcX$ to $\mctX$ is a bipartite graph with left set $\mcX$ and right set $\mctX$. 
  The set $\mcX$ represents the set of codewords that are the input of the channel, and $\mctX$ the 
  distorted outputs returned by the channel. On input $x \in \mcX$ the channel may output $\tx \in \mctX$ if $(x, \tx)$ is an edge of the graph.
    The distortion $T$ of the channel is the maximal right degree. 
    We assume that the left degree of each node is at least 1.
   
  \item 
    Let $\mcX$ be an additive group.
  An {\em oblivious channel} is a subset $E$ of $\mcX$. On input a codeword from $\mcX$, the channel adds a codeword from~$E$.
  The distortion $T$ is the size of $E$.
\end{itemize}

\noindent
{\em Example.} 
Consider a bit flip channel that has $n$-bit strings as input and output, and may flip at most $k$ bits. 
This channel can be represented as a Hamming channel. 
Indeed, we have $\mcX = \mctX = \{0,1\}^n$ and a left node is connected to a right node if its Hamming distance is at most~$k$. 
The distortion $T$ of the channel is equal to the size of a Hamming ball of radius~$k$.
The bit flip channel can also be viewed as an oblivious channel. The sum of two bitstrings is defined by bitwise addition modulo 2, 
and the set $E$ contains all strings of Hamming weight at most~$k$.

  \medskip
An encoding function is a mapping $\enc \colon \mcM \times \mcR \rightarrow \mcX$, where the second argument is used for the shared randomness.
A decoding function is a mapping $\dec \colon \mctX \times \mcR \rightarrow \mcM$. We use the notation~$\enc_\rho(x) = \enc(x,\rho)$ and~$\dec_\rho(x) = \dec(x,\rho)$.
  A {\em channel function} $\Ch$ of a Hamming channel is a mapping from left nodes to right nodes.

\begin{figure}[ht]
\centering
\begin{tikzpicture}[->,>=stealth',shorten >=1pt,scale=0.8,auto,node distance=3cm, transform shape]
 \node[] (1) at (1,4) {$\rho$};
  \node[] (2) at (1,2) {$m$};
  \node[fNode](4) at (3,2) {$\enc$};
  \node[rNode](5) at (5,2) {$\cha$};
  \node[fNode](6) at (7.5,2){$\dec$};
  \node(7)[] at (9.5,2) {$m$};
 \path[every node/.style={font=\sffamily\small}]
    (1) edge node  [left]{}(4)
    (2)    edge node [left]{}(4)
(4) edge [] node [above]{$x$} (5)
      (5) edge   node [above]{$\tx$} (6)
        (6) edge (7);
  \draw (1) -| (6);
\end{tikzpicture}
  \caption{Hamming scenario}
\label{f:hamming}
\end{figure}
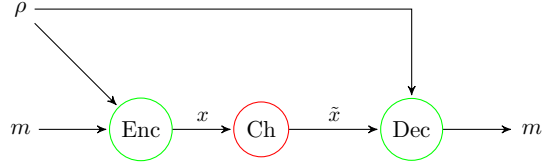


\newcommand{\defReslienceHamming}{
  A private code $\enc \colon \mcM \times \mcR \rightarrow \mcX$ is {\em $(t,\epsilon)$-resilient in the Hamming scenario} 
  if for every $\mctX$ and every Hamming channel from $\mcX$ to $\mctX$  with distortion at most $2^t$, 
  there exists a decoding function $\dec \colon \mctX \times \mcR \rightarrow \mcM$ such that
  for all channel functions $\Ch$ of this channel and all $m \in \mcM$ 
  \[
    \Pr_{\rho \in \mcR}  \left[ \dec_\rho(\Ch(\enc_\rho(m))) = m \right] \;\, \ge\;\, 1-\epsilon.
    \]
  }

  \begin{samepage}
\begin{definition}\label{def:resilientHamming} 
  \defReslienceHamming
\end{definition}
  \end{samepage}
	
\begin{figure}[ht]
\centering
\begin{tikzpicture}[->,>=stealth',shorten >=1pt,scale=0.8,auto,node distance=3cm, transform shape]
 \node[] (1) at (1,4) {$\rho$};
  \node[] (2) at (1,2) {$m$};
  \node[rNode] (3) at (3, 0) {$\cha$};
  \node[fNode](4) at (3,2) {$\enc$};
  \node[nNode](5) at (5,2) {$+$};
  \node[fNode](6) at (7.5,2){$\dec$};
  \node(7)[] at (9.5,2) {$m$};
 \path[every node/.style={font=\sffamily\small}]
       (1) edge node  [left]{}(4)
     (2)    edge node [left]{}(3)
(2)    edge node [left]{}(4)
    (3) edge [] node [below]{$e$} (5)
    (4) edge [] node [above]{$x$} (5)
      (5) edge   node [above]{$\tx=x+e$} (6)
        (6) edge (7);
  \draw (1) -| (6);
\end{tikzpicture}
  \caption{Oblivious scenario}
\label{f:shannon}
\end{figure}
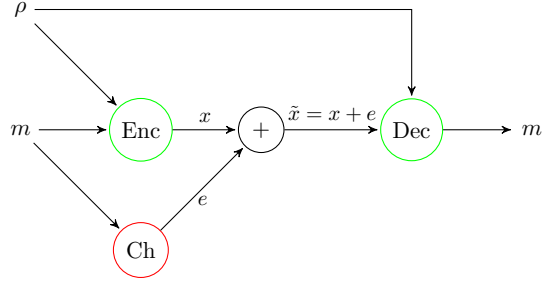

\begin{definition}\label{def:resilientOblivious} 
  Let $\mcX$ be an additive group.
  A private code $\enc \colon \mcM \times \mcR \rightarrow \mcX$ is  {\em $(t,\epsilon)$-resilient in the oblivious scenario} 
  if for every oblivious channel $E \subseteq \mcX$ of size at most $2^t$, there exists a decoding function 
  $\dec \colon \mcX \times \mcR \rightarrow \mcM$ such that for all $m \in \mcM$ and
  $e \in E$
  \begin{equation}\label{eq:oblivious}
     \Pr_{\rho \in \mcR}  \left[ \dec_\rho(\enc_\rho(m)+e) = m \right] \;\, \ge\;\, 1-\epsilon.
  \end{equation}
\end{definition}

We can limit resilience to  a class of channels ${\cal C}$ by replacing in the above definitions ``every Hamming (oblivious) channel'' by ``every Hamming (oblivious) channel in ${\cal C}$''. 
In Section~\ref{s:weakhamming} we analyze an intermediate model, called the  \emph{additive Hamming scenario}, in which the universal code belongs to an additive group and is resilient to all Hamming channels that add an error vector from a fixed set~$E$. 

\bigskip
\noindent
The next two theorems restate the two claims in Theorem~\ref{t:maininformal} with full specification of parameters.

\newcommand{\thmBruno}{
For every $n, t$ and $\epsilon > 0$, 
  there exists a polynomial time computable private code $\enc: \zo^k \times \zo^d \mapping \zo^n$ that is $(t, \epsilon)$-resilient  in the Hamming scenario such that
\begin{itemize}
\item $k \geq n - t - \lceil \log \tfrac 1 \epsilon \rceil$ ,
\item The encoder $\enc$ and the decoder functions $\dec$  share $d = 2n$ random bits.
\end{itemize}
}

\begin{theorem}\label{t:bruno}
  \thmBruno
\end{theorem}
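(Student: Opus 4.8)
The plan is to use a standard random coding argument together with pairwise-independent hashing to control the number of shared random bits. Fix $k = n - t - \lceil \log \tfrac{1}{\epsilon}\rceil$, so that $K \cdot T \le \epsilon \cdot 2^n$ with a bit of room (here $K = 2^k$, $T = 2^t$, $N = 2^n$). The shared randomness $\rho \in \zo^{2n}$ will be used to select a hash function $h_\rho$ from a pairwise-independent family mapping $\zo^k \times (\text{something})$ into $\zo^n$; concretely, identify $\zo^n$ with $\F_{2^n}$ and let $\enc_\rho(m) = a_\rho \cdot \iota(m) + b_\rho$ where $\rho = (a_\rho, b_\rho) \in \F_{2^n}^2$ and $\iota$ is a fixed injection of $\zo^k$ into $\F_{2^n}$. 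This is polynomial-time computable and, crucially, has the property that for any two distinct messages $m \ne m'$, the pair $(\enc_\rho(m), \enc_\rho(m'))$ is uniformly distributed over all pairs of distinct elements of $\zo^n$ as $\rho$ ranges over (most of) $\mcR$.

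The decoder, given the channel (a bipartite graph with right-degree at most $T$) and given $\tx$ and $\rho$, outputs the unique $m$ such that $\enc_\rho(m)$ is a left-neighbor of $\tx$ in the channel graph, if such an $m$ is unique; otherwise it fails. Since every left node has degree at least $1$, the correct message is always a candidate, so the only failure mode is a collision: some other message $m' \ne m$ with $\enc_\rho(m')$ also a neighbor of $\tx = \Ch(\enc_\rho(m))$. The key step is to bound, for a fixed $m$ and a fixed (worst-case) channel and channel function, the probability over $\rho$ that such a collision occurs. By a union bound over the at most $K-1$ other messages and using pairwise independence, for each $m'$ we have $\Pr_\rho[\enc_\rho(m') \in N_{\text{left}}(\Ch(\enc_\rho(m)))]$; conditioning on the value of $\enc_\rho(m)$ (hence on $\tx = \Ch(\enc_\rho(m))$, whose left-neighborhood has size at most $T$), the point $\enc_\rho(m')$ is still essentially uniform on $\zo^n \setminus \{\enc_\rho(m)\}$, so this probability is at most roughly $T / (N-1)$. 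Summing gives total failure probability at most about $K T / N \le \epsilon$, with the small additive slack absorbed by the ceiling in the definition of $k$. (One must be slightly careful that pairwise independence gives the joint distribution of $(\enc_\rho(m), \enc_\rho(m'))$ uniform on ordered pairs of distinct elements, which is exactly what is needed; the affine family above has this property on the event $a_\rho \ne 0$, which fails with probability $2^{-n}$ and contributes negligibly.)

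The main obstacle — and the reason the argument is not completely trivial — is that the decoder is allowed to depend on the channel but the encoder is not, and the channel function $\Ch$ is adversarial and may in principle depend on $\rho$ through the codeword it sees. However, the definition of resilience quantifies "for all channel functions $\Ch$", so it suffices to prove the bound for each fixed $\Ch$; and for a fixed $\Ch$, the event whose probability we are bounding — namely that $\enc_\rho(m')$ lands in the left-neighborhood of $\Ch(\enc_\rho(m))$ — depends on $\rho$ only through the pair $(\enc_\rho(m), \enc_\rho(m'))$, whose distribution is controlled by pairwise independence. So the adversary's knowledge of the codeword does not actually help it here. I would then just verify the parameter bookkeeping: $d = 2n$ shared bits suffice for the affine family over $\F_{2^n}$, encoding is polynomial time (field arithmetic in $\F_{2^n}$), and $k \ge n - t - \lceil \log \tfrac{1}{\epsilon}\rceil$ gives the claimed rate $1 - t/n - o(1)$ for $t = o(n)$ (and the stated exact bound in general). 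Decoding time is not claimed to be efficient, so no further work is needed there — brute-force search over $m$ and over the channel's adjacency suffices.
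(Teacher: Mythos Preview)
Your proposal is correct and follows essentially the same approach as the paper: encode via the affine map $m \mapsto a\cdot\iota(m)+b$ over $\F_{2^n}$, decode by searching for the unique message whose codeword is a left-neighbor of $\tx$, and bound the collision probability by a union bound over $m'\ne m$ using that $(\enc_\rho(m),\enc_\rho(m'))$ is pairwise independent. The only cosmetic difference is that the paper first phrases the argument with fully independent codewords and then observes pairwise independence suffices, whereas you go directly to the pairwise-independent family; also note that for this affine family the pair $(\enc_\rho(m),\enc_\rho(m'))$ is exactly uniform on $\F_{2^n}^2$ (the map $(a,b)\mapsto(ai+b,aj+b)$ is a bijection), so your caveat about $a_\rho=0$ is unnecessary.
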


For the results regarding the oblivious scenario, we view $\zo^n$ as the vector space $(\F_2)^n$ in the natural way. 
\begin{theorem}\label{t:main}
There exist constants $c, c'$  such that for every $n,  \tm, \epsilon > 0$,  there exists a polynomial-time computable private code $\enc: \zo^k \times \zo^d \mapping \zo^n$ that is $(\tm, \epsilon)$-resilient  in the oblivious scenario such that
\begin{itemize}
\item $k \geq n - \tm - c\big(\tfrac{\tm}{\log n} \cdot \log (1/\epsilon) + \log (n/\epsilon)\big)$,  
\item The encoder $\enc$ and the decoder functions $\dec$  share $d \le c'(\log n + \log (1/\epsilon))$ random bits.
\end{itemize}
 \end{theorem}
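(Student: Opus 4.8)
The plan is to build the code from an explicit $\F_2$-linear condenser and to obtain the decoder by a matching argument over the seed space; linearity makes the encoder polynomial-time and turns decoding into a syndrome computation, while the condenser's expansion is what keeps the rate loss at $\tm$ rather than $2\tm$. First I would set up the reduction. Fix an oblivious channel $E\subseteq\zo^n$ with $|E|\le 2^{\tm}$. Since the decoder may depend on $E$ and sees both the received word $\tx=\enc_\rho(m)+e$ and the seed $\rho$, it knows exactly which pairs $(m',e')$ with $e'\in E$ satisfy $\enc_\rho(m')+e'=\tx$, and the true pair $(m,e)$ is one of them; so the whole problem is to choose, simultaneously for all $\rho\in\zo^d$ and all words $p$, a value $\dec_\rho(p)\in\zo^k$ such that for \emph{every fixed} $m$ and $e\in E$ the set of seeds with $\dec_\rho(\enc_\rho(m)+e)\neq m$ has density at most $\epsilon$. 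The delicate point is that confusions are adversarial and seed-dependent: because $\enc$ is public, the channel can pick $E$ so that for a fixed $(m,e)$ the word $\enc_\rho(m)+e$ is also produced by another consistent pair for \emph{every} $\rho$, so one cannot decode uniquely seed-by-seed (insisting on that would only give rate $1-2\tm/n$). One must instead choose the family $(\dec_\rho)_\rho$ globally so that the unavoidable confusions are spread thinly across the $2^d$ seeds; formally this is a Hall-type feasibility question about a bipartite graph between seeds and messages.

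For the code I would take an explicit $\F_2$-linear lossless condenser $C\colon\zo^n\times\zo^d\to\zo^{n-k}$ for min-entropy slightly above $\tm$ and error $\epsilon'\ll\epsilon$, with seed length $d=O(\log n+\log(1/\epsilon))$ and with the entropy-versus-output-length trade-off chosen so that $n-k=\tm+O\!\big(\tfrac{\tm}{\log n}\log(1/\epsilon)+\log(n/\epsilon)\big)$ (a multiplicity-code--based construction, in which the extra $\tfrac{\tm}{\log n}\log(1/\epsilon)$ term is precisely the entropy loss paid to drive the condenser error down to $\epsilon'$; we may assume $\tm$ is small enough that this makes $k$ positive, else there is nothing to prove). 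For each seed $\rho$ the linear map $C(\cdot,\rho)$ has kernel of dimension at least $k$, so I let $\enc_\rho$ be a fixed systematic encoding of $\zo^k$ onto a $k$-dimensional subspace of $\ker C(\cdot,\rho)$, computable in time $\poly(n)$ by Gaussian elimination; this gives the polynomial-time encoder. Linearity makes the decoder's consistency test a syndrome test: $C(\tx,\rho)=C(\enc_\rho(m)+e,\rho)=C(e,\rho)$, so the consistent pairs correspond to error vectors $e'\in E$ with $C(e',\rho)=C(\tx,\rho)$, and two error vectors are confusable at seed $\rho$ only when $C(\cdot,\rho)$ collapses their difference.

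The crux is the matching step. View $C$ as the bipartite graph $\Gamma$ on left set $\zo^n$, right set $\zo^{n-k}$, with left-degree $2^d$; the lossless-condenser property says exactly that every left set of size $\le 2^{\tm+O(1)}$ expands almost perfectly in $\Gamma$. I would then invoke the recently established matching characterization of lossless condensers, applied to the left set $E$ and its translates: for all but an $\epsilon$-fraction of seeds $\rho$ the map $x\mapsto C(x,\rho)$ is injective on all but a negligible part of $E$, and --- this is the part one has to push through --- the negligible parts can be chosen \emph{consistently across seeds}, yielding one decoding family $(\dec_\rho)_\rho$ that recovers $m$ from $\enc_\rho(m)+e$ for every fixed $(m,e)$ on a $(1-\epsilon)$-fraction of seeds. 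The reason to use the condenser's expansion rather than a union bound is precisely the factor two: a union bound would range over the $\le 2^{2\tm}$ difference pairs of $E$ and charge an $\epsilon$-fraction of seeds to each \emph{collision}, forcing $n-k\approx 2\tm$; the matching argument instead charges seeds to each message--error \emph{pair}, which keeps $n-k\approx\tm$.

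Assembling: the encoder is the explicit linear map, hence $\poly(n)$; $d=O(\log n+\log(1/\epsilon))$ is the condenser's seed length; $n-k$ is its output length, giving the stated bound on $k$; and the error is $\le\epsilon$ by the matching step once $\epsilon'$ is a small enough polynomial fraction of $\epsilon$. I expect the matching step to be the main obstacle --- turning the condenser's expansion into a single decoding family that is good for every fixed message--error pair in the presence of adversarial, seed-dependent confusions, i.e.\ proving the Hall-type feasibility and that its cost is only an $\epsilon$ fraction of seeds per pair rather than per collision. (The factor-two saving, the linearity assumption on $C$, and the exact condenser parameters are the places where the argument is most likely to need care.)
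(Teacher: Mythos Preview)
Your architecture is the paper's: take an $\F_2$-linear function $F:\zo^n\times\zo^d\to\zo^{\tm+\Delta}$, encode into the null space of $F(\cdot,\rho)$, and decode by computing the syndrome $F(\tx,\rho)=F(e,\rho)$ and then inverting on the list $E$. You also correctly identify that what is needed is exactly the paper's ``invertible function'' property, and that this is equivalent to an $(\ell,1)$-matching on the bipartite graph of $F$ restricted to~$E$; your diagnosis of why a union bound would cost a factor two in rate is right.

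The gap is the condenser you assume. You posit a single explicit $\F_2$-linear lossless condenser with seed $O(\log n+\log(1/\epsilon))$ \emph{and} output length $\tm+O\big(\tfrac{\tm}{\log n}\log(1/\epsilon)+\log(n/\epsilon)\big)$. No such object is available off the shelf, and your parenthetical (``the extra term is the entropy loss'') already contradicts ``lossless''. The two natural candidates each miss one target: the Guruswami--Umans--Vadhan condenser is lossless with seed $O(\log(n/\epsilon))$, but its output is $(1+\alpha)\tm+O(\log(n/\epsilon))$, so you pay an extra $\alpha\tm$ in rate for any constant $\alpha$; the Ta-Shma--Umans condenser has output $\le\tm$ and seed $O(\log n)$, but entropy loss $e_{\tu}=O\big((\tm/\log n)\log(1/\epsilon)+\log n\big)$, and when you run the condenser-to-$(\ell,1)$-matching construction on it the seed absorbs a $\log(2^{e_{\tu}})$ factor and becomes $\Theta(e_{\tu})$, not $O(\log n)$.

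The paper's point (Theorem~\ref{t:inv}) is that a \emph{two-stage} cascade hits both targets. First apply $C_{\tu}$: its graph admits $((1{-}4\epsilon)2^{d_1},\,2\tm\cdot 2^{e_{\tu}})$-matching up to size $2^{\tm}$, so for a $(1{-}4\epsilon)$-fraction of seeds $\rho_1$ the fiber $\{e'\in E:C_{\tu}(e',\rho_1)=C_{\tu}(e,\rho_1)\}$ has size at most $2^{t_1}$ with $t_1=O(e_{\tu})$. Then apply the $(\ell,1)$-matching graph $G'_{\guv}$ built from GUV with max-entropy parameter only $t_1$ (not $\tm$), which keeps its seed $O(\log(n/\epsilon))$ and its output $O(t_1+\log(n/\epsilon))$. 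The invertible function is the concatenation $F(x,(\rho_1,\rho_2))=C_{\tu}(x,\rho_1)\circ G'_{\guv}(x,\rho_2)$; both components are $\F_2$-linear, so $F$ is. Your outline collapses these two stages into one, and in doing so assumes away precisely the balancing act (short seed \emph{and} overhead $\Delta=O(e_{\tu}+\log(n/\epsilon))$) that is the technical content here.

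One smaller point: once you are in the syndrome picture, the message is quotiented out and the only left set that matters for the matching is $E$ itself; there is no need to speak of ``$E$ and its translates'', and the requirement is per error vector $e$, not per message--error pair.
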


\noindent
Note that if $\log(1/\epsilon) = o(\log n)$, then the rate of the code is $k/n  \geq 1 - \tm/n - o(1)$.

The next code for the oblivious scenario has even better rate (for $\tm$ larger than $\log^4 n$)  but uses more shared random bits.

\begin{theorem}\label{t:codepolylog}
There exist constants $c, c'$  such that for every $n, \tm,  \epsilon > 0$, there exists a polynomial-time computable private code $\enc: \zo^k \times \zo^d \mapping \zo^n$ that is $(\tm, \epsilon)$-resilient  in the oblivious scenario such that
\begin{itemize}
\item $k \geq n - \tm - c(\log^3(n/\epsilon))$,
\item The encoder $\enc$ and the decoder functions $\dec$  share $d \le c'(\log^3 (n /\epsilon))$ random bits.
\end{itemize}
\end{theorem}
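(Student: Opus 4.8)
The plan is to follow the same strategy as in Theorem~\ref{t:main}, replacing the particular condenser used there with one that has a much smaller seed/entropy loss but possibly larger seed length. Recall that the oblivious construction works roughly as follows: pick a good ``condenser-type'' object $C$ that, for every set $E$ of size $2^\tm$ of error vectors, produces a low-entropy-loss condensed version, and then use the shared randomness $\rho$ to select a random ``slice'' of the codeword space so that the shifted copies $\{x + e : e \in E\}$ of the message-encoding set become, with high probability over $\rho$, injectively recoverable. The rate loss $n - k$ is governed by $\tm$ plus the entropy loss of the condenser, and the shared randomness $d$ is governed by the condenser's seed length. So the only thing to do is to plug in a condenser with both parameters $\poly\log(n/\epsilon)$ instead of the $O(\tfrac{\tm}{\log n}\log(1/\epsilon))$ loss / $O(\log n)$ seed trade-off of Theorem~\ref{t:main}.

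First I would state precisely which condenser I use: a lossless (or near-lossless) condenser with seed length $O(\log^{2}(n/\epsilon))$ or $O(\log^{3}(n/\epsilon))$ and entropy loss $O(\log^{2}(n/\epsilon))$, say the Guruswami--Umans--Vadhan condenser or a variant thereof, instantiated for min-entropy $k' = k$ (the message length) with error parameter related to $\epsilon$. Such objects are standard and give exactly the polylogarithmic bounds claimed; I would cite the relevant construction from the literature used elsewhere in the paper (the $\guv$ condenser). Then I would run the identical argument as in the proof of Theorem~\ref{t:main}: set up the bipartite graph whose left vertices are messages and right vertices are possible received words, observe that the channel's adversarial choice of $E$ together with the condenser gives a graph with the right degree/expansion bounds, and invoke the matching/decoding lemma (the same ``recently established properties of condensers related to bipartite matching'' referenced in the introduction) to conclude that a random seed $\rho$ yields a decodable code with probability $\ge 1-\epsilon$.

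The bookkeeping step is to verify that, with this condenser, the rate loss is $n - k = \tm + O(\log^{3}(n/\epsilon))$ and the shared randomness is $d = O(\log^{3}(n/\epsilon))$, which amounts to substituting the condenser's parameters into the expressions derived in the proof of Theorem~\ref{t:main} and checking that the $\tm$-dependent error term there is absorbed into the polylog term here. I would also note the regime $\tm \gtrsim \log^{4} n$ mentioned in the text, in which this bound genuinely improves on Theorem~\ref{t:main}. Finally, polynomial-time encoding is inherited for free: the encoder only needs to evaluate the condenser (explicit, polynomial-time) and apply a linear map over $\F_2$, exactly as in Theorem~\ref{t:main}.

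The main obstacle I anticipate is not conceptual but a matter of matching parameters: the known explicit condensers with polylogarithmic seed typically have a trade-off between seed length, entropy loss, and the output length / success parameter, and one must choose the instantiation so that \emph{both} the seed and the entropy loss come out $O(\log^{3}(n/\epsilon))$ simultaneously while keeping the output length and error compatible with the matching lemma. Getting a clean statement may require either a slightly non-standard setting of the condenser's parameters or an extra composition step (condensing in two stages), and the error analysis of the matching argument must be re-checked to ensure the $\epsilon$ budget is respected after this substitution.
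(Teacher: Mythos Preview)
Your overall framework is correct and matches the paper: replace the condenser used in the proof of Theorem~\ref{t:main}, build a linear invertible function via the matching lemma (Theorem~\ref{t:condensermatch1}), and plug it into Proposition~\ref{p:sh2}. However, the specific condenser you name will not give the claimed parameters, and this is not just a bookkeeping issue.

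The quantity that controls the rate loss $\Delta$ is not the entropy loss of the condenser but the \emph{output length} $m$ of the resulting graph relative to~$\tm$. The Guruswami--Umans--Vadhan condenser (Theorem~\ref{t:guv}) is indeed lossless and has logarithmic seed, but its output length is $m \le (1+\alpha)\tm + 2d$, so after applying Theorem~\ref{t:condensermatch1} the overhead is $\Delta \approx \alpha\,\tm + O(\log(n/\epsilon))$, which is linear in~$\tm$. Shrinking $\alpha$ to make $\alpha\,\tm$ polylogarithmic forces the seed $d = (1+1/\alpha)(\log n + \log \tm + \log(1/\epsilon))$ to grow like $\tm/\mathrm{polylog}$, so you cannot get both $d$ and $\Delta$ polylogarithmic from GUV. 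A two-stage composition with the Ta-Shma--Umans condenser does not help either: that is exactly what Theorem~\ref{t:inv} does, and it yields the $O(\tm/\log n \cdot \log(1/\epsilon))$ term of Theorem~\ref{t:main}, not a polylogarithmic one.

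What the paper actually does is switch to the Raz--Reingold--Vadhan extractor (Theorem~\ref{t:rrv}), which has seed $d = O(\log^3 n \cdot \log^2(1/\epsilon))$, output length $\tm - O(d)$, and entropy loss $O(d)$. Because the output length is already within $O(d)$ of $\tm$, the graph $G'_{\rrv}$ from Theorem~\ref{t:condensermatch1} has both left-degree exponent and right-side exponent equal to $\tm + O(\mathrm{polylog}(n/\epsilon))$, which is exactly what is needed. A second point you do not address explicitly is that the condenser must be \emph{linear} over $\F_2$ for the syndrome-decoding reduction in Proposition~\ref{p:sh2} to apply; RRV is linear (see Section~\ref{s:linearrrv}), and this is essential, not incidental.
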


\begin{remark}\label{r:complexity}
  \textit{Given $E$ as a list, the decoding algorithms in Theorems~\ref{t:main} and \ref{t:codepolylog} run in time polynomial in the size of $E$ and space polynomial in~$n$.
  Similarly, in Theorem~\ref{t:bruno}, the decoding algorithm runs in time polynomial in the time needed to enumerate all left neighbors of a right node of the channel. Given oracle access to the channel, 
  it runs in space polynomial in~$n$.
  }
\end{remark}

\medskip

\begin{remark}\label{r:linear}
  The codes of Theorems~\ref{t:main} and~\ref{t:codepolylog} are \emph{linear} for fixed randomness, i.e., $\enc(x+y, \rho) = \enc(x, \rho) + \enc(y, \rho)$ over $\F_2$. 
  The code in Theorem~\ref{t:bruno} is affine. 
\end{remark}

\if01
The next code for the oblivious scenario has even better rate (for $\tm$ larger than $\log^3 n$)  but uses more shared random bits.

\begin{theorem}\label{t:codepolylog}\marius{[changed the statement because of the better invertible function]}
There exist constants $c, c'$   such that for every $n, \epsilon > 0, \tm$ satisfying $n - \tm - c(\log \tm  \cdot \log (n/\epsilon)) > 0$, there exists a polynomial-time computable private code $\enc: \zo^k \times \zo^d \mapping \zo^n$ that is $(\tm, \epsilon)$-resilient  in the oblivious scenario such that
\begin{itemize}
\item $k \geq n - \tm - c(\log \tm \cdot \log (n/\epsilon))$,
\item The encoder $\enc$ and the decoder functions $\dec$  share $d \le c'(\log \tm \cdot \log (n /\epsilon))$ random bits.
\end{itemize}
For every oblivious channel $E$, there exists a  corresponding decoding function running  in time polynomial in the time it takes to enumerate $E$.
  Moreover, given oracle access to $E$, there exists a corresponding decoding function running  in space  polynomial in $n$.
\end{theorem}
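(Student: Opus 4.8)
The plan is to build the code as the kernel of an explicit \emph{linear lossless condenser} and to reduce resilience, one oblivious channel at a time, to a bipartite matching problem that the condenser's expansion solves automatically. For each seed $\rho\in\zo^d$ I would choose an $\F_2$-linear map $H_\rho\colon\zo^n\mapping\zo^b$ of full rank $b$ so that $(w,\rho)\mapsto(\rho,H_\rho w)$ is a $(\tm,\epsilon)$-lossless condenser, with $d=O(\log^3(n/\epsilon))$ and $b=\tm+O(\log^3(n/\epsilon))$; then put $k=n-b$, let $\enc_\rho\colon\zo^k\mapping\zo^n$ be a fixed polynomial-time linear parametrization of $\ker H_\rho$, and observe that this encoder is linear for fixed $\rho$ (matching Remark~\ref{r:linear}), polynomial-time computable, and has rate loss $n-k=b$, which is exactly the rate claimed in Theorem~\ref{t:codepolylog}.

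Now fix an oblivious channel $E$ with $|E|\le 2^{\tm}$. On a received word $y=\enc_\rho(m)+e$ the decoder can compute the syndrome $H_\rho y=H_\rho e$, which confines $e$ to the set $\{e'\in E:H_\rho e'=H_\rho e\}$ of errors of $E$ lying in the same coset of $\ker H_\rho$; so it suffices to use a decoder of the following form: fix, for each pair $(\rho,s)$ realized by some $e'\in E$, a representative $e^{\ast}_\rho(s)\in E$ with $H_\rho e^{\ast}_\rho(s)=s$, and output $\enc_\rho^{-1}(y-e^{\ast}_\rho(H_\rho y))$. A one-line computation (using $\enc_\rho(m)\in\ker H_\rho$ and full rank of $H_\rho$) shows this output equals $m$ exactly when $e^{\ast}_\rho(H_\rho e)=e$, so I only need to pick the representatives so that every $e\in E$ is its own representative for at least a $(1-\epsilon)$-fraction of the seeds. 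I would read this as a matching: in the bipartite graph with left side $E$, right side the pairs $(\rho,H_\rho e)$, and an edge $e\sim(\rho,H_\rho e)$ for every $\rho$, such a choice is an assignment of each right vertex to an incident left vertex that saturates every left vertex to degree $(1-\epsilon)2^d$, and—blowing each left vertex up into $(1-\epsilon)2^d$ copies—Hall's theorem makes this feasible once $|N(F)|\ge(1-\epsilon)2^d|F|$ for all $F\subseteq E$. But $|N(F)|=\sum_\rho|\{H_\rho w:w\in F\}|$ is precisely the neighbourhood size of $F$ in the condenser graph, and $|F|\le 2^{\tm}$, so this inequality is just the lossless-expansion guarantee. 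Hence $\Pr_\rho[\dec_\rho(\enc_\rho(m)+e)=m]\ge 1-\epsilon$ for every $m$ and every $e\in E$, i.e.\ $(\tm,\epsilon)$-resilience in the sense of Definition~\ref{def:resilientOblivious}; the decoder inspects only $E$ and solves one bipartite matching, and should fit the resource bounds of Remark~\ref{r:complexity} (with a careful, possibly greedy, implementation of the matching).

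The one genuinely hard ingredient is the explicit \emph{linear} lossless condenser with seed length and additive entropy loss (hence the overhead of $b$ beyond $\tm$) both $O(\log^3(n/\epsilon))$, and this is where I expect the bulk of the work and where the \guv{}-style machinery and ``the better invertible function'' enter. Linearity in the source is forced on us by syndrome decoding, so the non-algebraic lossless expanders with optimal additive loss are unavailable, while an off-the-shelf \guv{} condenser, though linear in the source, only yields a multiplicative overhead $(1+\alpha)\tm+O(\tfrac1\alpha\log(n/\epsilon))$, too large unless the seed is allowed to grow to order $\tm$. The plan is therefore to assemble the condenser from \guv{}-type pieces (via a block-source/merger construction—presumably the same tool behind Theorem~\ref{t:main}, but tuned here to buy the $\epsilon$-robust rate loss $\tm+O(\log^3(n/\epsilon))$ at the price of a larger, polylogarithmic seed), and to verify the routine points that $H_\rho$ may be taken full rank and that $\enc_\rho$ and the matching are computable within the stated resources. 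The main obstacle, then, is not the coding reduction—which is clean once one sees the matching—but driving the two polylogarithmic parameters of a \emph{linear} lossless condenser down simultaneously.
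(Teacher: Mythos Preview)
Your reduction is exactly the paper's: encode into the kernel of a seed-indexed linear map and decode by syndromes, so that resilience against a fixed oblivious channel $E$ reduces to inverting $e\mapsto H_\rho e$ on the suspect list~$E$. This is precisely Proposition~\ref{p:sh2}, and your Hall-theorem matching argument is a direct rederivation of the $((1-O(\epsilon))D,1)$-matching guarantee that the paper packages as Theorem~\ref{t:condensermatch1} (which starts from an arbitrary condenser, appends a short hash to force the right-degree down to~$1$, and then invokes the same Hall-type reasoning). So the coding layer and the matching layer are the same in both proofs.

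Where you diverge is in the choice of condenser, and here your guess is off. The paper does \emph{not} build the polylog-seed linear lossless condenser from \guv{}-type pieces via a block-source/merger composition; it simply takes the Raz--Reingold--Vadhan extractor $C_{\rrv}$ of Theorem~\ref{t:rrv}, which is already linear in the source, has seed $d=O(\log^3 n\cdot\log^2(1/\epsilon))$ and entropy loss $O(d)$, and feeds it once through Theorem~\ref{t:condensermatch1}. That single step converts $C_{\rrv}$ into a linear graph with $((1-5\epsilon)2^{d'},1)$-matching up to size $2^{\tm}$ and output length $\tm+O(\log^3(n/\epsilon))$, i.e.\ directly into the linear invertible function you are looking for. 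No merger, no block-source analysis, no recursion. (The \guv{} and \tu{} condensers are used for Theorem~\ref{t:main}, where the seed must be $O(\log n)$; here the polylog seed budget lets one reach for RRV instead, whose much smaller entropy loss is what buys the sharper overhead over~$\tm$.) Your plan would eventually work, but the ``bulk of the work'' you anticipate is bypassed by citing RRV.

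One minor caution: the parameters you write down, $d,b-\tm=O(\log^3(n/\epsilon))$, match the version of Theorem~\ref{t:codepolylog} actually proved in Section~\ref{s:step3}, not the $O(\log \tm\cdot\log(n/\epsilon))$ bound in the displayed statement; the latter would require a still better linear invertible function than RRV provides.
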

\fi

\subsection{Related works and comparison with our results}\label{s:related}

The setting of our results  has  two distinctive features: there is no restriction on the type of channel distortion, and the codes we construct  are universal, meaning that the encoder does not know the type of channel he has to cope with.

Channels with general distortion capabilities have been studied starting with the paper of Shannon~\cite{sha:j:communication} that has initiated Information Theory, and which contains  one of the most basic results of this theory, the Channel Coding theorem. In~\cite{sha:j:communication}, a channel is given by probability mass functions $p(y \mid x)$ (one such function for each symbol $x$ in a given finite alphabet), with the interpretation that when $x$ is transmitted, $y$ (also a symbol from a finite alphabet)  is received with probability $p(y \mid x)$. In Shannon's paper, the channel is memoryless: when the  $n$-symbol string $x_1 x_2 \ldots x_n$ is  transmitted, the string  $y_1 y_2 \ldots y_n$ is received with probability $\prod_{i=1}^n  p(y_i \mid x_i)$.   The Channel Coding theorem determines the maximum encoding rate for which decoding is possible with error probability converging to $0$ as $n$ grows. Csisz\'{a}r and  K\"{o}rner~\cite[Theorem 10.8]{csi-kor:b:inftheory} show with a non-explicit construction the existence of an encoder that does not know the memoryless channel.  Verd\`{u} and Han~\cite{ver-han:j:channel} prove a Channel Coding theorem for channels that are not required to be  memoryless (in their model $p(y \mid x)$ is defined for $x$  and $y$ being blocks of $n$ symbols).  We note that to achieve maximum rate, the encoding function in~\cite{ver-han:j:channel} knows the values $p(y \mid x)$, and therefore  it is not universal. 

General channels have also been studied in Zero-Error Information Theory, a subfield in which the goal is that encoding/decoding  have to succeed for \emph{all} transmitted messages.  A channel is given by the set of pairs $S = \{(x,y) \mid p(y \mid x) > 0\}$. $S$ can be viewed as the set of edges of a bipartite graph, with the same interpretation as in our definition for the Hamming scenario: when a left node $x$ is transmitted, the receiver gets one of $x$'s neighbors, chosen  by the channel. One can retain just the graph (ignoring the conditions $p(y \mid x) > 0$, so that the channel behaves adversarially), and obtain a pure combinatorial framework.  Two left nodes $x_1, x_2$ are \emph{separated} if they have no common neighbor, and encoding amounts essentially to finding a set of strings that are pairwise separated, so that they form the codewords of a code. This model is very general, but most results assume that the bipartite graph has certain properties, see the survey paper~\cite{ko-or:j:zinf}.  To the best of our knowledge, all the results assume that the encoding function knows the bipartite graph, and thus it is not universal. The settings in Zero-Error Information Theory and our study have some similar  features:  besides modeling a channel by a bipartite graph,  both of them do not assume any  stochastic process and, furthermore, both of them require encoding/decoding to succeed for all messages (in our setting the success is with high probability over the shared random bits). 

Guruswami and Smith~\cite{guruswami2016optimal} study channels in the oblivious scenario (they call them oblivious channels or additive channels) and in the Hamming scenario, similar to our definitions, except that the channel may only add noise vectors of Hamming weight at most $t$, while in our setting, we may add noise vectors from an arbitrary but fixed set $E$ (of the same size as a Hamming ball of radius~$t$ and this set is only known to the decoder).
In their setting, the encoder is probabilistic and the decoder is deterministic. They obtain codes in the oblivious scenario with polynomial-time encoding and decoding and optimal rate. In our results, the encoder and the decoder share randomness and the decoder is not efficient, but the codes are universal  and are resilient to a more general type of noise, because the set $E$ of noise vectors may contain vectors of any Hamming weight.

\medskip
The concept of a universal code in the Hamming scenario is directly inspired from the universal compressor in~\cite{bau-zim:t:univcompression}.  There, a decompressor $\ttd$ is a (deterministic) partial function mapping strings to strings.
For a string $x$, the Kolmogorov complexity $C_\ttd(x)$ is the length of a shortest string $p$ such that $\ttd(p) = x$.  The probabilistic compression algorithms have a target length~$\ell$ and a target error probability~$\epsilon$ as extra inputs.
More precisely, a compressor $\ttc$  maps every triple (error probability $\epsilon$, length $\ell$, string $x$) 
to a string $\ttc_{\epsilon,\ell}(x)$ of length~$\ell$, representing the compressed version of~$x$.
Such a compressor is {\em universal with overhead $\Delta$} 
if for every decompressor $\ttd$ there exists another decompressor $\ttd'$ such that 
for all triples $(\epsilon, \ell, x)$ with $\ell \ge C_\ttd(x) + \Delta$, we have $\ttd'(\ttc_{\epsilon,\ell}(x)) = x$ with probability~$1-\epsilon$.

It is shown in~\cite{bau-zim:t:univcompression}, that there exists a universal compressor computable in polynomial time  and having polylogarithmic overhead $\Delta$. In other words, for every compressor/decompressor pair $(\ttc, \ttd)$, no matter how slow $\ttc$ is, or even if $\ttc$ is not computable, the universal compressor produces in polynomial time codes that are  almost as short as those of $\ttc$ (the difference in length is the polylogarithmic overhead). The cost is that decompression from such codes is slower. 

The universal compressor also provides an optimal solution to the so-called \emph{document exchange problem}.\footnote{
  This problem is also called \emph{information reconciliation}. 
  In the Information Theory literature it is typically called  \emph{compression with side information at the receiver} or \emph{asymmetric Slepian-Wolf coding}.
} In this problem, Alice holds $x$, the updated version of a file, and Bob holds $y$, an obsolete version of the file. 
Using the universal compressor, Alice can compute in polynomial time a string $q$ of length $t$ which she sends to Bob, and if $t \geq C_\ttd(x \mid y) +\Delta$ (for some decompressor $\ttd$), then Bob can compute $x$ from $y$ and $q$. What is remarkable is that Alice does not know $y$. Moreover, she does not know $\ttd$.  The connection to our setting comes from the fact that a decompressor $\ttd$ is equivalent to a bipartite graph as in our definitions, and the condition $C_\ttd(x \mid y) < t$ is the same as saying that  $x$ is the left neighbor of the right node $y$, which has degree less than $2^t$.

\medskip
As we have already mentioned, the proof of Theorem~\ref{t:bruno} for the Hamming scenario uses random coding and the well-known technique of pairwise-independent hashing to reduce the number of shared random bits from exponential to linear in $n$.  
 Using a pseudo-random generator and a hardness assumption, we can further reduce this to $O(\log n)$ for channels that are computable within some space bound. More precisely, for each polynomial we obtain a code that is resilient to all channels that can be computed with space at most this polynomial. The hardness assumption is that there exists a set computable in E = DTIME($2^{O(n)}$) that is not solvable in subexponential space.  
	The same or similar  hardness assumptions and derandomization arguments have been used before in coding and compression~\cite{tre-vad:c:psamplextractor, che-sho-wig:c:codes, afps:c:lowdepthwit, vin-zim:j:compression}.

The rest of this section regards the proofs of Theorem~\ref{t:main} and Theorem~\ref{t:codepolylog} for the oblivious scenario, which use more advanced techniques to reduce the number of shared random bits to logarithmic in $n$ (respectively, polylogarithmic).
These proofs  are based on  a  similarity that exists  between the document exchange problem and channel coding. In both problems, the receiver needs to reconstruct $x$ from  $y$, which  is close to $x$ in the sense that $C_\ttd(x \mid y) < t$, or, in this paper, $x$ is one of the at most $2^t$ neighbors of $y$ in the bipartite graph that represents the channel (this holds for the Hamming scenario; in the oblivious scenario, a similar ``closeness" relation exists). The difference is that in the document exchange problem, the receiver holds $y$ before transmission, while in channel coding, $y$ is received via transmission and is the channel-distorted version of $x$.  

The connection between the two problems has been exploited  in several papers  starting with the original proof of the Slepian-Wolf theorem~\cite{sle-wol:j:distribcompression}, which solves the document exchange problem  using codes obtained via the standard technique in the Channel Coding Theorem. Wyner~\cite{wyn:j:shannon} gives an alternative proof using linear error correcting codes and syndromes, and there are other papers that have used this idea~\cite{orl:j:intcomp, geh-dra:j:sw, chu-rom:j:sw}. Our approach is similar but works in the other direction: we take linear codes obtained via the method from~\cite{bau-zim:t:univcompression}  for the document exchange problem and use them for channel coding.  

\medskip
The technique used in~\cite{bau-zim:t:univcompression} is based on condensers and is related to previous solutions for several versions of the document exchange problem which used  a stronger tool, namely  extractors~\cite{bfl:j:boundedkolmogorov,muc:j:condcomp,mus-rom-she:j:muchnik,bmvz:j:shortlist,bau-zim:c:linlist,zim:c:kolmslepianwolf}. 
We remark that all  these previous papers do not require linear codes, which are crucial  for the method in this paper.  

It is common to first obtain non-explicit objects using the probabilistic method and then to attempt explicit constructions. In our case, however,  it is not clear how to show the existence of linear extractors with the probabilistic method.  Instead of extractors, we use condensers, and fortunately, a random linear function is a condenser. Moreover, the explicit condensers obtained by Guruswami, Umans, and Vadhan~\cite{guv:j:extractor}, Ta-Shma and Umans~\cite{tas-uma:c:extractor}, and Raz, Reingold and Vadhan~\cite{rareva:j:extractor} (this one is actually an extractor) happen to be linear. 


There are well known similarities between extractors (or their poor relatives, condensers) and error correcting codes (see the discussion in~\cite[Chapt. 8]{vad:b:pseudorand}). In fact the condensers that we use are based on Reed-Solomon codes and Parvaresh-Vardy codes. Closer to our technique is the work of Cheraghchi~\cite{che:c:codescondenser}, who, like us,  uses linear condensers,  linear extractors and the method of syndrome decoding. He obtains codes for the class of  binary symmetric channels, which are channels  that distort by  adding  a \emph{random} noise vector from a given set.  In contrast, our codes for the oblivious scenario defeat channels that add noise \emph{adversarially}, and, moreover, their rates are close to optimal within an additive $o(1)$ term (see Theorem~\ref{t:maininformal}, (b)). 
\if01
 The central pieces in our proof are  the notion of an \emph{invertible function} and the construction of such functions from condensers. These technical innovations (introduced  by us in~\cite{bau-zim:t:univcompression} and further elaborated in Section~\ref{s:step1} and Section~\ref{s:step2}) have allowed us to find a new aspect in the fertile relation between extractors/condensers  and codes. Invertible functions are equivalent to \emph{lossless expanders}, which have numerous applications. Our construction in Theorem~\ref{t:inv} implies, as far as we know,  the first lossless expander with logarithmic seed   and sublinear overhead.  Moreover,  the underlying graphs of this invertible function and of the one in~\cite{bau-zim:t:univcompression} support \emph{online matching}, a property which previous lossless expanders are not known to have. These aspects of invertible functions are discussed in detail in Section~\ref{s:invexp}. We believe that these results are of independent interest. 
\fi

\section{Universal codes for the oblivious scenario}

In this section we first prove Theorem~\ref{t:main} and Theorem~\ref{t:codepolylog}.  We also show that in case the noise vector is chosen randomly and independently on  blocks of the codeword there is a universal code with polynomial-time deterministic encoding and decoding.

\subsection{Proof overview}

The basic idea of our constructions  is to take the code to be a linear subspace of $(\F_2)^n$ picked at random from a class of subspaces. More precisely,  the codewords  belong to the null space of a random linear function $H$, \ie, $Hx= 0$ for all codewords $x$, where $H$  is chosen at random from a certain set of matrices $\cal H$.  The encoder and the decoder share $H$. The decoder receives the noisy $\tx = x+e$,  and, since $H\tx = H(x+e) = Hx+ He = He$,  he knows $He$, which we view as a random fingerprint of $e$ (also called the syndrome of $e$ in the terminology of linear codes). If ${\cal H}$ has certain properties, this allows him to find $e$, assuming that $e$ is within the tolerated noise level.
The next result implements this idea in a simple  way by taking ${\cal H}$ to consist of all matrices of appropriate size. It  has a short proof and produces a  universal code for the oblivious scenario with close-to-optimal  rate for large $n$. It has the disadvantage that the number of shared random bits is more than linear in $n$.

\begin{proposition}
\label{p:sh1}
 For every $n, \tk, \epsilon >0$ such that $\tk+\log(1/\epsilon) < n$,   there exists a private code that is  $(\tk, \epsilon)$-resilient in the oblivious scenario, with rate $1-\tk/n- \delta_n$, where $\delta_n = \log(1/\epsilon)/n$.  

The encoder and the decoder share $(\tk+\log(1/\epsilon)) n$ random bits.
\end{proposition}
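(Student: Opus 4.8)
The plan is to instantiate the ``codewords $=$ kernel of a random linear map, decode by syndrome'' strategy from the proof overview with $\mathcal H$ taken to be \emph{all} matrices of a suitable size. Put $\ell=\lceil t+\log(1/\epsilon)\rceil$ and $k=n-\ell$, let $\mcR$ be the set of $\ell\times n$ matrices over $\F_2$ with the uniform distribution, and let the shared randomness be $\rho=H\in\mcR$; this costs $\ell n\le (t+\log(1/\epsilon)+1)n$ shared random bits, matching the claimed $(t+\log(1/\epsilon))n$ up to the rounding term. For each fixed $H$, the kernel $\ker H\subseteq\F_2^n$ has dimension at least $k$, so one can select, in a fixed and efficiently computable way (say from the reduced row echelon form of $H$), a $k$-dimensional subspace $V_H\subseteq\ker H$ together with a basis $v_1,\dots,v_k$, and define $\enc_\rho(m)=\enc_H(m)=\sum_{i=1}^k m_i v_i$ for $m=(m_1,\dots,m_k)\in\zo^k$. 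For each fixed $H$ this map is linear, injective, polynomial-time computable, and has image inside $\ker H$; hence the code has $2^k$ messages and rate $k/n=1-t/n-\lceil\log(1/\epsilon)\rceil/n$, which is the claimed $1-t/n-\delta_n$ modulo rounding.

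Given an oblivious channel $E\subseteq\F_2^n$ with $|E|\le 2^t$, the decoder is allowed to depend on $E$. On inputs $\tx\in\F_2^n$ and $\rho=H$, it computes the syndrome $s=H\tx$, scans $E$ in a fixed order, takes the first $e'\in E$ with $He'=s$ and $\tx-e'\in V_H$, and outputs $\enc_H^{-1}(\tx-e')$ (and, say, $0^k$ if no such $e'$ exists). If message $m$ is sent with error vector $e\in E$ added, then $\tx=\enc_H(m)+e$ with $\enc_H(m)\in\ker H$, so $He=H\tx=s$ and $\tx-e=\enc_H(m)\in V_H$: the true $e$ is always among the scanned candidates, and the decoder returns $m$ \emph{unless} some $e'\in E$ with $e'\ne e$ precedes $e$ in the scan and is also a candidate, which forces $H(e-e')=0$ with $e-e'\ne 0$. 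So the sole failure mode is a syndrome collision inside $E$. Isolating exactly this, and checking that the decoder is a genuine function of $(\tx,H)$ with $E$ hard-wired, is the only part of the argument that needs any care; there is no real obstacle.

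It then remains to bound, for each fixed $m$ and each fixed $e\in E$, the probability over $H\in\mcR$ of the event $B=\{\exists\, e'\in E\setminus\{e\}:\ H(e-e')=0\}$. For any fixed nonzero $v\in\F_2^n$ the rows of $H$ are independent and uniform, so each row is orthogonal to $v$ with probability $1/2$, giving $\Pr_H[Hv=0]=2^{-\ell}$. A union bound over the at most $|E|-1\le 2^t-1$ vectors of the form $v=e-e'$ yields $\Pr_H[B]\le(2^t-1)2^{-\ell}<2^{t-\ell}\le\epsilon$ by the choice of $\ell$ (and if $t+\log(1/\epsilon)$ is an integer one may take $\ell=t+\log(1/\epsilon)$ directly, since then $(2^t-1)2^{-\ell}=\epsilon(1-2^{-t})<\epsilon$). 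Since this holds for every message $m$ and every $e\in E$, the private code $\enc$ is $(t,\epsilon)$-resilient in the oblivious scenario in the sense of Definition~\ref{def:resilientOblivious}, and the parameter bounds established above complete the proof.
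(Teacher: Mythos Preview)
Your proof is correct and follows essentially the same approach as the paper: share a uniformly random $\ell\times n$ matrix $H$, encode into its kernel, and decode by computing the syndrome $H\tx=He$ and searching $E$ for the matching error vector, with the failure probability bounded by a union bound on collisions $H(e-e')=0$. The only differences are cosmetic: you are more explicit about rounding and about selecting a fixed $k$-dimensional subspace $V_H\subseteq\ker H$, and you add the harmless extra check $\tx-e'\in V_H$ in the decoder, but the core argument and the parameter calculations match the paper's proof exactly.
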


\begin{proof}
The encoder and the decoder share a random \emph{linear} function $H: \zo^n \mapping \zo^{\tk+ \log (1/\epsilon)}$.

\noindent
Since $H$ has  rank at most $\tk+\log(1/\epsilon)$, the null space of $H$ has dimension at least $\kt = n- (\tk+\log(1/\epsilon))$.  The encoder  $\enc$ maps every message $m \in \zo^{\kt}$ into the $m$-th element of the null space of $H$ (for details, see Remark~\ref{r:complexity1}). 

Consider now an oblivious channel $E$ of size at most $2^t$ and a message $m$. Let $x = \enc(m, H)$ be the codeword for $m$, and let $\tx = x+ e$, where $e \in E$ is the noise added by a channel. 
Observe that
\begin{equation}\label{e:syndrome}
H \tx =  H(x+e) =  Hx + He = He.
\end{equation}
The decoder $\dec$ works as follows. On input $\tx$ and $H$,  he first computes $p = H \tx$. He knows that $He = p$ (by~\eqref{e:syndrome}), and  he also knows that $e$ belongs to $E$.  For each $e_1 \in (\F_2)^n$ different from $e$, the probability over $H$ that $He = He_1$ is $\epsilon2^{-\tk}$. By the union bound, with probability $1-\epsilon$, there is only one element $e'$ in $E$ such that $H e' = p$, namely $e$. Consequently, $\dec$ can find $e$ with probability $1-\epsilon$, by doing an  exhaustive search.  Next he finds $x = \tx+e$, and finally from $x$ he finds $m$.  

The rate of the code  is $k/n  = 1 - t/n -\log(1/\epsilon)/n$.
\end{proof}
\begin{remark}\label{r:complexity1}
The encoder function $\enc$ in Proposition~\ref{p:sh1} can be computed in time polynomial in $n$ as follows.  First we compute $k$ independent vectors $v_1, \ldots, v_k$ in the null space of $H$ by finding   $k$ solutions of the equation  $H x = 0$ with  $v_i$  having in the last  $k$  coordinates the values  $(0, \ldots, 0,1,0 \ldots 0)$  (the single $1$ is in position $i$). Next, we form the $k$-by-$n$ matrix $G$ having rows $v_1, \ldots, v_k$ and finally $\enc(m, H) = m G$.

On the other hand, the computation of the decoder function  $\dec$ is slow, because it requires the enumeration of all the elements in $E$.
\end{remark}
\medskip

The codes in Theorem~\ref{t:main} and Theorem~\ref{t:codepolylog}   are constructed using pseudo-randomness tools to reduce the space from which $H$ is selected and consequently reduce the number of shared random bits to logarithmic in $n$ (respectively, polylogarithmic in $n$).
The construction of the codes in these two theorems is done in  two steps:
\smallskip

In \emph{Step 1}, we show that a linear \emph{invertible function} (a concept introduced in~\cite{bau-zim:t:univcompression})  can be converted into a universal private resilient code. Step 1 is presented in Section~\ref{s:step1}.
\smallskip

In \emph{Step 2}, we show how  \emph{condensers} (a type of functions that have been studied in the theory of pseudorandomness)  can be used to construct invertible functions. This step is based on the technique in~\cite{bau-zim:t:univcompression} and is presented in Section~\ref{s:step2}.
\smallskip

Theorem~\ref{t:main} and Theorem~\ref{t:codepolylog} are obtained  by taking  condensers built by Guruswami, Umans, and Vadhan~\cite{guv:j:extractor}, Ta-Shma and Umans~\cite{tas-uma:c:extractor} and Raz, Reingold, and Vadhan~\cite{rareva:j:extractor}, and using Step 2 to obtain invertible functions, followed by Step 1, to obtain the codes. The details are presented in Section~\ref{s:step3}.


\subsection{Construction of private universal  codes in the oblivious scenario from linear invertible functions}\label{s:step1}

A $(t, \epsilon)$-invertible function is a probabilistic function that on input $x$ produces a random fingerprint of $x$. The invertibility property requires that there exists a deterministic algorithm that on input a random fingerprint of $x$ and a list $S$, the ``list of suspects,"  of length at most $2^t$ that contains $x$, with probability $1-\epsilon$ correctly identifies $x$ among the suspects. 
To be useful in the construction of codes, we need the invertible function to be \emph{linear} for any fixed value of randomness. Also, in order to obtain codes with good rates, we want the length of the fingerprint to be $t+\Delta$, for small $\Delta$.

\begin{definition}
\begin{enumerate}
\item A function $F: \zo^n \times \zo^d \mapping \zo^{t+\Delta}$ is $(t, \epsilon)$-invertible if there exists a partial  function $g$ 
  mapping a set $S$ of $n$-bit strings and a $(t+\Delta)$-bit string $y$ into $g_S(y) \in \zo^{n}$ 
  such that for every set $S$ containing at most $2^t$ strings and every $x$ in $S$
\begin{equation}\label{e:inv}
\prob_\rho[g_S(F(x, \rho)) = x] \geq 1- \epsilon.
\end{equation}
\item $F$ is \emph{linear} if for every $\rho \in \zo^d$, the function $F(\cdot, \rho)$ is linear, i.e., for every $x_1, x_2 \in \zo^n$, $F(x_1+x_2, \rho) = F(x_1, \rho) + F(x_2, \rho)$, where we view $x_1$ and $x_2$ as elements of the linear space $(\F_2)^n$, and the output of $F$ as an element of the linear space $(\F_2)^{t+\Delta}$.
\end{enumerate}
\end{definition}

The next  proposition shows that, as announced,  a linear, $(t, \epsilon)$-invertible function can be used to construct a $(\tk, \epsilon)$-resilient private code  in the oblivious scenario (and also  in the additive Hamming scenario discussed in section~\ref{s:weakhamming}). In the oblivious scenario, the encoder and the decoder share the random bits used by the invertible function (in the additive Hamming case, they share more random bits, namely $n +$ the random bits of the invertible function).

\begin{proposition}[Invertible function $\rightarrow$ code in the oblivious scenario]  
\label{p:sh2}
  If there exists a linear $(t,\epsilon)$-invertible function $F :\zo^n \times \zo^{d} \mapping \zo^{t+\Delta}$, 
 then there exists a private code $\enc$  that is  $(\tk, \epsilon)$-resilient in the oblivious scenario, with rate $1- (\tk+\Delta)/n$, and such that the encoder and the decoder share  $d$ random bits.
\end{proposition}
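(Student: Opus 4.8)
The plan is to use the linear invertible function $F$ to define the parity-check matrix of the code, exactly mirroring the syndrome-decoding idea from Proposition~\ref{p:sh1}, but with the random matrix $H$ replaced by the (much more seed-efficient) invertible function. Concretely, for each fixed seed $\rho \in \zo^d$, the map $F(\cdot,\rho) \colon \zo^n \to \zo^{t+\Delta}$ is $\F_2$-linear, so it can be written as $x \mapsto H_\rho x$ for some $(t+\Delta)\times n$ matrix $H_\rho$ over $\F_2$. Its null space $\ker H_\rho$ has dimension at least $n-(t+\Delta)$; set $k = n - (t+\Delta)$ and let $\enc_\rho \colon \zo^k \to \ker H_\rho$ be a linear bijection onto a $k$-dimensional subspace of the null space (computed as in Remark~\ref{r:complexity1}: find $k$ independent solutions of $H_\rho x = 0$, assemble them as the rows of a generator matrix $G_\rho$, and put $\enc(m,\rho) = mG_\rho$). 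This gives rate $k/n = 1 - (t+\Delta)/n$ and shared randomness $d$, as required; linearity of $\enc_\rho$ is immediate.

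Next I would describe the decoder and verify resilience. Fix an oblivious channel $E \subseteq \zo^n$ with $|E| \le 2^t$, a message $m$, and an error vector $e \in E$. The decoder receives $\tx = \enc_\rho(m) + e$ together with $\rho$, and computes $p = F(\tx,\rho) = H_\rho \tx = H_\rho \enc_\rho(m) + H_\rho e = H_\rho e = F(e,\rho)$, using $H_\rho \enc_\rho(m) = 0$ and linearity. Thus $p$ is a random fingerprint of $e$ under $F$, and $E$ is a ``list of suspects'' of size at most $2^t$ containing $e$. Applying the inverting algorithm $g$ guaranteed by the definition of $(t,\epsilon)$-invertibility, with $S = E$, the decoder recovers $e' = g_E(p)$, which equals $e$ with probability at least $1-\epsilon$ over $\rho$. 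On this event the decoder forms $x = \tx + e'= \enc_\rho(m)$ and then inverts the linear bijection $\enc_\rho$ to recover $m$. Hence $\Pr_\rho[\dec_\rho(\enc_\rho(m)+e) = m] \ge 1-\epsilon$ for every $m$ and every $e \in E$, which is exactly Definition~\ref{def:resilientOblivious}.

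The steps are essentially bookkeeping, and I do not expect a genuine obstacle; the only points requiring a little care are: (i) checking that the call to $g$ is legitimate, i.e. that the input $p$ really is a value $F(e,\rho)$ for $e$ in the list $S=E$ — this is where the computation $H_\rho\tx = H_\rho e$ and the identity $|E|\le 2^t$ are used; and (ii) making sure $\Delta$ enters the rate correctly, i.e. that $k = n-(t+\Delta)$ is nonnegative in the regime of interest and that the null space indeed has dimension at least $k$ because $\mathrm{rank}\,H_\rho \le t+\Delta$. One should also note in passing that decoding time is dominated by running $g$ on the list $E$, which matches the complexity claims elsewhere (Remark~\ref{r:complexity}), but that is not needed for the statement being proved here.
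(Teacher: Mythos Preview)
Your proposal is correct and follows essentially the same approach as the paper: use linearity of $F(\cdot,\rho)$ to view it as a parity-check matrix $H_\rho$, encode messages into its null space, and decode by computing the syndrome $H_\rho\tx = H_\rho e = F(e,\rho)$ and applying the inverter $g$ with list $E$. The only cosmetic difference is that the paper indexes the null space by ``the $m$-th element'' rather than via an explicit generator matrix, but you recover exactly this via Remark~\ref{r:complexity1}.
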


\begin{proof}
Since $F(\cdot, \rho)$ is a linear function, it is given by a $(\tk + \Delta)$-by-$n$ matrix $H_\rho$ with entries in $\F_2$, such that $F(x,\rho) = H_\rho x$ (recall that we view $x$ as an $n$-vector over  $\F_2$). The matrices $H_\rho$ are viewed as parity-check matrices of linear codes. 
\medskip

The encoding and decoding procedures are as follows:
\begin{enumerate}
\item The encoder $\enc$  and the decoder $\dec$  share a random string $\rho \in \zo^d$.
\item $\enc$ on input a message $m$  of length  $n - (\tk + \Delta)$ computes the codeword $x$ of length $n$ as follows:
\begin{enumerate}
\item View $m$ as a positive integer in the natural way (based on the base 2 representation of integers).
\item The codeword $x$ is obtained by picking the $m$-th element in the null space of $H_\rho$ (so $H_\rho x = 0$). Note that the dimension of the null space of $H_\rho$ is at least  $n - (\tk + \Delta)$, because the rank of $H_\rho$ is at most $\tk + \Delta$.  Thus the encoder is well defined.

\end{enumerate}
\item Consider an oblivious channel $E$ of size at most $2^t$. 
\item The decoder $\dec$,  on input $\tx = x+e $, where $e  \in E$ is the noise added by the channel,  attempts to find  $m$  as follows:  \label{decoder}
\begin{enumerate}
\item $\dec$ computes  $p = H_\rho \tx$ (i.e.,  $p$ is the syndrome of $\tx$).
\item Note that 
\[
H_\rho \tx = H_\rho (x+e) = H_\rho x + H_\rho e = H_\rho e.
\]
 Thus $p$ is also the syndrome of $e$, and, consequently,   $F(e,\rho) = p$.

\item 
$\dec$ uses the inverter function $g$ given by~\eqref{e:inv}. It runs $g$ on  input $p = F(e, \rho)$ and list~$E$, 
    and with probability $1-\epsilon$, obtains  $e$. Next,  $x = \tx + e$, and finally from $x$, he finds~$m$.

\end{enumerate}
\end{enumerate}

The rate of the code is
 \[
r = \frac{|m|}{|x|} = \frac{n-(\tk + \Delta)} {n}.
\]
\end{proof}

\begin{remark}\label{r:complexity2}
We make the following observations regarding the complexity of the encoder function  $\enc$ and decoder functions $\dec$ in Proposition~\ref{p:sh2}.
The  invertible function is assumed to be linear and thus $F(x, \rho) = H_\rho x$, for some matrix $H_\rho$. If the mapping   $\rho \mapsto H_\rho$ is computable in time polynomial in $n$, then  $\enc$ is computable in time polynomial in $n$. This can be shown in the same way as in Remark~\ref{r:complexity1}. 

For the  invertible $F$ in Theorem~\ref{t:inv} and the one in the proof of Theorem~\ref{t:codepolylog}, the corresponding  inverters $g$  run in time polynomial in a standard  encoding of $S, y$, and $t$ (the latter written in unary). With such a $g$, a simple  inspection of the description in part~\ref{decoder}, reveals that $\dec$ runs in time polynomial in the time it takes to enumerate $E$.

  If the inverter $g$ of $F$ can be evaluated in polynomial space with oracle access to $S$, then $\dec$ is computable in polynomial space given oracle access to the oblivious channel $E$. This is the case for all invertible functions constructed with explicit condensers, obtained through the method in Corollary 2.13 in~\cite{bau-zim:t:univcompression}, which is also used in this paper (this follows from Remark 3 in~\cite{bau-zim:t:univcompression}). 

  An interesting approach to define channels is to use conditional Kolmogorov complexity. We might consider the set $E$ of all distortion vectors that satisfy $C(e \mid n) < t$, and there exist at most $2^t$ such vectors. The corresponding channel is not computable, but on input $n$ and $t$, the set $E$ can be enumerated. If $F$ is online-invertible, then the decoding algorithm explained above can be used with a simple modification of step 4, (c). 
  Each time an element is enumerated in $E$, we rerun the monotone inverse $g$ with the augmented set~$E$. 
  If one of the runs of $g$ halts with some output, then $\dec$ also halts with the same output.  Note that when $e$ is enumerated in $E$, $g$ on input $E$, $p=F(e, \rho)$ and $t$ returns $e$ with probability $1-\epsilon$. By the monotonicity of $g$, later updates of $E$ can not change a given value of $g$ once it has been generated, and this implies that with probability $1-\eps$, no previous runs of $g$ generated a different output. Thus $\dec$ also returns $e$ with probability $1-\epsilon$.
\end{remark}

\subsection{Construction of invertible functions from condensers}\label{s:step2}

A condenser is a type of function that has been studied in the theory of pseudorandomness, which can be seen as a relaxation of randomness extractors (see~\cite{vad:b:pseudorand}). 
Informally speaking, a condenser maps a random variable that is ``sufficiently random'' and ranges over a large set, to another random variable that is ``sufficiently random'' and ranges over a smaller set.

A random variable has {\em min-entropy $t$} if each value has probability at most $2^{-t}$. The statistical distance between two measures $P$ and $Q$ is $\sup |P(S)-Q(S)|$ for a set~$S$.
 Given a set $B$, we denote $U_B$  to be a random variable that is uniformly distributed on $B$. A condenser uses an additional random  variable, which is uniformly distributed over the set of $d$-bit strings, for some small $d$. 

\begin{definition}
  A function $C: \zo^n \times \zo^d \mapping \zo^{m}$ is a $t \rightarrow_{\epsilon} t'$ condenser, if for every $S \subseteq \zo^n$ of size at least $2^t$, the random variable $X= C(U_S, U_{\zo^d})$ is $\epsilon$-close to a random variable $\widetilde{X}$ that has min-entropy at least $t'$. 
\end{definition}

The quantity $ t + d - t'$ is called the \emph{entropy loss} of the condenser, because the input has min-entropy $t+d$ and the output  is close to having min-entropy $t'$. 
We view $C$ as a bipartite graph $G$ in the usual way: the left nodes are the strings in $\zo^n$, the right nodes are the strings in $\zo^m$ and for each $x \in \zo^n$, $\rho \in \zo^d$ there is an edge $(x, C(x, \rho))$ (thus,   for some $x, y$, there may exist multiple edges $(x,y)$).

Conversely, we sometimes view a bipartite graph $G$ as a function having the left side as the domain and the right side as the range and defined by $G(x, \rho) = y$ if the $\rho$th neighbor of $x$ is $y$ (assuming some fixed ordering of neighbors, and interpreting the binary string $\rho$ as the writing in binary of a positive integer).


The invertible function is constructed by concatenating the outputs of two condensers, namely a condenser of Ta-Shma and Umans~\cite{tas-uma:c:extractor} and a condenser of Guruswami, Umans, and Vadhan~\cite{guv:j:extractor} (the latter with a small modification involving simple hashing - see the proof of Corollary 2.13 in~\cite{bau-zim:t:univcompression}). 

Let $G= (L \cup R, \E)$ be a bipartite graph, where we allow $\E$ to be a multiset (i.e., there may be several edges beween two vertices). We say that $G$ admits $(\ell, r)$ matching up to size $K$ if for any set $S \subseteq L$ of size at most $K$, if it is possible to assign to each vertex $x$ in $S$ a set $A_x$ containing at least $\ell$ of its right neighbors (including in the count multiplicities), such that every vertex in $R$ belongs to at most $r$ sets in the family $\{A_x\}_{x \in S}$. In particular if $G$ admits $(\ell, 1)$ matching, the sets assigned to vertices in $S$ are pairwise disjoint.

We use functions that are condensers for an entire range of min-entropies. More precisely, we use families of functions $\{C_n\}$ indexed by $n$ with parameters $d, \epsilon, m,  \tm$ functions of $n$  (all, except $\epsilon$ being positive integers) and satisfying

\begin{equation}\label{e:conductor}
\begin{array}{l}
\text{$C_n$ has type } C_n:\zo^n \times \zo^d \mapping \zo^m,  \text{ and} \\ \\
\text{$C_n$ is an explicit $t \rightarrow_\epsilon t+d-e$ condenser for all $t \le \tm$ such that $2^t \in \nat$}.
\end{array}
\end{equation}

As usual we drop the subscript $n$ in the notation. Note that $C$ has entropy loss bounded by $e$ for inputs with min-entropy $t$ for all $t \le \tm$ such that $2^t \in \nat$.

\begin{theorem}
  \label{t:condensermatching}
Let $C$ be a condenser as in~\eqref{e:conductor}. Then the corresponding graph admits $( (1-4\epsilon)2^d, 2 \cdot t_{max}  \cdot 2^e)$ matching up to size $2^{t_{max}}$.

\end{theorem}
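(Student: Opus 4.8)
The plan is to show that a condenser as in~\eqref{e:conductor}, viewed as a bipartite multigraph $G = (L \cup R, \E)$ with $L = \zo^n$, $R = \zo^m$, and left-degree $2^d$, admits the claimed matching. Fix a set $S \subseteq L$ with $|S| \le 2^{\tm}$; we must assign to each $x \in S$ a sub-multiset $A_x$ of its right-neighborhood of size at least $(1-4\epsilon)2^d$ so that no right vertex is covered more than $2\tm 2^e$ times. I would build $A_x$ by a two-part pruning argument. First, let $t = \lceil \log |S| \rceil$, so $2^{t-1} < |S| \le 2^t \le 2^{\tm}$, and consider the random variable $X = C(U_S, U_{\zo^d})$; by the condenser property applied at min-entropy $\lfloor \log|S|\rfloor$ (or at $t$ after padding $S$ up to size $2^t$, which only helps), $X$ is $\epsilon$-close to some $\widetilde X$ of min-entropy at least $t + d - e$. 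Hence the set $\HEAVY = \{ y \in R : \Pr[X = y] > 2^{-(t+d-e)} \cdot (\text{slack}) \}$ of ``overloaded'' right vertices carries little mass, and the edges landing in $\HEAVY$ can be deleted from each $A_x$ while keeping $|A_x| \ge (1-4\epsilon)2^d$ for most $x$.

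More precisely, the key steps in order are: (1) reduce to $|S| = 2^t$ exactly by adding dummy left vertices (they can only increase right-degrees, and if the conclusion holds for the padded set it holds for $S$). (2) Since $\Pr[X = y] = \frac{1}{|S| \cdot 2^d}\cdot(\text{number of edges from } S \text{ to } y)$, a right vertex $y$ with more than $r := 2\tm 2^e$ edges from $S$ has $\Pr[X=y] > \frac{2\tm 2^e}{2^{t+d}} = 2\tm \cdot 2^{-(t+d-e)}$. (3) Use $\epsilon$-closeness to $\widetilde X$: for any such $y$, $\Pr[\widetilde X = y] \ge \Pr[X = y] - (\text{its contribution to statistical distance})$; summing over all overloaded $y$ and using $\Pr[\widetilde X = y] \le 2^{-(t+d-e)}$ together with $\|X - \widetilde X\| \le \epsilon$ forces the total $X$-mass on overloaded vertices to be at most roughly $\epsilon/(2\tm - 1) \cdot(\text{something}) + \epsilon$, in any case $O(\epsilon)$ — one must track constants so that it comes out $\le 4\epsilon$ modulo the factor $2\tm$ built into $r$. (4) The total $X$-mass on overloaded vertices equals $\frac{1}{|S|\cdot 2^d}\sum_{x \in S}(\text{number of edges from } x \text{ into the overloaded set})$, so by averaging, the fraction of edges (over all of $S \times [2^d]$) that land in overloaded vertices is at most $4\epsilon$; thus for \emph{every} $x$, we may set $A_x$ to be the edges of $x$ avoiding overloaded vertices, and on average $|A_x| \ge (1-4\epsilon)2^d$. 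To get the bound for every $x$ simultaneously — not just on average — I would instead only delete, for each $x$, the edges into vertices that are overloaded, and note that this is exactly the definition: a vertex is kept in the covering family for $x$ only if it is not overloaded, and each kept (non-overloaded) vertex has at most $r = 2\tm 2^e$ edges from $S$ total, hence lies in at most $r$ of the sets $A_x$. The only remaining point is that $|A_x| \ge (1-4\epsilon)2^d$ must hold for \emph{all} $x \in S$, which needs a slightly more careful choice: here I would invoke the stronger statement in~\cite{zim:t:onlinematching} (Th.~2.12), or re-derive it by a Hall-type / defect argument showing that if some $x$'s had too few good neighbors, one could restrict $S$ to those bad $x$'s and re-apply the condenser at a lower min-entropy to reach a contradiction.

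The main obstacle I anticipate is precisely this last uniformity issue — converting an ``on average over $x \in S$'' bound into a ``for every $x \in S$'' bound. A naive averaging gives only that most $x$ have large $A_x$; the honest fix is the iterative/restriction argument (shrink $S$ to the set of deficient vertices and recurse, using that the condenser guarantee holds at \emph{every} min-entropy $t \le \tm$), and making that recursion terminate with the stated parameters — in particular the factor $2\tm$ in the congestion bound $2\tm 2^e$ — is exactly where the $\log$-many min-entropy levels get spent. I would structure the proof to either cite Theorem~2.12 of~\cite{zim:t:onlinematching} directly for this, or present the restriction argument as the heart of the matter, with steps (2)–(3) above (the $\HEAVY$-set mass bound via statistical distance) as the per-level estimate.
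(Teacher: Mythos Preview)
The paper does not prove this statement; it is quoted as a (weaker form of a) result from \cite{zim:t:onlinematching} and used as a black box. So there is no ``paper's own proof'' to compare against, only the argument in the cited reference.

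That said, your final approach --- the iterative restriction argument in your last paragraph --- is exactly the right one and is essentially how the result is proved in \cite{zim:t:onlinematching}. One runs at most $\tm$ rounds: at each round one applies the condenser guarantee to the current set $S_j$ of still-unsatisfied left vertices, declares a right vertex \emph{heavy at this round} if it receives more than $2\cdot 2^e$ edges from $S_j$, uses $\epsilon$-closeness to min-entropy $\log|S_j|+d-e$ to bound the total mass on heavy vertices by $2\epsilon$, and hence (by Markov) at least half of $S_j$ have at most $4\epsilon\cdot 2^d$ heavy edges; those vertices are satisfied with $A_x$ equal to their non-heavy edges, and one recurses on the rest. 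Each right vertex is non-heavy with load $\le 2\cdot 2^e$ at each of at most $\tm$ rounds, whence the congestion bound $2\tm\cdot 2^e$.

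Where your write-up wobbles is in steps (2)--(4): you set the single-level threshold to $r = 2\tm\cdot 2^e$ and then try to get a per-vertex conclusion directly. With that threshold you do get $\Pr[X\in\HEAVY]\le 2\epsilon$, but this is only an \emph{average} over $x\in S$; it cannot be upgraded to ``every $x$'' without iteration, and there is no slack left to iterate since you have already spent the $\tm$ factor in the threshold. The fix is precisely what you say at the end: use the small per-round threshold $2\cdot 2^e$, halve the deficient set at each round via Markov, and let the $\tm$ rounds produce the $\tm$ factor in the congestion. So your diagnosis of the obstacle and your proposed remedy are both correct; just be sure that in the actual write-up the per-level threshold is $O(2^e)$, not $2\tm\cdot 2^e$.
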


\begin{theorem}
  \label{t:condensermatch1}
Let $C$ be a condenser as in~\eqref{e:conductor}. Then there exists an explicit  bipartite graph $G' = (L' \cup R', \E')$ with left degree $D' = 2^d \cdot u$,  $L'= \zo^n$, $|R'| = 2^m \cdot u^2$  that admits $((1-5\epsilon) D', 1)$ matching up to size $2^{t_{max}}$,  where $u = O(1/\epsilon \cdot n \cdot t_{max} \cdot 2^e)$.

  If $C$ is linear then $G'$ (viewed as a function) is linear as well.

\end{theorem}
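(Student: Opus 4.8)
The idea is to take the condenser graph $G$ of $C$ together with the matching guaranteed by Theorem~\ref{t:condensermatching}, and then to \emph{de-congest} that matching from congestion $r=2\,\tm\,2^e$ down to congestion $1$, at the cost of blowing up the left degree and the right side by a factor of order $rn/\epsilon$. Recall $G$ has left set $\zo^n$, right set $\zo^m$, left degree $2^d$, and by Theorem~\ref{t:condensermatching} admits $\bigl((1-4\epsilon)2^d,\, r\bigr)$ matching up to size $2^{\tm}$. I would split each right vertex into many copies and route the edges into the copies by a polynomial hash of the left endpoint, so that the (at most $r$) ``claimants'' of a given right vertex get sent to distinct copies.

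\textbf{Construction of $G'$.} I would fix a Reed--Solomon code: let $q$ be the least power of two with $q\ge c_0\,rn/\epsilon$ for a suitable absolute constant $c_0$, put $k=\lceil n/\log_2 q\rceil$, fix an $\F_2$-linear injection $\zo^n\hookrightarrow\F_q^{\,k}$, and for $x\in\zo^n$ let $p_x\in\F_q[Z]$ be the degree-$<k$ polynomial whose coefficient vector is the image of $x$; writing $\F_q=\{\alpha_1,\dots,\alpha_q\}$, set $h_j(x)=p_x(\alpha_j)$. Then let $u=q$, take $G'$ with left set $\zo^n$ and right set $\zo^m\times[u]\times\F_q$ (so $|R'|=2^m u^2$), and for each $x$, each $\rho\in\zo^d$ and each $j\in[u]$ put an edge from $x$ to $\bigl(C(x,\rho),\,j,\,h_j(x)\bigr)$; this gives left degree $D'=2^d u$. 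Explicitness of $G'$ is inherited from that of $C$ (polynomial evaluation is polynomial time). If $C$ is linear then the only output coordinates depending on $x$, namely $C(x,\rho)$ and $h_j(x)$, are $\F_2$-linear; a routine modification (replacing the plain index $j$ by $j\cdot\ell(x)$ for a fixed nonzero $\F_2$-linear functional $\ell$, and handling the degenerate set $\{x:\ell(x)=0\}$ separately) would then make $G'$ linear as a function, as in~\cite{zim:t:onlinematching}.

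\textbf{Verifying the matching.} Fix $S\subseteq\zo^n$ with $|S|\le 2^{\tm}$. By Theorem~\ref{t:condensermatching} there are multisets $A_x$ ($x\in S$) of right neighbours of $x$ in $G$ with $|A_x|\ge(1-4\epsilon)2^d$ and with each $y\in\zo^m$ in at most $r$ of them; set $T_y=\{x\in S:y\in A_x\}$, so $|T_y|\le r$. Realize each $A_x$ by a set $R_x\subseteq\zo^d$ of $\ge(1-4\epsilon)2^d$ edge-labels, and call the pairs $(\rho,j)$ with $\rho\in R_x$, $j\in[u]$ the \emph{candidate} edges of $x$, of which there are $\ge(1-4\epsilon)2^d u$. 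A candidate edge of $x$ and one of $x'\ne x$ in $S$ reach the same right vertex only if $C(x,\rho)=C(x',\rho')=:y$ (hence $x,x'\in T_y$), $j=j'$, and $h_j(x)=h_j(x')$. Since the Reed--Solomon code has relative distance $1-(k-1)/q\ge 1-\epsilon/r$, any two distinct $x,x'$ have $h_j(x)=h_j(x')$ for at most $(k-1)\le(\epsilon/r)u$ indices $j$. Hence the number of candidate edges of $x$ colliding with a candidate edge of some other element of $S$ is at most
\[
  \sum_{\rho\in R_x}\ \sum_{x'\in T_{C(x,\rho)}\setminus\{x\}}\bigl|\{\,j:h_j(x)=h_j(x')\,\}\bigr|\ \le\ 2^d\cdot r\cdot\tfrac{\epsilon}{r}\,u\ =\ \epsilon\,2^d u .
\]
Deleting these from the candidate set of each $x$ leaves pairwise disjoint sets $A'_x$ with $|A'_x|\ge(1-4\epsilon)2^d u-\epsilon\,2^d u=(1-5\epsilon)D'$, which is the required $\bigl((1-5\epsilon)D',1\bigr)$ matching up to size $2^{\tm}$.

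\textbf{Parameters and the main obstacle.} The two conditions on the code, $q^{\,k}\ge 2^n$ (so the embedding exists) and $(k-1)/q\le\epsilon/r$ (for the distance), are both met by $q=\Theta(rn/\epsilon)$, so $u=q=O\bigl((1/\epsilon)\,n\,\tm\,2^e\bigr)$ because $r=2\,\tm\,2^e$. I expect the main obstacle to be keeping $u$ \emph{linear} in $r$: a union bound over all $\binom{|T_y|}{2}$ pairs inside a claimant set would require a code agreeing on at most an $\epsilon/r^2$ fraction of coordinates for any two codewords, forcing $u=\Theta(r^2 n/\epsilon)$; the fix is to charge each colliding candidate edge of $x$ to a single co-claimant $x'$ (and a single label $\rho$), which caps the loss at $2^d\cdot r\cdot(\epsilon/r)\cdot u$. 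A secondary subtlety is preserving $\F_2$-linearity of $G'$ when $C$ is linear, which is why one should use Reed--Solomon over a binary field with the (linear) evaluation map, rather than an arbitrary large-distance code.
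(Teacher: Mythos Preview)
Your de-congestion strategy—apply Theorem~\ref{t:condensermatching} to get an $\bigl((1-4\epsilon)2^d,\,r\bigr)$ matching with $r=2\,\tm\,2^e$, then split each right vertex via a Reed--Solomon hash so that the $\le r$ claimants of any $y$ are separated—is correct and is exactly the ``small modification involving simple hashing'' the paper points to in~\cite{zim:t::onlinematching}. Your collision count (charging each bad $(\rho,j)$ of $x$ to a single co-claimant $x'\in T_{C(x,\rho)}$ rather than to pairs, so the loss is $2^d\cdot r\cdot(\epsilon/r)u$ instead of $2^d\cdot r^2\cdot(\epsilon/r^2)u$) and the parameter choice $q=\Theta(rn/\epsilon)$ are both right, and they yield the stated $u=O((1/\epsilon)\,n\,\tm\,2^e)$.

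The gap is in the linearity clause. Your map $x\mapsto\bigl(C(x,\rho),\,j,\,h_j(x)\bigr)$ is affine, not $\F_2$-linear, because the middle block equals the nonzero constant $j$; in particular $G'(0,(\rho,j))=(0,j,0)\ne 0$. Your proposed fix, replacing $j$ by $j\cdot\ell(x)$ for a linear functional $\ell$, does restore linearity but breaks the matching on the hyperplane $\{\ell=0\}$: for $x,x'$ with $\ell(x)=\ell(x')=0$ the middle block is identically $0$, so a collision now only needs $C(x,\rho)=C(x',\rho')$ and $h_j(x)=h_{j'}(x')$ for \emph{some} $j'$, which the Reed--Solomon distance does not control (your count crucially used $j=j'$). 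Since $\{\ell=0\}$ has density $1/2$, this is not a degenerate case that can be ``handled separately''. Nor can you simply subtract off $(0,j,0)$ to linearize, as that collapses the middle coordinate and undoes the de-congestion. To prove the linearity addendum you need a hash whose \emph{entire} output is $\F_2$-linear in $x$ and has range of size $u^2$, together with a bound on cross-index collisions $h_j(x)=h_{j'}(x')$; this is a genuinely different (list-recovery–type) estimate than the one you gave, and it is where the actual work in~\cite{zim:t::onlinematching} lies.
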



As announced, we use the  following condensers.
\begin{theorem}[\cite{tas-uma:c:extractor}, Theorem 3.2, also Theorem 4.1]\label{t:tu}
For every $n$, $\tm \leq n$, $\epsilon \geq 0$  there exists an explicit function $C_{\tu}: \zo^n \times \zo^d \mapping \zo^m$ such that
\begin{enumerate}
\item For every $t \leq \tm$, $C_{\tu}$ is a $t \rightarrow _{\epsilon} t+d- e_{\tu}$ condenser, with $e_{\tu} = O((\tm/\log n) \cdot \log(1/\epsilon) + \log n)$, 
\item $d = O(\log n)$  and $m \leq \tm$. 
\item $C_{\tu}$ is linear. More precisely, for each $y \in \zo^d$, there is a $m$-by-$n$ matrix $A_y$ with entries in $\F_2$ such that $C_{\tu}(x, y) = A_y x$. Furthermore the mapping $y \mapsto A_y$ is computable in time polynomial in $n$.
\end{enumerate}
\end{theorem}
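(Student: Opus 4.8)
The statement is a repackaging of the explicit condensers of Ta-Shma and Umans~\cite{tas-uma:c:extractor}, so the plan is to recall their construction, read off the parameters to get (1)--(2), and then supply the one feature their theorem does not explicitly record but which we need, namely that the map is $\F_2$-linear with the seed-to-matrix assignment computable in polynomial time.

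First I would set up the algebraic construction over a field $\F_q$ with $q=2^s$ a power of two and $q=\poly(n)$, so $\log q=O(\log n)$. An input $x\in\zo^n$ is identified with the coefficient vector of a polynomial $f_x\in\F_q[Y]$ of degree below $h=\Theta(n/\log q)$. Like the Parvaresh--Vardy / Guruswami--Umans--Vadhan~\cite{guv:j:extractor} condenser it refines, the Ta-Shma--Umans condenser outputs, on a seed $y$ naming a point of $\F_q$ together with a short auxiliary string that controls a recursion, a tuple assembled from evaluations $(f_x^{h_1}\bmod E)(y),\,(f_x^{h_2}\bmod E)(y),\dots$ for suitable exponents $h_j$ and a fixed irreducible $E$; their improvement over~\cite{guv:j:extractor} is to recurse on these evaluations so that the output length shrinks while the seed stays $O(\log n)$. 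For linearity the one design choice to make is to take each folding exponent $h_j$ to be a power of $2$; then $f\mapsto f^{h_j}$ is an iterated Frobenius map on coefficients (composed with a monomial-degree rescaling), hence $\F_2$-linear, and since reduction modulo $E$ and evaluation at a point are also $\F_2$-linear, each output coordinate is an $\F_2$-linear function of the coefficient vector of $f_x$ for every fixed seed; the recursion preserves this because each layer feeds a fixed-seed $\F_2$-linear image into the next such map. Under the natural $\F_2$-vector-space identification of $\F_q$ this exhibits $C_\tu(\cdot,y)$ as multiplication by an $m$-by-$n$ matrix $A_y$ over $\F_2$, and since $q=\poly(n)$ and constructing $E$, field arithmetic, the polynomial reductions, and assembling the layers all take time $\poly(\log q)$, the map $y\mapsto A_y$ is computable in time $\poly(n)$. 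This establishes (3).

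Next comes the parameter bookkeeping for (1) and (2). The recursion depth and $\deg(E)$ are the free knobs: the output length is $O(\text{depth}\cdot\log q)$ and the entropy loss on a source of min-entropy $t$ is controlled by $\deg(E)$ times the depth measured against $t$, which Ta-Shma and Umans tune so that for every $t\le\tm$ the loss is $O\big((\tm/\log n)\log(1/\epsilon)+\log n\big)$, i.e.\ exactly $e_\tu$, with $d=O(\log n)$. To force $m\le\tm$ one caps the depth accordingly; if $\tm$ is so small that even a single recursion level exceeds it the condenser is trivial ($m=0$) and the statement holds vacuously, so one may assume $\tm=\Omega(\log n)$, and one also needs $q$ large enough that $h<q$ and $\deg(E)$ fits, which is consistent with $q=\poly(n)$ for all $t\le\tm\le n$. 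The soundness guarantee --- that $C_\tu(U_S,U_{\zo^d})$ is $\epsilon$-close to min-entropy $t+d-e_\tu$ whenever $|S|\ge 2^t$ --- is the condenser conclusion proved in~\cite{tas-uma:c:extractor}, which I would cite. The only thing worth flagging, to match the ``range of min-entropies'' phrasing of~\eqref{e:conductor}, is that their list-decoding argument uses the hypothesis $|S|\ge 2^t$ only as a lower bound: a too-heavy set of output vertices would force $S$ into the zero set of a nonzero low-degree polynomial, hence into a set far smaller than $2^t$, a contradiction; so the same function and the same seed length work simultaneously for all $t\le\tm$, with loss bounded by the $t=\tm$ value.

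The hard part is not any single step but the parameter juggling --- getting the recursion depth, $\deg(E)$, and $q$ to simultaneously deliver $m\le\tm$, $d=O(\log n)$, the stated entropy loss, \emph{and} explicitness. This is precisely what the Ta-Shma--Umans construction is engineered to achieve, so the cleanest route is to invoke their Theorem~3.2 / 4.1 as a black box for (1)--(2) and re-derive only the $\F_2$-linearity and the polynomial-time computability of $y\mapsto A_y$ in (3), both of which are immediate from the polynomial-evaluation form of the construction over a field of characteristic two once the folding exponents are chosen to be powers of $2$.
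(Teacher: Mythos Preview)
Your proposal is correct and takes essentially the same approach as the paper: invoke~\cite{tas-uma:c:extractor} as a black box for items (1)--(2), and supply the $\F_2$-linearity claim (3) via the Frobenius identity $(f+g)^{h^\ell}=f^{h^\ell}+g^{h^\ell}$ in characteristic~$p$ when $h$ is a power of~$p$.

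One discrepancy worth noting: your description of the underlying construction is not quite the one the paper unpacks in its appendix. You present the Ta-Shma--Umans condenser as a \emph{recursive} refinement of the GUV construction (with ``recursion depth'' as a tunable parameter), whereas the paper describes it as a direct \emph{bivariate} polynomial construction: the input is a polynomial $f(X,Y)=\sum_{i,j}\alpha_{i,j}X^iY^j$ over $\F_h$ (with $Y$ reduced modulo a degree-$2$ irreducible $P(Y)$ and $X$ modulo an irreducible $E(X)$ over $\F_q=\F_h[Y]/P(Y)$), the seed is a pair $(x,y)\in\F_q\times\F_h$, and each output coordinate is the evaluation at $(x,y)$ of a \emph{linearized polynomial} $C_i(f)=\alpha_0 f+\alpha_1 f^h+\cdots+\alpha_{m-1}f^{h^{m-1}}$ rather than a single monomial power $f^{h_j}$. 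This does not affect the linearity argument---both descriptions reduce to the same Frobenius observation you identified---but it does mean your remarks about ``recursion preserving linearity layer by layer'' and ``capping the depth to force $m\le\tm$'' are addressing a different presentation than the one the paper works with.
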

\begin{theorem}[~\cite{guv:j:extractor}, Theorem 4.3, also Theorem 1.7]\label{t:guv}
For every $n$, $\tm \leq n$, $\epsilon \geq 0$ and constant $\alpha$, there exists an explicit function $C_{\guv} : \zo^n \times \zo^d \mapping \zo^m$ such that
\begin{enumerate}
\item For every $t \leq \tm$, $C_{\guv}$ is a $t \rightarrow _{\epsilon} t+d$ condenser.
\item $d = (1+1/\alpha)(\log n + \log \tm + \log(1/\epsilon)) + O(1)$ and $m \leq (1+\alpha)\tm + 2d$. 
\item $C_{\guv}$ is linear. More precisely, for each $y \in \zo^d$, there is a $m$-by-$n$ matrix $A_y$ with entries in $\F_2$ such that $C_{\guv}(x, y) = A_y x$. Furthermore the mapping $y \mapsto A_y$ is computable in time polynomial in $n$.
\end{enumerate}
\end{theorem}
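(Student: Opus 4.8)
The plan is to follow the Parvaresh--Vardy-based construction of Guruswami, Umans and Vadhan. Fix a field size $q = 2^s$ and a ``Frobenius step'' $h = 2^j$, set $n_0 = \lceil n/s \rceil$, and fix an explicit irreducible $E(Y) \in \F_q[Y]$ of degree $n_0$ (so $\F_q[Y]/(E) \cong \F_{q^{n_0}}$; such an $E$ can be produced deterministically). Identify an input $x \in \zo^n$ with a polynomial $f \in \F_q[Y]$ of degree $< n_0$ (view $x$ as an $n_0$-tuple over $\F_q = \F_2^s$, padding with zeros), and take the seed to be $y \in \F_q$, so $d = s = \log q$. Put
\[
  C_{\guv}(f, y) \;=\; \bigl(\, y,\; f^{(0)}(y),\; f^{(1)}(y),\; \ldots,\; f^{(c-1)}(y) \,\bigr) \in \F_q^{c+1},
  \qquad f^{(i)} := f^{h^i} \bmod E = \bigl(f^{(i-1)}\bigr)^{h} \bmod E,
\]
with a parameter $c$ fixed last; thus $m = (c+1)\log q$, and listing $y$ itself in the output forces zero entropy loss in the seed part. (The case $c=1$ is the Reed--Solomon-based condenser: two distinct degree-$<n_0$ polynomials agree in $<n_0$ points, so it is already a lossless expander, but only for $q \gtrsim 2^{\tm} n /\epsilon$ and hence seed $\approx \tm$; using many Frobenius twists is what will push the seed down to $\approx \log n$.)

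Linearity (item~3) is immediate: in characteristic $2$ the squaring map is the additive Frobenius, so $g \mapsto g^{h^i}$ is an iterate of Frobenius ($h$ being a power of $2$) and hence $\F_2$-linear; reduction mod $E$ is $\F_q$-linear, as is evaluation at a fixed $y$; composing and tupling preserve $\F_2$-linearity. Hence $C_{\guv}(\cdot,y) = A_y(\cdot)$ for an $m\times n$ matrix $A_y$ over $\F_2$, and $A_y$ is computable in time $\poly(n)$ via field arithmetic and repeated squaring, using the explicit $E$.

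For the condenser property, by the standard equivalence between lossless condensers and lossless vertex-expanders it suffices to show that the bipartite graph on $\zo^n \times \zo^m$ with left degree $D = q$ defined by $C_{\guv}$ satisfies $|N(S)| \ge (1-O(\epsilon))\,q\,|S|$ for every $S$ with $|S| \le 2^{\tm}$ (this handles all min-entropies $t \le \tm$ simultaneously). This is proved by the Parvaresh--Vardy list-decoding argument: supposing some $S$ with $|S| = K \le 2^{\tm}$ had $|N(S)|$ too small, one interpolates a nonzero $Q(Y, Z_0, \ldots, Z_{c-1})$ with small $\deg_Y$ and each $\deg_{Z_i} Q < h$ vanishing on the collision part of $N(S)$; for every $f \in S$ the univariate $P_f(Y) := Q\bigl(Y, f^{(0)}(Y), \ldots, f^{(c-1)}(Y)\bigr)$ then has more roots than its degree, so $P_f \equiv 0$; passing to $\F_{q^{n_0}} = \F_q[Y]/(E)$, where the quotient map sends $f^{(i)}$ to $\bar f^{h^i}$, this yields $Q^*(\bar f) = 0$ for the nonzero univariate $Q^*(Z) := Q(\theta, Z, Z^{h}, \ldots, Z^{h^{c-1}})$ of degree $< h^c$ (here $\theta$ is the image of $Y$); hence the number of admissible $f$ is $< h^c$, and the parameters are chosen so that this contradicts $|S| = K$.

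Finally the parameters are balanced: one takes $c$ of order $\alpha\,\tm / \log(n\tm/\epsilon)$, $h$ a power of $2$ of order $(n\tm/\epsilon)^{1/\alpha}$, and $q = h^{1+\alpha}$ up to lower-order adjustments ensuring $h \le q$, $n_0 = \lceil n/\log q\rceil \le q$, that $c\,h\,n_0 < q$, and that the interpolation $Y$-degree stays below $n_0$. This gives $d = \log q = (1+1/\alpha)(\log n + \log\tm + \log(1/\epsilon)) + O(1)$ and $m = (c+1)\log q \le (1+\alpha)\tm + 2d$, as claimed. I expect the main obstacle to be precisely this expansion analysis: making the interpolation argument certify the \emph{lossless} bound $(1-O(\epsilon))q$ rather than a weaker expansion, which forces one to use the $\epsilon$-slack carefully (interpolating only on a suitable sub-structure of $N(S)$, or first discarding the $O(\epsilon q)$ ``heavy'' seeds), and then checking that the competing requirements on $q, h, c, n_0$ — feasibility of the interpolation, the degree bound forcing $P_f \equiv 0$, non-vanishing of $Q^*$, and the final counting contradiction — are all simultaneously satisfiable for every $\epsilon > 0$ and every $\tm \le n$.
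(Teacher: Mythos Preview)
Your proposal is correct and, for the only part the paper actually argues, matches it: the paper cites items~1 and~2 directly from Guruswami--Umans--Vadhan and only verifies item~3 (linearity) in the appendix, via exactly your Frobenius observation that $(f+g)^{h^i}=f^{h^i}+g^{h^i}$ in characteristic~$2$ once $h$ is taken to be a power of~$2$ (the paper flags this choice of $h$ as the sole modification needed). Your sketch of the lossless-expansion argument and the parameter balancing goes beyond what the paper does, but it is faithful to the GUV construction the paper is quoting.
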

\begin{remark}
The linearity of $C_{\tu}$ and $C_{\guv}$ are not stated explicitly in~\cite{tas-uma:c:extractor}   and~\cite{guv:j:extractor}. We give some explanations in the appendix, Section~\ref{s:linearguv} and Section~\ref{s:lineartu}.
\end{remark}

\begin{theorem}\label{t:inv}
  For every $\tk \leq n$ and $\epsilon > 0$, there exists a linear $(\tk, \epsilon)$ invertible function $F:\zo^n \times \zo^d \mapping \zo^{\tk + \Delta}$ with $d = O(\log n + \log(1/\epsilon))$ and $\Delta = O(\tfrac{\tk}{\log n} \cdot \log(1/\epsilon) + \log (n/\epsilon))$.   Moreover, the inverter $g$ satisfying~\eqref{e:inv} runs in time polynomial in the length of a standard encoding of $S$ and $F(x,\rho)$. 
\end{theorem}
\begin{proof} The invertible function is obtained by combining the condensers  $C_{\tu}$ and  $C_{\guv}$, with parameters set as follows.
\smallskip

$\bullet$  For the $C_{\tu}$ condenser, we take  the parameters $\epsilon$,  and $\tm$ set to be $\tk$. $C_{\tu}$ uses a random string $\rho_1$ of length  $d_1 = O(\log n)$, the output length is $m_1 \leq  \tk$   and it has entropy loss bounded by $e_{\textrm{TU}} = c_1 ((\tk/\log n ) \cdot \log(1/\epsilon_1) +  \log n)$,  for some constant $c_1$.  The graph $G_{\tu}$ corresponding to $C_{\tu}$ admits  $((1-4\epsilon)D_1, 2\cdot \tk \cdot 2^{e_{\tu}})$ matching up to size $2^{\tk}$, by Theorem~\ref{t:condensermatching}. Let
\[
t_1 = \lceil  e_{\textrm{TU}}   +  \log \tk + 1 \rceil.
\]

$\bullet$ For the $C_{\guv}$ condenser, we take the parameters $\epsilon$, $\tm$ set to  $t_1$ and $\alpha$ set to $1$.  $C_{\guv}$ uses a random string of length $d_2= 2(\log n + \log t_1 + \log(1/\epsilon)) +O(1)$, its output length is $m_2 \leq 2 t_1 + O(\log n/\epsilon)$, and  it has zero entropy loss.  The graph $G'_{\guv}$ corresponding to $C_{\guv}$ according to Theorem~\ref{t:condensermatch1} (which guarantees matching with no sharing) has left degree $2^{d_2'} = 2^{d_2} \cdot u$ and the size of its right side is $2^{m'_2} = |2^{m_2}| \cdot u^2$, where $u  = O(1/\epsilon \cdot n \cdot t_1)$. The graph $G'_{\guv}$ admits $((1-5\epsilon) 2^{d_2'}, 1)$ matching up to size $2^{t_1}$.  We view $G'_{\guv}$ as a function in the standard way.
\medskip

We now define the invertible function $F: \zo^n \times \zo^{d_1+d'_2} \mapping \zo^{m_1 + m'_2}$, by
\[
F(x, (\rho_1, \rho_2)) = C_{\tu} (x, \rho_1) \circ G'_{\guv}(x, \rho_2)
\]

We have used $\circ$ to denote string concatenation. Let us check that $F$ is an invertible function with the parameters  claimed in the statement of the theorem. 

 Consider $S \subseteq \zo^n$ of  size $|S| \leq 2^{\tk}$ and let us fix $x \in S$.  Since $G_{\tu}$  admits  $((1-4\epsilon)2^{d_1}, 2^{t_1})$ matching up to size $2^{\tk}$, there is a function that assigns to every element in $S$ a set of $(1-4\epsilon) 2^{d_1}$ of its right neighbors in $G_{\tu}$, such that no right element is assigned to more than $2^{t_1}$ elements in $S$.  Thus if we take $\rho_1$ random in $\zo^{d_1}$, with probability $1-4 \epsilon$, $C_{\tu} (x, \rho_1)$ has at most $2^{t_1}$ neighbors in $S$, one of them being obviously $x$. Let $S_1$ be the set of neighbors of $C_{\tu} (x, \rho_1)$ in $G_{\tu}$ and let $\mathcal{A}$ be the event that  the size of $S_1$ is bounded by $2^{t_1}$. We have argued that the probability of  
$\mathcal{A}$ is at least $1-4 \epsilon$. 

Since $G'_{\guv}$ admits  $((1-5\epsilon) 2^{d_2'}, 1)$ matching up to size $2^{t_1}$, conditioned on $\mathcal{A}$, there is a function that assigns  to each element in $S_1$ a set containing  $(1-5 \epsilon)2^{d'_2}$ of its neighbors in $G'_{\guv}$, such that all these sets are pairwise disjoint. Thus, conditioned  on $\mathcal{A}$, if $\rho_2$ is picked at random in  $\zo^{d'_2}$, with probability at least $(1-5\epsilon)$, $G'_{\guv}(x, \rho_2)$ has a single neighbor in $S_1$, namely $x$.  Let ${\mathcal B}$ be the event that $G'_{\guv}(x, \rho_2)$ has a single neighbor in $S_1$. We have argued that the probability of  ${\mathcal B}$ conditioned by $\mathcal{A}$ is $(1-5 \epsilon)$. 

Let us condition by $\mathcal{A} \cap \mathcal{B}$, which is an event that has probability at least $1-9\epsilon$.  Under this condition,  $F(x, (\rho_1, \rho_2))$ has with probability $1$ a single neighbor in $S$, namely $x$. Thus by an exhaustive search in $S$, one  can retrieve $x$ from   $F(x, (\rho_1, \rho_2))$.  We conclude that with probability $1-9\epsilon$, one can invert $F(x, (\rho_1, \rho_2)$ and find $x$.

 $F$ is linear because each component is linear and the assertions regarding the sizes of $d$ and $\Delta$ can be checked by inspection.
The proof is concluded after a rescaling of $\epsilon$.
\end{proof}

\subsection{Proofs  of Theorem~\ref{t:main} and Theorem~\ref{t:codepolylog}}\label{s:step3}
The proof of Theorem~\ref{t:main}  follows by plugging the invertible function from Theorem~\ref{t:inv} into Proposition~\ref{p:sh2}.
The assertions from Remark~\ref{r:complexity} regarding the computational  complexity of the encoder function $\enc$  and of the decoder functions $\dec$ follow from Remark~\ref{r:complexity2}.

The proof of Theorem~\ref{t:codepolylog} is similar, except that we use a condenser of Raz, Reingold, and Vadhan~\cite{rareva:j:extractor},  instead of the $C_{\tu}$ condenser from~\cite{tas-uma:c:extractor} and the $C_{\guv}$ condenser from~\cite{guv:j:extractor}.
Note that the condenser of Raz, Reingold, and Vadhan is actually an extractor, but we only use the condenser property (extractors have stronger properties than condensers).

 \begin{proof} of Theorem~\ref{t:codepolylog} (sketch)
We use the following condenser.
\begin{theorem}[Theorem 22, (2) in~\cite{rareva:j:extractor}]
\label{t:rrv}
For every $n$, $\tm \leq n$,  $\epsilon \geq 0$, there exists an explicit  function $C_{\rrv} :\zo^n \times \zo^{d}  \rightarrow \zo^{\tm-\Delta}$,  with the following properties:
\begin{enumerate}
\item For every $t \leq \tm$, $C_{\rrv}$  is a $t \rightarrow_{\epsilon} t-\Delta$ condenser
\item $d = O(\log^3(n) \log^2 (1/\epsilon))$ and $\Delta = O(d)$,
\item $C_{\rrv}$ is linear. More precisely, for each $y \in \zo^d$, there is a $m$-by-$n$ matrix $A_y$ with entries in $\F_2$ such that $C_{\rrv}(x, y) = A_y x$. Furthermore the mapping $y \mapsto A_y$ is computable in time polynomial in $n$.
\end{enumerate}
\end{theorem}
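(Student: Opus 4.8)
The final statement is Theorem~\ref{t:rrv}, which asserts that the Raz--Reingold--Vadhan construction yields an \emph{explicit linear} function $C_{\rrv}\colon\zo^n\times\zo^d\to\zo^{\tm-\Delta}$ that is a $t\to_\epsilon t-\Delta$ condenser for every $t\le\tm$, with $d=O(\log^3 n\,\log^2(1/\epsilon))$ and $\Delta=O(d)$. Since this is a restatement of an existing result, the plan is to quote Theorem~22(2) of~\cite{rareva:j:extractor} verbatim for items~1 and~2, and to supply the only genuinely new content, namely the linearity claim in item~3. So the proof splits into two parts: (i) a pointer to~\cite{rareva:j:extractor} for the condenser parameters, and (ii) an argument that the RRV extractor, for every fixed seed $y$, acts as an $\F_2$-linear map on the source, with the seed-to-matrix map $y\mapsto A_y$ polynomial-time computable.

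\textbf{The main obstacle (linearity).} The RRV construction is built by a recursive ``zig-zag''-style composition: one repeatedly applies a base extractor/condenser and a ``win-win'' recursion that takes a block of the source, feeds it through the previous-level object, and uses the output together with fresh seed bits to extract from (a structured combination of) the remaining blocks. The hard part is to check that each building block in this recursion is $\F_2$-linear in the source argument and that linearity is preserved under the composition operations actually used (block decomposition, XOR of outputs, the ``restriction then recurse'' step). I would handle this exactly as the paper already signals it does for $C_{\tu}$ and $C_{\guv}$: defer the detailed verification to an appendix (the excerpt's Remark after Theorem~\ref{t:guv} promises exactly such appendix sections, Sections~\ref{s:linearguv} and~\ref{s:lineartu}; I would add an analogous Section for RRV). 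Concretely, the base objects in~\cite{rareva:j:extractor} are (variants of) the Reed--Solomon / polynomial-evaluation extractor and Hadamard-type extractors, all of which compute $\F_2$-linear functions of the source for each fixed seed; the recursive step only ever takes sub-blocks of the source (an $\F_2$-linear projection), applies a lower-level $C(\cdot,y')$ (linear by induction), and combines results by field addition, so the induction goes through. One subtlety worth flagging: in some presentations of RRV the seed is partly derived from the source via the win-win analysis; I would use the form of the theorem in which the seed is genuinely an independent uniform string (Theorem~22(2) is stated that way), so that for each fixed seed value the map really is a fixed linear transformation with no source-dependent branching.

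\textbf{Putting it together.} Given the linearity of each base block and its inductive preservation, for each seed $y\in\zo^d$ the function $x\mapsto C_{\rrv}(x,y)$ is $\F_2$-linear, hence given by an $(\tm-\Delta)$-by-$n$ matrix $A_y$; one reads off $A_y$ column by column as $A_y e_i = C_{\rrv}(e_i,y)$ where $e_i$ is the $i$th standard basis vector, and since $C_{\rrv}$ is explicit (polynomial-time computable), so is the map $y\mapsto A_y$. Items~1 and~2 are then invoked directly from~\cite{rareva:j:extractor}: for $t\le\tm$ the output is $\epsilon$-close to having min-entropy $t-\Delta$, with $d=O(\log^3 n\,\log^2(1/\epsilon))$ and $\Delta=O(d)$. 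I expect the write-up in the main text to be just a few lines (cite~\cite{rareva:j:extractor} for 1--2, assert 3, point to the appendix), with the real verification work localized in the appendix section devoted to linearity of the pseudorandomness primitives.
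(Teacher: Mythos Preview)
Your high-level plan is right and matches the paper: items 1 and 2 are cited directly from~\cite{rareva:j:extractor}, and only the linearity in item 3 needs a separate argument (the paper likewise defers this to an appendix section).

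Where you diverge is in \emph{how} you propose to verify linearity. You describe a ``zig-zag''/``win-win'' recursive composition and plan an induction through that recursion, worrying (correctly) about the possibility that intermediate seeds are derived from the source. But Theorem~22(2) of~\cite{rareva:j:extractor} is a Trevisan-style extractor, not a block-source/composition construction: the source $x$ is encoded once by a linear code (Reed--Solomon concatenated with Hadamard), and the seed $y$ is used, via the combinatorial design, only to pick which positions of the encoding to output. The paper exploits exactly this structure: it observes that the $i$th output bit is $u_x(g_i(y),h_i(y))$ where $u_x(v,w)=w\cdot p_x(v)$ with $p_x$ the polynomial whose coefficient vector is $x$; since $p_x(v)=B_v x$ for a fixed matrix $B_v$, each output bit equals $(h_i(y)\,B_{g_i(y)})\,x$, and the matrix $A_y$ has $i$th row $h_i(y)B_{g_i(y)}$. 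No induction is needed, and the ``source-derived seed'' issue never arises because all of the recursive machinery lives inside the seed-only functions $g_i,h_i$.

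Your inductive route is not doomed, but as written it is aimed at the wrong construction, and the subtlety you flag is a genuine obstruction for compositions that feed extracted bits back as seeds---your proposed fix (``use the form with an independent top-level seed'') does not by itself rule out source-dependent branching inside the recursion. Recognizing the Trevisan-like ``linearly encode, then sample at seed-determined positions'' shape is what makes the argument one line.
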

\begin{remark}
The linearity of $C_{\rrv}$ is not stated explictly in~\cite{rareva:j:extractor}. We give some explanations in the appendix, Section~\ref{s:linearrrv}.
\end{remark}
The $C_{\rrv}$ condenser has entropy loss $(t+d) - (t-\Delta) = d+\Delta = O(d)$.  Consider the graph $G'_{\rrv}$ corresponding to  $C_{\rrv}$ according to 
Theorem~\ref{t:condensermatch1}. This graph has left degree $2^{d'} = 2^{O(\log^3 n \cdot \log^2 (1/\epsilon))}$, the size of the right side is $2^{\tm + O(\log^3 n \cdot \log^2 (1/\epsilon))}$, and admits $((1-5 \epsilon) 2^{d'}, 1)$ matching up to size $2^{\tm}$.  We define the invertible function $F$, by $F(x, \rho) = G'_{\rrv}(x, \rho)$ (viewing the graph as a function). We check that $F$ is invertible.  Let $S\subseteq L$ with size at most $2^{\tm}$ and let us fix $x \in S$. By the matching property, if $\rho$ is picked at random in $\zo^{d'}$, then with probability $1-5 \epsilon$, $G'_{\rrv}(x, \rho)$ has a single neighbor in $S$, namely $x$. Therefore, if  we do an exhaustive search in $S$ for a neighbor of $F(x, \rho)$ , we find $x$ with probability $1-5\epsilon$, which shows that $F$ is invertible. $F$ is linear because $G'_{\rrv}$ is linear, and the other parameters follow by simple inspection.
\end{proof}

\subsection{Universal codes  with deterministic polynomial-time encoding and decoding  for channels with random distortion in the oblivious scenario} \label{s:justesen}

The universal codes for the oblivious scenario  in Theorem~\ref{t:main} and Theorem~\ref{t:codepolylog} do not have efficient decoding.  In this section, we follow closely Cheraghchi~\cite{che:c:codescondenser} and use Justesen's concatenation scheme~\cite{jus:j:concatcodes} to turn the code from Proposition~\ref{p:sh1} into a code that does not use any randomness and which has an encoder and a decoder that run in time polynomial in the codeword length.  
The cost is that the code works against channels that distort at random, and are memoryless. This means that the codeword consists of a number of blocks, the channel distorts by choosing for each block a random noise vector (instead of the adversarial choice  in Definition~\ref{def:resilientOblivious}), and the random choices for each  block are independent.

We now present formally the model. For each $h$, $\Sigma^h$ denotes the set of binary strings of length $h$, which we identify in the obvious way with $(\F_2)^h$. We assume that the codewords are tuples of elements in  $(\F_2)^n$, for some natural number $n$.  Recall that an oblivious channel with distortion $T$  is given by a set of noise vectors $E \subseteq (\F_2)^n$ of size at most  $T$.   Let $D$ be the number of blocks. A $D$-memoryless oblivious channel  takes as  input a $D$-tuple $(x_1, \ldots, x_D) $  and outputs $(x_1+e_1, \ldots, x_D +e_D)$, where each $e_i$ is chosen uniformly at random  in $E$, and the $D$ random choices are independent.
The encoding function has the type $\enc: (\F_2^k)^S \mapping (\F_2^n)^D$, for some $k$ and $S$.  The rate of the code is $(S \cdot k) /(D \cdot n) $. The decoding function $\dec$ corresponding to such a  channel  has the type $\dec : (\F_2^n)^D \mapping  (\F_2^k)^S$. 

We say that $\enc$ is a universal code $(t, \epsilon)$ resilient against $D$-memoryless oblivious  distortion, if for every $D$-memoryless oblivious channel with distortion at most $2^t$, there is some decoding function $\dec$, such that for every $m \in (\F_2^k)^{S}$, 
\begin{equation}\label{e:randdistort}
\dec(\enc(m)+e) = m,
\end{equation}
with probability $1-\epsilon$ over  $e = (e_1, \ldots, e_D)$ chosen uniformly at random  in $E^D$.
\begin{theorem}\label{t:randomdistort}
For every constant $\alpha > 0$, every $n$, every $t< n - O(1)$ (with the $O(1)$ constant depending on $\alpha$), there exists $\enc: (\Sigma^k)^S \mapping (\Sigma^n)^D$ a universal code $(t, e^{-\Omega(D)})$ resilient against random $D$-memoryless oblivious distortion, such that:
\begin{enumerate}
\item $D= 2^{n (t + O(1))}$ (with the $O(1)$ constant depending on $\alpha$),

\item $S = \lfloor(1-\alpha) D \rfloor$, $k = n-t - O(1)$ (with the $O(1)$ constant depending on $\alpha$). Consequently,  $\enc$ has  rate $S \cdot k /D \cdot n = (1-\alpha)(1-t/n-o(1))$,

\item The encoder $\enc$ is computable in time $O(n^2 D)$ (so encoding runs in time quasilinear in the bit-length of a codeword).
\item For every $D$-memoryless channel in the oblivious scenario with distortion at most $2^t$,  the corresponding decoding function $\dec$ runs in time $((nD)^2)$ (so decoding runs in time quadratic in the bit-length of a codeword).
\end{enumerate}
\end{theorem}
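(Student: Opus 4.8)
The plan is a derandomized Justesen-style concatenation, following closely the approach of Cheraghchi. \emph{Inner code:} fix a constant $\epsilon_0 = \epsilon_0(\alpha) > 0$, to be chosen at the end, and apply Proposition~\ref{p:sh1} with noise level $t$ and error $\epsilon_0$. This gives, for each $(t+\lceil\log(1/\epsilon_0)\rceil)\times n$ matrix $H$ over $\F_2$, an encoder $\enc_H \colon \zo^k \to \zo^n$ (map $m$ to the $m$-th vector of the null space of $H$, as in Remark~\ref{r:complexity1}) with $k = n - t - \lceil\log(1/\epsilon_0)\rceil = n - t - O(1)$. There are $D := 2^{d_0}$ such matrices, where $d_0 = (t + O(1))n$; let $H_1,\dots,H_D$ be a canonical enumeration of them. \emph{Outer code:} let $C_{\mathrm{out}}$ be an explicit code over the alphabet $\F_2^k$ of length $D$, rate $1-\alpha$, correcting a $\delta_0$-fraction of symbol erasures for some constant $\delta_0 = \delta_0(\alpha) > 0$, with encoding time quasilinear in $Dn$ and erasure decoding in time $O((Dn)^2)$; such a code is obtained by grouping the $Dk$ bits of a classical binary Justesen code of rate $1-\alpha$ and constant relative distance into $k$-bit blocks (the relative symbol distance of the grouped code is at least the binary relative distance, since each $k$-bit symbol carries at most $k$ differing bits). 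The final encoder is $\enc(m) = (\enc_{H_1}(c_1),\dots,\enc_{H_D}(c_D))$ where $(c_1,\dots,c_D) = C_{\mathrm{out}}(m)$; it does not depend on the channel, so the code is universal.

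Decoding, given a channel $E$ with $|E| \le 2^t$ and received blocks $\tx_1,\dots,\tx_D$: for each $i$, compute the syndrome $p_i = H_i \tx_i$, which equals $H_i e_i$ since $H_i \enc_{H_i}(c_i) = 0$, and list all $e' \in E$ with $H_i e' = p_i$. Since $e_i$ itself lies in this list, a singleton list is exactly $\{e_i\}$, so we recover $x_i = \tx_i + e_i$ and then $c_i$; otherwise we declare position $i$ erased. The crucial point is that this inner decoder \emph{never errs}: it outputs either the correct $c_i$ or an erasure, so the outer code faces only erasures, never errors. Position $i$ is erased precisely when $H_i$ is ``bad for $e_i$'', i.e.\ some $e' \ne e_i$ in $E$ has the same $H_i$-syndrome; by the union-bound argument inside the proof of Proposition~\ref{p:sh1}, for each fixed $e \in E$ one has $\Pr_{H}[H \text{ bad for } e] \le \epsilon_0$ over a uniform $H$, and — this is what makes the scheme work — this event does not depend on the transmitted message. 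The outer decoder then erasure-decodes $(c_i)$ over the unerased positions.

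Let $X_i = 1$ iff position $i$ is erased. Since $e_1,\dots,e_D$ are independent and uniform in $E$ while $H_1,\dots,H_D$ are fixed, the $X_i$ are independent, and exchanging the two sums below,
\[
  \sum_{i=1}^{D} \Pr_{e_i}[X_i = 1] = \frac{1}{|E|}\sum_{e\in E} \bigl|\{\, i : H_i \text{ bad for } e \,\}\bigr| \le \epsilon_0 D,
\]
using that $\{H_i\}$ ranges over \emph{all} $2^{d_0} = D$ matrices, so each inner count is at most $\epsilon_0 D$. Choosing $\epsilon_0 = \epsilon_0(\alpha)$ small enough that $\epsilon_0 < \delta_0$, a Chernoff bound gives $\Pr[\sum_i X_i \ge \delta_0 D] \le e^{-\Omega(D)}$; off this event the number of erasures is below the correction radius of $C_{\mathrm{out}}$, so the outer decoder recovers $(c_1,\dots,c_D)$ and hence $m$. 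This yields $(t, e^{-\Omega(D)})$-resilience.

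Finally, the bookkeeping: with $\epsilon_0$ constant we get $d_0 = (t+O(1))n$, hence $D = 2^{n(t+O(1))}$, $k = n-t-O(1)$, $S = \lfloor(1-\alpha)D\rfloor$, rate $Sk/(Dn) = (1-\alpha)(1-t/n-o(1))$, and $k \ge 1$ forces $t < n - O(1)$, with all hidden constants depending on $\alpha$ through $\epsilon_0$. Encoding is the outer encoding plus $D$ inner encodings, each a $\poly(n)$ null-space computation as in Remark~\ref{r:complexity1}, organized to run in time $O(n^2 D)$; decoding is $D$ inner syndrome searches, each $O(2^t n^2)$, plus the $O((Dn)^2)$ outer erasure decoding, which dominates since $D = 2^{n(t+O(1))} \gg 2^t$; everything is deterministic. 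The steps I expect to require the most care are (i) verifying the ``correct-or-erasure'' behavior of the inner decoder, which is exactly what keeps the outer code at rate $1-\alpha$ while handling only erasures, and (ii) the exchange of the sum over block indices with the average over inner randomness, which is legitimate precisely because the Proposition~\ref{p:sh1} failure event is message-independent and we use every matrix $H_i$. Assembling the explicit outer code (classical Justesen concatenation regrouped into $k$-bit symbols) and the efficiency estimates are routine.
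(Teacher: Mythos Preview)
Your proof is correct and takes a genuinely different route from the paper's, and in one respect a cleaner one.

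The paper also concatenates the Proposition~\ref{p:sh1} code (with the randomness hard-wired to each of the $D$ possible matrices) with an explicit constant-rate outer code, but it uses Spielman's code as the outer code and lets the inner decoder output a \emph{guess} (possibly wrong) rather than an erasure. Its analysis therefore has to bound the number of inner \emph{errors}: it does this via a double averaging, extracting a $(1-\sqrt{\epsilon})$-fraction $\mathrm{GOOD}_{\mathrm{rand}}$ of matrices and a $(1-\sqrt{\epsilon})$-fraction $\mathrm{GOOD}_{\mathrm{noise}}$ of noise vectors such that every matrix in the first set is good for every vector in the second, and then applies Chernoff to get at most a $3\sqrt{\epsilon}$-fraction of errors with probability $1-e^{-\Omega(D)}$. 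Your observation that the syndrome-list inner decoder is \emph{correct-or-erase} (never wrong) lets you replace this with a one-line exchange of sums, $\sum_i \Pr[X_i=1]=\tfrac{1}{|E|}\sum_{e\in E}|\{i:H_i\text{ bad for }e\}|\le\epsilon_0 D$, and Chernoff; the outer code then only needs to handle erasures. This is strictly simpler analytically. What the paper's choice buys is that Spielman's code decodes (errors, not just erasures) in \emph{linear} time, so the outer-decoding cost is dominated by the inner syndrome searches without further argument; your grouped-Justesen outer code with generic Gaussian-elimination erasure decoding would naively cost $O((Dk)^3)$, not the claimed $O((Dn)^2)$, so to match item~4 you should either swap in Spielman's code (which of course also corrects erasures in linear time) or justify a faster erasure decoder for your specific outer code.
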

\begin{remark}\label{r:qualityJustesen}
  How good is the rate of the universal code in  Theorem~\ref{t:randomdistort}? Some channels require codes with rate at most,  essentially,  $1-t/n$.
Indeed, consider a Hamming channel $G=(\mcX, \mctX, E \subseteq \mcX \times \mctX)$ with distortion at most $2^t$ (recall that this means that all \emph{right} degrees are at most $2^t$). Suppose the channel has the property that all (or almost all) \emph{left} degrees are at least $2^t$.  This is the case of any oblivious channel (where all left degrees and all right degrees are equal), and also the case of the binary symmetric channel (BSC), for the appropriate definition of $t$.  
By a sphere-packing  argument similar to the one in the proof of Theorem~\ref{t:rate} (see Remark~\ref{r:rateub}), any code that satisfies ~\eqref{e:randdistort} has rate at most $1-t/n + \log(1/(1-\epsilon))/n$. The universal code in Theorem~\ref{t:randomdistort} is quite good for such a channel because its rate is essentially within a factor $1-\alpha$ of the upperbound, and $\alpha$ is an arbitrarily small constant. However some channels have left degree much smaller than the maximum right degree (for example, deletion channels). For such channels, there may exist specific codes tailored for them with better rate than the universal code in  Theorem~\ref{t:randomdistort}.
\end{remark}
\medskip

\noindent
\emph{Proof sketch of Theorem~\ref{t:randomdistort}.}  For full details, we refer to Section~\ref{s:proofranddistort}. The encoding  is obtained by concatenating an ``outer" code $\enc_{out}$ with $D$ ``inner" codes $\enc_{\rho_1}, \ldots, \enc_{\rho_D}$. The inner codes are obtained from the private code $\enc$ in Proposition~\ref{p:sh1}, by fixing the randomness to every $d$-bits string, i.e., $\enc_{\rho}(\cdot) \stackrel{def.}{=} \enc(\cdot, \rho)$. Each inner code maps a $k$-bit string to an  $n$-bit string.  The outer code works with words over the alphabet $\Sigma^k$, and maps an $S$-symbol word $(m_1, \ldots, m_S)$ into a $D$-symbol word $(c_1, \ldots, c_D)$.
Next, in the concatenation step, each $c_i$ is encoded with the inner code $\enc_{\rho_i}$.
\[
 (m_1, \ldots, m_S)  \xrightarrow[\text{\quad outer code \quad }]{} (c_1, \ldots, c_D)   \xrightarrow[\text{\quad \quad  \quad inner codes \quad \quad \quad }]{} (\enc_{\rho_1}(c_1), \ldots,  \enc_{\rho_D}(c_D))
\]
As the outer code we use Spielman's error correcting code~\cite{spi:j:code}. 
This code works over an arbitrary alphabet, has rate $(1-\alpha)$ (for arbitrary $\alpha > 0$),  can correct from a constant fraction of errors, and has encoder and decoder  running  in time linear in the codeword bit-length. The decoder function  of the concatenated code works in the natural way. First it decodes using the  decoding functions of the inner codes. With high probability, all decoding functions, except a small fraction, will decode correctly. Next, the decoding function of the outer code will fix the constant fraction of errors.



\section{Universal codes for the Hamming scenario}\label{s:hamming}

Recall that a Hamming channel is defined by a bipartite graph where the left set $\mcX$ represent codewords and the right set $\mctX$ represent distorted codewords. Each left degree is at least~1, and the distortion is the maximal right degree. 
Also recall definition~\ref{def:resilientHamming} of resilience in the Hamming setting. 

\begin{definition*}[Restated]
  \defReslienceHamming
\end{definition*}

\bigskip
\noindent
We first prove Theorem~\ref{t:bruno}, 
which states that for all $n,t$ and $\epsilon>0$, there exist $(t,\eps)$-resilient codes with $k \ge n-t-\lceil \log \tfrac 1 \epsilon \rceil$ that use $d = 2n$ bits of shared randomness.  Next we consider two restrictions of the Hamming scenario. In Section~\ref{s:spacehamming}, we restrict the computational power of the  channel to algorithms that use space bounded by a given polynomial, and,  under a hardness assumption, we show that there exists a universal code for this  class of channels that uses $O(\log n)$ randomness. In Section~\ref{s:justesen}, we consider weaker channels that choose the distortion on short blocks of the codeword, rather than on the whole codeword, and we show that  there is a  universal code for this class of channels with  polynomial-time encoding and decoding algorithms.

\subsection{Proof of  Theorem~\ref{t:bruno}.}\label{s:proofbruno}

\begin{samepage}
  \begin{theorem*}[Restated]
\thmBruno
\end{theorem*}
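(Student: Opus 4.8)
The statement to prove is Theorem~\ref{t:bruno}: for all $n,t,\epsilon>0$ there is a polynomial-time $(t,\epsilon)$-resilient private code $\enc:\zo^k\times\zo^{2n}\to\zo^n$ with $k\ge n-t-\lceil\log\tfrac1\epsilon\rceil$. The natural approach is random coding plus derandomization of the coding distribution via pairwise-independent hashing, exactly as the introduction hints ("random coding and the well-known technique of pairwise-independent hashing to reduce the number of shared random bits from exponential to linear in $n$"). First I would describe the construction: fix a pairwise-independent (indeed we want affine) family of hash functions $h_\rho:\zo^k\to\zo^n$ indexed by $\rho\in\zo^{2n}$, realized for instance as $h_\rho(m)=Am+b$ where $\rho$ encodes a suitable structured $(A,b)$; using a field $\F_{2^n}$ identification, $m\mapsto \alpha\cdot \iota(m)+\beta$ with $\alpha,\beta\in\F_{2^n}$ and $\iota$ the natural embedding of $\zo^k$ into $\F_{2^n}$ gives a pairwise-independent affine family with exactly $2n$ bits of seed. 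Set $\enc_\rho(m)=h_\rho(m)$.

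**Decoder and analysis.** Given a Hamming channel with distortion $\le 2^t$ and its channel function $\Ch$, the decoder $\dec_\rho$ on input $\tx$ looks at the set $N(\tx)$ of left neighbors of $\tx$ in the channel graph (size $\le 2^t$), computes the preimages under $h_\rho$, and outputs a message $m'$ such that $h_\rho(m')\in N(\tx)$ — picking, say, the first such $m'$ in some fixed order, or declaring failure if none/ambiguity. Fix $m$ and $\Ch$. The true codeword is $x=h_\rho(m)$ and $\tx=\Ch(x)$, so $x\in N(\tx)$ always. Decoding fails only if some other message $m'\ne m$ also has $h_\rho(m')\in N(\tx)$. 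The key point is that $N(\tx)$ depends on $\rho$ only through $x=h_\rho(m)$, which by pairwise independence is, conditioned on the value of $h_\rho(m)$, still leaving $h_\rho(m')$ uniform over $\zo^n$ for each fixed $m'\ne m$. So for each fixed $m'\ne m$,
\[
\Pr_\rho\bigl[h_\rho(m')\in N(\tx)\bigr]\;\le\;\frac{|N(\tx)|}{2^n}\;\le\;\frac{2^t}{2^n},
\]
where I must be slightly careful that $N(\tx)$ is itself $\rho$-dependent — this is handled by conditioning on $x=h_\rho(m)=\xi$ for each value $\xi$, noting $N(\Ch(\xi))$ is then a fixed set of size $\le 2^t$ and $h_\rho(m')$ is uniform and independent given this conditioning (pairwise independence), then averaging over $\xi$. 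A union bound over the at most $2^k$ competing messages gives failure probability $\le 2^{k+t-n}$, which is $\le\epsilon$ precisely when $k\le n-t-\log\tfrac1\epsilon$, i.e. $k=n-t-\lceil\log\tfrac1\epsilon\rceil$ works.

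**Loose ends and the main obstacle.** A few things need care: (i) the decoder must be allowed to fail or misreport when $N(\tx)$ contains several codewords — since we only need success with probability $1-\epsilon$, the event "$\dec_\rho(\Ch(\enc_\rho(m)))=m$" is contained in the complement of the bad union-bound event, so this is fine. (ii) One should check that $2n$ bits genuinely suffice for an affine pairwise-independent family from $\zo^k$ to $\zo^n$ with $k\le n$; the $\F_{2^n}$ construction $m\mapsto\alpha\iota(m)+\beta$ uses exactly $2n$ bits and is pairwise independent on distinct inputs, and it is affine over $\F_2$, matching Remark~\ref{r:linear}'s claim that the code is affine for fixed randomness. (iii) Polynomial-time encoding is immediate (field arithmetic in $\F_{2^n}$); the decoder's complexity is as in Remark~\ref{r:complexity} — it enumerates left neighbors of $\tx$ and inverts an affine map, hence polynomial in the neighbor-enumeration time and polynomial space with oracle access. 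I expect the only real subtlety — the "main obstacle" — to be the conditioning argument making the bound $\Pr_\rho[h_\rho(m')\in N(\Ch(h_\rho(m)))]\le 2^{t-n}$ rigorous despite $N(\cdot)$ depending on the same randomness $\rho$ as $h_\rho(m)$; everything else is routine. This is resolved cleanly by first fixing the value $h_\rho(m)=\xi$ (which determines $\tx=\Ch(\xi)$ and hence the target set $N(\tx)$) and then using pairwise independence to conclude $h_\rho(m')\mid\{h_\rho(m)=\xi\}$ is uniform on $\zo^n$.
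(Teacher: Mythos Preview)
Your proposal is correct and takes essentially the same approach as the paper: random coding via an affine pairwise-independent family $m\mapsto am+b$ over $\F_{2^n}$, with the same conditioning argument (fix $h_\rho(m)=\xi$ so that the channel's output and hence the neighbor set $N(\tx)$ become deterministic, then use pairwise independence to bound collisions with other messages, and finish with a union bound). The paper presents the fully random code first and then observes that pairwise independence suffices, whereas you go straight to the derandomized version, but the substance is identical.
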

\end{samepage}

\bigskip
\noindent
Let $N=2^n$ and $T=2^t$. 
It is enough to prove the theorem for $\epsilon$ being a power of~$2$.
We identify the set of messages with $[K] = \{1, 2, \ldots, K\}$ where $K = \epsilon N/T$. Thus the result follows for $k = \log K$. 

We first construct a code that uses $Kn$ shared random bits. We split these bits in $K$ strings of length $n$ and denote them by
\begin{equation}\label{e:rho}
\rho= (\rho_1, \ldots, \rho_K)
\end{equation}
Thus each $\rho_i$ is  an $n$-bit string chosen independently at random. We define the encoding function $\enc$ by $\enc_\rho(m) = \rho_m$, for each $m \in [K]$. 

We need to prove that this code is $(t,\eps)$-resilient. 
Consider a channel from $\mcX$ to $\mctX$, and for any $\tx \in \mctX$, let $B_\tx$ be the set of left neighbours of $\tx$ in the bipartite graph. 
The size of $B_\tx$ is at most $T$. 
For a fixed $\tx \in \mctX$, by the union bound, the probability that there exists $m' \in [K]$ such that $\enc_\rho(m') \in B_\tx$ is at most $K \cdot T/N \le \epsilon$.  

Let $m \in [K]$ and let $x = \enc_\rho(m)$.  The string $x$ is independent of the value of $\enc_\rho(m')$, for every $m' \in [K]-\{m\}$, and thus, for every channel and every channel function $\cha$, the value of $\cha(x) = \cha(\enc_\rho(m))$ is also independent of $\enc_\rho(m')$. 
Therefore, the probability that for some $m' \not = m$ we have $\enc_\rho(m') \in B_{\cha(x)}$, is also at most~$\epsilon$.  Consequently, with probability at least $1-\epsilon$, one can recover $m$ from $\cha(x)$ and $\rho$ by exhaustive search. 

We now reduce the number of shared random bits from $Kn$ to $2n$. The observation is that in the above argument we only need that the codewords $\enc_\rho(1), \ldots, \enc_\rho(K)$ are pairwise independent. It is well-known that if we pick at random $a, b$ in the field $\F_{2^n}$, and consider the function $h_{a,b}(x) = ax+b$, the values $h_{a,b}(1), h_{a,b}(2), \ldots, h_{a,b}(N-1)$ are pairwise independent. Therefore we replace in $\rho$ from Equation~\eqref{e:rho} each $\rho_i$  by  $h_{a,b}(i)$, for $i=1, \ldots, K$. Now the encoder and the decoder only need to share $a$ and $b$ and the conclusion follows. 

\subsection{Universal codes for space-bounded Hamming channels}
\label{s:spacehamming}

In Theorem~\ref{t:bruno} the number of random bits has been reduced from exponential to $2n$ by pairwise-independent  hashing. If the graph that defines the channel is computationally bounded then the number of random bits can be further reduced to $O(\log n)$ under a reasonable hardness assumption that implies the existence of a convenient pseudo-random generator. 

We consider channels given by graphs for which the edge relation is computable in SPACE[$n^\ell$], for a fixed constant $\ell$. More precisely, for any $\ell$, a SPACE$[n^\ell]$ computable graph is a family of bipartite graphs $(G_n)$, indexed by $n \in \nat$, where the bipartite graphs  have the form $G_n = (\mcX_n = \zo^n, \mctX_n = \zo^{\tilde{n}(n)}, E_n \subseteq \mcX_n \times \mctX_n)$, $\tilde{n}(n)$ is bounded by a polynomial in $n$, 
 and  such that there exists an  algorithm running in space bounded by $n^\ell$ that on input $(x,y)$ returns $1$ if $(x, y)$ is an edge in $G_{|x|}$, and $0$ if it is not. We say that a family of channels is in SPACE[$n^\ell$], if the corresponding family of graphs is in SPACE[$n^\ell$]. 

We show, conditioned on  a hardness assumption, that,   for every constant $\ell$, there exists a private universal code resilient to all families of  channels in SPACE[$n^\ell$], that has optimal rate, and that uses $O(\log n)$ random bits.
\medskip

\textbf{Hardness asumptions and pseudo-random generators}
\smallskip


We use pseudo-random generators that extend  a seed of length $O(\log n)$ to a string of length $n$ in time polynomial in $n$, and such that the output ``looks" uniformly random to certain predicates $A$ of bounded complexity. Formally, a pseudo-random generator $g: \zo^{c \log n} \mapping \zo^n$ fools a predicate $A$ if, for $S$ the uniform distribution on the domain of $g$, and $U$ the uniform distribution on the range of $g$,
\[
| \prob  [A(g(S))=1] - \prob  [A(U)=1] | < 1/n.
\]
 Klivans and van Melkebeek~\cite{km:j:prgenoracle}, relativizing with oracles  the seminal results of  Impagliazzo and Wigderson~~\cite{imp-wig:c:pbpp} and  Nisan and Wigderson~\cite{nis-wig:j:hard},  have shown  that certain hardness assumptions imply the existence of pseudo-random generators of the type that we need. Let $f:\zo^{*} \mapping \zo$ be some function, $A \subseteq \zo^{*}$ be  a set (viewed also as a predicate via the identification with its characteristic function), and let us denote $C_f^{A}(n)$ the size of a smallest circuit with oracle $A$ gates (besides the standard AND, OR, NO gates) that computes the function $f$ for inputs of length $n$.  We denote $\textrm{ E} = \cup_{c > 0} {\rm DTIME}[2^{cn}]$
 \medskip

\emph{Assumption $H(A)$:}
There exists a function $f$ in $\textrm{E}$ such that, for some $\epsilon > 0$, $C_f^{A}(n) > 2^{\epsilon n}$.
\medskip

Let $\textrm{SIZE}^A [n^k]$ denote the set of circuits  of size at most $n^k$ which have oracle $A$ gates  besides the standard gates AND, OR, NOT. 

\begin{theorem}[Klivans and van Melkebeek~\cite{km:j:prgenoracle}]\label{t:prg} If $H(A)$ is true, then for every $k$, there exists a constant $c$ and a pseudo-random generator $g: \zo^{c \log n} \mapping \zo^n$ that is computable in time polynomial in $n$ and fools every  predicate computable by a circuit in $\textrm{SIZE}^A [n^k]$.
\end{theorem}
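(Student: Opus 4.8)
The plan is to reprove this as an instance of the standard hardness-versus-randomness paradigm: worst-case hardness $\Rightarrow$ average-case hardness (Impagliazzo--Wigderson) $\Rightarrow$ a Nisan--Wigderson pseudo-random generator, and then to observe that every construction and every reduction along the way relativizes with respect to the oracle $A$ --- this last observation being precisely the content of Klivans--van Melkebeek. Throughout, I will systematically replace "circuit of size $s$" by "circuit of size $s$ with oracle $A$-gates", and the point to verify is that no step ever needs to "look inside" an adversary circuit: each reduction treats the distinguisher as a black box and uses the hard function only as a subroutine, so $A$ passes through untouched.

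\textbf{Step 1 (worst-case to average-case amplification).} Starting from the function $f \in \textrm{E}$ furnished by $H(A)$, with $C_f^A(n) > 2^{\epsilon n}$, I would apply a relativizing worst-case-to-average-case amplification (concatenating a Reed--Muller--type code with a Hadamard-type code and running the local list-decoding / XOR-lemma machinery) to obtain $f' \in \textrm{E}$ and a constant $\epsilon' > 0$ such that, for all large $m$, no circuit in $\textrm{SIZE}^A[2^{\epsilon' m}]$ agrees with $f'$ on more than a $\tfrac12 + 2^{-\epsilon' m}$ fraction of $m$-bit inputs. Since $f' \in \textrm{E}$, on inputs of length $m = O(\log n)$ it is computable in time $2^{O(m)} = \poly(n)$. \textbf{Step 2 (the Nisan--Wigderson generator).} Fix $m = a \log n$ with $a$ a large enough constant (depending on $k$ and $\epsilon'$) so that $2^{\epsilon' m} \ge n^{k+2}$. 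Construct, in time $\poly(n)$, a combinatorial design $S_1, \ldots, S_n \subseteq [c\log n]$ with each $|S_i| = m$ and $|S_i \cap S_j| \le \log n$ for $i \ne j$; the standard construction gives $c = O(m^2/\log n) = O(1)$. Define $g(z) = f'(z|_{S_1}) \, f'(z|_{S_2}) \cdots f'(z|_{S_n})$ for $z \in \zo^{c\log n}$; this is $\poly(n)$-time computable (at most $n$ evaluations of $f'$ on $O(\log n)$-bit inputs, plus the design).

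\textbf{Step 3 (relativized reconstruction).} Suppose toward a contradiction that some distinguisher $D \in \textrm{SIZE}^A[n^k]$ satisfies $|\Pr[D(g(U_{c\log n}))=1] - \Pr[D(U_n)=1]| \ge 1/n$. A hybrid argument over the $n$ output positions, together with Yao's next-bit predictor conversion, yields an $A$-oracle circuit of size $n^k + O(1)$ predicting some bit $f'(z|_{S_i})$ from the earlier bits with advantage $\ge 1/n^2$. Now hard-wire the seed coordinates outside $S_i$: each earlier bit $f'(z|_{S_j})$ then depends on at most $|S_i \cap S_j| \le \log n$ free variables, so it is computed by a brute-force circuit of size $\le 2^{\log n} = n$; summing over the $< n$ earlier bits and adding $D$, we get an $A$-oracle circuit of size $\poly(n) < 2^{\epsilon' m}$ computing $f'$ on $m$-bit inputs with advantage $1/n^2 > 2^{-\epsilon' m}$, contradicting Step 1. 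Hence $g$ fools every predicate in $\textrm{SIZE}^A[n^k]$, is computable in time $\poly(n)$, and has seed length $c\log n$ for a constant $c$, as claimed.

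The main obstacle --- really the only substantive point --- is the relativization bookkeeping: one must check that the amplification of Step 1 and the predictor-extraction of Step 3 are oracle-oblivious, i.e. they only query the adversary circuit as a black box and never exploit its internal gate structure, so that $A$-gates can be carried through verbatim. Once that is granted, the parameter calculations (choosing $a$ so that the rebuilt circuit stays below $2^{\epsilon' m}$ while the prediction advantage stays above $2^{-\epsilon' m}$) are routine, and the rest is the textbook Nisan--Wigderson / Impagliazzo--Wigderson argument with $\textrm{SIZE}^A[\cdot]$ substituted uniformly for $\textrm{SIZE}[\cdot]$.
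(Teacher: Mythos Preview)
The paper does not prove this theorem at all: it is quoted as a black-box result from Klivans and van~Melkebeek~\cite{km:j:prgenoracle} and simply invoked in the proof of Theorem~\ref{t:spacebounded}. Your sketch is a faithful outline of the original Klivans--van~Melkebeek argument (relativized Impagliazzo--Wigderson amplification followed by a relativized Nisan--Wigderson generator), so in that sense it is correct and matches the source the paper cites, but there is nothing in the present paper to compare it against.
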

In our application the set $A$ will be some PSPACE complete problem, say TQBF. 
For such $A$, one can replace $H(A)$ by the following hardness assumption $H_1$  that is less technical and is still plausible.
 \medskip

\emph{Assumption $H_1$:}
There exists a function $f$ in $\textrm{E}$ which is not computable in space $2^{o(n)}$. 
\medskip

 More explictly, this means that $f$ is in $\textrm{E}$, and for every machine $M$ that computes $f$ there exists a constant $\epsilon > 0$ such that, for all sufficiently large $n$, $M$ requires space at least $2^{\epsilon n}$, on some input of length $n$.
\begin{theorem}[Miltersen~\cite{mil:b:derandsurvey}]
For every $A$ in PSPACE/poly, $H_1$ implies $H(A)$.
\end{theorem}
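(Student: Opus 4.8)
The plan is to argue the contrapositive inside the class $\textrm{E}$: assuming $A \in$ PSPACE/poly, I will show that if $H(A)$ fails — so that \emph{every} function in $\textrm{E}$ has $A$-oracle circuits of size $2^{o(n)}$ — then every function in $\textrm{E}$ is computable in uniform $\textrm{SPACE}[2^{o(n)}]$, which contradicts $H_1$. The usual infinitely-often versus almost-everywhere bookkeeping in such hardness statements is routine, and I suppress it below.

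The core step is a space-efficient evaluation of bounded $A$-oracle circuits. Fix a polynomial $p$ and advice strings $(\alpha_m)_m$ witnessing $A \in$ PSPACE/poly, so that $A$ restricted to inputs of length $m$ is decided in space $p(m)$ given $\alpha_m$, with $|\alpha_m| \le p(m)$. Given an $A$-oracle circuit $D$ of size $s$ on $n$-bit inputs, one evaluates $D$ gate by gate in depth-first order using $O(s)$ bits of bookkeeping; each time an $A$-gate is reached, its query string $q$ has length $|q| \le s$, so $A(q)$ is computed in space $p(s)$ from $\alpha_{|q|}$, and this scratch space is reused across gates. Thus, given $D$ together with the bundle $\{\alpha_m : m \le s\}$ (of total length $\poly(s)$) as advice, the function computed by $D$ is evaluable in space $\poly(s)$. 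In particular, $C_f^A(n) \le 2^{\epsilon n}$ implies that $f$ on length $n$ is computable in space $2^{O(\epsilon n)}$ — but, so far, only with $2^{O(\epsilon n)}$ bits of \emph{non-uniform} advice.

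Removing this advice is the step I expect to be the main obstacle, and it is where "$A$ is PSPACE-like" is genuinely used. The idea is to apply $\neg H(A)$ not to an arbitrary member of $\textrm{E}$ but to a suitable language $L$ that is complete for $\textrm{E}$ under reductions increasing the instance length by only an additive $O(n)$ and whose $\textrm{E}$-computations admit a local, arithmetizable characterization — so that the interactive-proof / self-correction machinery (the one behind IP $=$ PSPACE and behind "EXP $\subseteq$ P/poly $\Rightarrow$ EXP $=$ MA") applies. Concretely: from size-$2^{\epsilon n}$ $A$-oracle circuits for $L$ one can, by a brute-force search over circuits, locate and \emph{certify} the correct circuit at each length using only a sumcheck-style verifier together with oracle calls to $A$ on strings of length $\le 2^{\epsilon n}$ — which are answerable in space $2^{O(\epsilon n)}$ since $A$ lies in polynomial space on these lengths — and no further advice. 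This yields a uniform $\textrm{SPACE}[2^{O(\epsilon n)}]$ algorithm for $L$; letting $\epsilon \to 0$ and invoking $\textrm{E}$-completeness under length-$O(n)$ reductions gives $\textrm{E} \subseteq \textrm{SPACE}[2^{o(n)}]$, contradicting $H_1$. In the write-up I would either isolate "$L$ has size-$2^{o(n)}$ $A$-oracle circuits $\Rightarrow$ $L \in$ uniform $\textrm{SPACE}[2^{o(n)}]$" as a lemma, or else defer to Miltersen~\cite{mil:b:derandsurvey} for that step.
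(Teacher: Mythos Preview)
The paper gives no proof of this theorem: it is quoted from Miltersen's survey and used as a black box, so there is no in-paper argument to compare against. Your closing option --- deferring to~\cite{mil:b:derandsurvey} for the hard step --- is exactly what the paper does for the entire statement.

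Your outline is the standard one and is essentially correct, but there is a slip in the second step. You write that the oracle calls to $A$ arising when evaluating a candidate circuit are ``answerable in space $2^{O(\epsilon n)}$ since $A$ lies in polynomial space on these lengths --- and no further advice.'' The hypothesis, however, is $A\in\textrm{PSPACE/poly}$, not $\textrm{PSPACE}$: deciding $A$ on a length-$m$ input already requires the advice string~$\alpha_m$, so the ``no further advice'' claim is not justified as stated. The repair is the one you yourself set up in the first paragraph: fold the bundle $\{\alpha_m : m \le 2^{\epsilon n}\}$ (total length $2^{O(\epsilon n)}$) into the brute-force search together with the circuit. The instance checker for the $\textrm{E}$-complete language $L$ then certifies that the \emph{combined} candidate (circuit, advice-bundle) computes $L$ correctly on the queried instances, which is all that is needed --- whether the $\alpha_m$ happen to be the ``true'' PSPACE/poly advice for $A$ is irrelevant. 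With that adjustment the uniform $\textrm{SPACE}[2^{O(\epsilon n)}]$ conclusion goes through along the lines you indicate.
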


\begin{theorem}\label{t:spacebounded} Assume $H(\textrm{TQBF})$ is true. Then, for every $\ell$, there exists  $c$ with the following property: 
  For every $n,  t, \epsilon > 0$, there exists a polynomial time computable private code $\enc: \zo^k \times \zo^d \mapping \zo^n$ that is $(t, \epsilon)$-resilient  in the Hamming scenario to every family of channels in SPACE[$n^\ell$] such that
\begin{itemize}
\item $k \geq n - t - (\lceil \log(1/\epsilon) \rceil + \log n+ 1)$ ,
\item The encoder $\enc$ and the decoder functions $\dec$  share $d = c \log n$ random bits.
\end{itemize}
Moreover, given oracle access to a channel in Definition~\ref{def:resilientHamming}, there exists  a corresponding decoding function running in space polynomial in $n$.
\end{theorem}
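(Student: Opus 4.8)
The plan is to derandomize the code of Theorem~\ref{t:bruno}. That code uses the pairwise-independent family $h_{a,b}(m)=am+b$ over $\F_{2^n}$, hence $2n$ shared bits; I would replace the seed $(a,b)$ by $g(s)$, where $g\colon\zo^{c\log n}\mapping\zo^{2n}$ is the pseudo-random generator of Theorem~\ref{t:prg}, and set $\enc_s(m)=h_{g(s)}(m)$ (discarding or remapping the negligible fraction of seeds with $a=0$; write $h_\rho=h_{a,b}$ for $\rho=(a,b)$). Since $g$ is polynomial-time computable, so is $\enc$. Put $K=2^k$ with $k=n-t-(\lceil\log(1/\eps)\rceil+\log n+1)$, so that $KT/N\le\eps/(2n)$ ($N=2^n$, $T=2^t$). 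Given a channel graph $G$ in SPACE$[n^\ell]$ and a received $\tx$, the decoder, knowing $s$ and with oracle access to the edge relation, enumerates $B_\tx$ (the at most $T$ left-neighbours of $\tx$, using an $n$-bit counter and the space-$n^\ell$ oracle), tests each $y\in B_\tx$ for being a codeword ($(y-b)/a\in[K]$, in polynomial time), and outputs the unique message with codeword in $B_\tx$ if there is exactly one; this runs in space $\poly(n)$, which gives the ``moreover''.

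Resilience requires: for every $G\in$ SPACE$[n^\ell]$, every channel function $\cha$, and every $m$, decoding succeeds with probability $\ge1-\eps$ over $s$. As in Theorem~\ref{t:bruno}, decoding of $m$ under $\cha$ fails exactly when $\enc_s(m')\in B_{\cha(\enc_s(m))}$ for some $m'\ne m$; call this event $\mathrm{fail}_m^{\cha}(s)$. The decisive observation is that, for the seed distribution $\mathcal R=g(U_{c\log n})$, the worst channel function can be taken to be $\cha^{\mathcal R}(u)=\arg\max_{\tx:(u,\tx)\in E}\Pr_{\rho\sim\mathcal R}[\exists m'\ne m:h_\rho(m')\in B_\tx\mid h_\rho(m)=u]$: decomposing $\Pr_{\rho\sim\mathcal R}[\mathrm{fail}_m^{\cha}]$ over the value of $h_\rho(m)$ and using that $\cha$ contributes only the single right-neighbour $\cha(u)$ shows $\Pr_{\rho\sim\mathcal R}[\mathrm{fail}_m^{\cha}]\le\Pr_{\rho\sim\mathcal R}[\mathrm{fail}_m^{\cha^{\mathcal R}}]$ for \emph{every} $\cha$, computable or not. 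Crucially, $\mathcal R$ has only $n^c$ points and $g$ is polynomial-time computable, so $\cha^{\mathcal R}$, and with it the map $s\mapsto\mathrm{fail}_m^{\cha^{\mathcal R}}(s)$, is computable in space $\poly(n)$: one ranges over candidate right vertices $\tx$ with a $\poly(n)$-bit counter, computes each conditional probability exactly by summing over all $n^c$ seeds (each handled by polynomial-time evaluations of $g$ and $h$ together with an edge query answered in space $n^\ell$), and keeps the running argmax. Since PSPACE reduces to TQBF in polynomial time, this predicate lies in $\mathrm{SIZE}^{\mathrm{TQBF}}[n^{k_0}]$ for a suitable constant $k_0$; feeding $k_0$ into Theorem~\ref{t:prg}, $g$ fools it, so $\Pr_s[\mathrm{fail}_m^{\cha^{\mathcal R}}(s)]\le\Pr_{\rho\ \text{uniform}}[\mathrm{fail}_m^{\cha^{\mathcal R}}(\rho)]+1/n$. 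For \emph{uniform} $\rho$ the pair $(h_\rho(m),h_\rho(m'))$ is uniform over $\F_{2^n}^2$, and $\cha^{\mathcal R}(u)$ is always a right vertex, so $|B_{\cha^{\mathcal R}(u)}|\le T$; a union bound over $m'\ne m$ gives $\Pr_{\rho\ \text{uniform}}[\mathrm{fail}_m^{\cha^{\mathcal R}}(\rho)]\le(K-1)T/N\le\eps/(2n)$. Chaining, $\Pr_s[\mathrm{fail}_m^{\cha}(s)]\le\Pr_s[\mathrm{fail}_m^{\cha^{\mathcal R}}(s)]\le\eps/(2n)+1/n\le\eps$ (for $\eps$ at least inverse-polynomial, the only non-vacuous regime); the extra $\log n+1$ bits lost in the rate are exactly what pays for the $1/n$ fooling error.

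The step I expect to be the real obstacle is the treatment of the \emph{arbitrary} channel function. The obvious space-bounded failure predicate ``$\enc_\rho(m)$ has a right-neighbour whose left-neighbourhood contains two codewords'' \emph{does} dominate $\mathrm{fail}_m^{\cha}$ for all $\cha$, but when $G$ has very large left-degrees its probability under a uniform seed is close to $1$, so it cannot be the predicate we fool; one is forced onto the $\cha^{\mathcal R}$-optimal predicate, which makes $g$ itself appear inside the computation that $g$ must fool. Choosing the hardness parameter $k_0$ large enough to dominate the polynomial cost of evaluating $g$ (and the space-$n^\ell$ edge tests) without circularity is the delicate point; it is handled exactly as in the conditional derandomizations of \cite{tre-vad:c:psamplextractor,vin-zim:j:compression} and the other references cited after Theorem~\ref{t:prg}, and once the constants are fixed the remainder is bookkeeping.
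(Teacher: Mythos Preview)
Your overall plan---replace the $2n$-bit seed $(a,b)$ in Theorem~\ref{t:bruno} by $g(s)$ for a PRG $g$ obtained from Theorem~\ref{t:prg}---is exactly the paper's approach. The difference lies in how the arbitrary channel function $\cha$ is handled. The paper does \emph{not} pass through your optimal-channel construction $\cha^{\mathcal R}$. Instead it parameterizes the distinguishing predicates by the \emph{neighbor index}: for each pair $(m,j)$ it defines $A_{m,j}(a,b)=1$ iff either $h_{a,b}(m)$ has fewer than $j$ neighbours, or the $j$-th neighbour $y$ of $h_{a,b}(m)$ satisfies $h_{a,b}(m')\notin B_y$ for all $m'\ne m$. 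Each $A_{m,j}$ is computable in space polynomial in $n$ using only the SPACE$[n^\ell]$ edge relation---crucially, with no reference to $g$---so the circuit-size parameter $k_0$ for Theorem~\ref{t:prg} depends only on $\ell$, and the circularity you flag simply does not arise. The paper then observes that $\Pr_{a,b}[A_{m,j}(a,b)=1]\ge 1-\epsilon$ for every $(m,j)$, so $\Pr_s[A_{m,j}(g(s))=1]\ge 1-\epsilon-1/n$, and concludes.

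You are right, however, that this last step deserves scrutiny: fooling each $A_{m,j}$ separately does not by itself bound the ``diagonal'' probability $\Pr_s[\neg A_{m,J(s)}(g(s))]$ with $J(s)=j_\cha(h_{g(s)}(m))$, because an adversarial $\cha$ could in principle pick, for each $s$, the one bad $j$. The paper's argument goes through cleanly if one reads it as covering channel \emph{functions} that are themselves SPACE$[n^\ell]$-computable (then the single predicate $\rho\mapsto A_{m,j_\cha(h_\rho(m))}(\rho)$ is in polynomial space and is the one to fool, still with no circularity). Your $\cha^{\mathcal R}$ device is what one needs for genuinely arbitrary $\cha$, but then the self-reference is not mere bookkeeping: the circuit for $\mathrm{fail}_m^{\cha^{\mathcal R}}$ must enumerate the $n^c$-point support of $g$, so its SIZE$^{\mathrm{TQBF}}$ bound already depends on $c$, while $c$ in turn is fixed only after the target circuit size is. Whether this loop closes depends on the quantitative relation between seed length and running time in the Klivans--van~Melkebeek generator, and you should not wave it away. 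In short: the paper's route is cleaner (no circularity) but, read literally against arbitrary $\cha$, shares the gap you identified; your route closes that gap at the cost of a genuine fixed-point issue you have not actually discharged.
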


\begin{proof}
We use the notation from the proof of Theorem~\ref{t:bruno}, and follow the construction from this proof. Let us fix $(G_n) $ a family of graphs that is  SPACE$[n^\ell]$ computable and also fix $n$.  Consider the channel defined by $G_n$.   Recall that for $a$ and $b$ in $\F_{2^n}$, we use the function $h_{a,b} (x) = ax + b$.  Each message $m \in [K]$ is encoded as $h_{a,b}(m)$, where the pair $(a,b)$ is the randomness shared between the encoder and the decoder. Recall that for every message $m$, and for every right neighbor $y$  of $x = h_{a,b}(m)$,  with high probability of $(a,b)$, the encoding of no  message other than $m$ is adjacent to $y$.  This allows decoding from $y$ and justifies the following predicate.

  We say that a pair $(a,b)$ is good for $(m,j)$  (where $j$ is a natural number that is at most $2^{\tilde{n}(n)}$) if
\begin{enumerate}
\item  $x = h_{a,b}(m)$ (viewed as a left node in $G_n$) has less than $j$, neighbors  or 
\item if $y$ is the $j$-th neighbor of $x$, for every $m' \in [K]- \{m\}$, $h_{a,b}(m')$ is not a neighbor of $y$. 
\end{enumerate}

For every $(m, j)$, we define the predicate $A_{m, j}$ which  on input $(a,b)$ returns $1$ if $(a,b)$ is good for  $(m, j)$, and  $0$ if it is not good. There exists an algorithm that on input $(m, j, a, b)$ computes  the predicate $A_{m,j}$ on $(a,b)$ and uses  space bounded by a fixed polynomial in $n$ (which is the same for all graphs in  SPACE$[n^\ell]$). Therefore this algorithm is computable by a circuit in  $\textrm{SIZE}^{\textrm{TQBF}} [n^{\ell'}]$, for some constant $\ell'$ which, again,  is the same for all graphs in  SPACE$[n^\ell]$. The last assertion holds because in the standard proof of the PSPACE completeness of TQBF (for example, see~\cite{sip:b:computationthree}), when we reduce a problem in PSPACE to TQBF, the running time of the reduction (and therefore also the size of the circuit computing it)  depends only on the space complexity of the problem. 

We work under the assumption that $H(\textrm{TQBF})$ is true. Theorem~\ref{t:prg} used for $A = \textrm{TQBF}$ and $k = \ell'$ gives a pseudo-random generator $g$ that fools all the predicates $A_{m,j}$ (for all graphs in SPACE$[n^\ell]$), uses a seed of length $c \log n$ (where $c$ is a constant that depends on $\ell$), and is computable in time polynomial in $n$.

It is shown in Theorem~\ref{t:bruno}, that for all $(m,j)$, for random $(a,b)$, the predicate $A_{m,j}(a,b)$ returns $1$ with probability $1-\epsilon$.  Consequently, if we replace the random $(a,b)$ by $g(s)$ with a random seed $s$, the predicate returns $1$ with probability $1-\epsilon'$ for $\epsilon' = \epsilon +  1/n$.  This implies that the encoding function  obtained  by replacing $(a,b)$ by $g(s)$ in the encoding function in Theorem~\ref{t:bruno} satisfies the conclusion of the theorem.
\end{proof}

\subsection{Universal codes with polynomial-time encoding and decoding for piecewise Hamming channels}
\label{s:piecewise}

The universal code in Theorem~\ref{t:bruno} does not have an efficient decoder. Using a concatenation scheme, we show how to obtain a universal code in the shared randomness model with efficient encoding and decoding for a restricted type of  Hamming channels.

The restriction is that the graph $G$ that defines a Hamming channel (see Definition~\ref{def:resilientHamming}) is required to be the product of several graphs, i.e., $G = G_1 \times G_2 \times \ldots \times  G_D$. This means that the vertices of $G$ are $D$-tuples $(u_1, \ldots, u_D)$, where every $u_i $ is a vertex of  $G_i$, and $((x_1, \ldots, x_D), (y_1, \ldots, y_D))$ is an edge of $G$ if $(x_i, y_i)$ is an edge in $G_i$, for every $i \in \{1, \ldots, D\}$. We call such a channel a  \emph{$D$-piecewise Hamming channel}. We recall that such a channel distorts $(x_i, \ldots, x_D)$ into an adversarially chosen $(y_1, \ldots, y_D)$, where for each $i$, $(x_1, y_i)$ is an edge in $G_i$. A piecewise Hamming channel  has distortion $T$, if all graphs $G_i$ have maximum right degree at most $T$.


A universal code ${\cal E}$ has  type  as given  in Definition~\ref{def:resilientHamming}, and ${\cal E}_\rho(m)$ denotes the result of encoding the message $m$, when randomness $\rho$ is used.  We say that ${\cal E}$ is a universal code $(t, \epsilon)$ resilient against $D$-piecewise Hamming distortion, if for every $D$-piecewise Hamming channel $\cha$ with distortion at most $T=2^t$, there is some decoding function $\dec$, such that for every $m$ in the domain of ${\cal E}$, 
\begin{equation}\label{e:piecewiseH}
\dec(\cha({\cal E}_\rho(m), \rho) = m
\end{equation}
with probability $1-\epsilon$ over  the randomness $\rho$ shared by ${\cal E}$ and $\dec$.

Using a concatenation scheme similar to the one in Theorem~\ref{t:randomdistort}, we build a universal code $\cal{E}$ that is $(t, \epsilon)$-resilient to all $D=2^n$-piecewise Hamming channels, with polynomial-time encoding and decoding, and which has rate $(1-\alpha)(1-t/n - o(1))$, where $\alpha > 0$ is an arbitrarily small constant. 

\begin{theorem}\label{t:piecewiseH}
For every constant $\alpha > 0$, every $n$, every $t< n - O(1)$ (with the $O(1)$ constant depending on $\alpha$), there exists ${\cal{E}}: (\Sigma^k)^S \times (\Sigma^{\Delta}) \mapping (\Sigma^n)^D$ a universal code $(t, e^{-\Omega(D)})$ resilient against random $D$-piecewise Hamming distortion, such that:
\begin{enumerate}
\item $D= 2^{n}$,   and the number of shared random bits is  $\Delta = 2nD$,

\item $S = \lfloor(1-\alpha) D \rfloor$, $k = n-t - O(1)$ (with the $O(1)$ constant depending on $\alpha$). Consequently,  $\cal{E}$ has  rate $(S \cdot k) /(D \cdot n) = (1-\alpha)(1-t/n-o(1))$,

\item The encoder $\cal{E}$ is computable in time $O(n D)$ (so encoding runs in time quasilinear in the bit-length of a codeword).
\item For every $D$-piecewise Hamming channel  with distortion at most $2^t$,  the corresponding decoding function $\dec$ runs in time $((nD)^2)$ (so decoding runs in time quadratic in the bit-length of a codeword).
\end{enumerate}
\end{theorem}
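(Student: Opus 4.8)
The plan is to mirror the concatenation construction used for Theorem~\ref{t:randomdistort}, but to replace the role of the private code from Proposition~\ref{p:sh1} by the private code from Theorem~\ref{t:bruno}, which is $(t,\epsilon)$-resilient in the full adversarial Hamming scenario. The outer code is again Spielman's linear-time encodable/decodable error-correcting code~\cite{spi:j:code} over the alphabet $\Sigma^k$ with rate $1-\alpha$ that corrects a constant fraction $\delta$ of symbol errors. The inner codes are $D$ independent instances of the code from Theorem~\ref{t:bruno}, one for each of the $D=2^n$ blocks: for block $i$ we pick randomness $\rho_i \in \zo^{2n}$ and use $\enc_{\rho_i}(\cdot) = \enc(\cdot,\rho_i)$, which maps a $k$-bit message symbol to an $n$-bit block with $k = n - t - O(1)$ (setting the error parameter of Theorem~\ref{t:bruno} to a suitable constant $\epsilon_0 < \delta$). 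All the $\rho_i$'s together form the shared randomness, so $\Delta = 2nD$, establishing item~(1); items~(2) and~(3) are then immediate bookkeeping, and the encoding time is $O(nD)$ since Spielman encoding is linear and each inner encoding is polynomial in $n$ (in fact the inner code is affine, so each block costs $O(n^2)$ but amortizes; the stated $O(nD)$ follows after absorbing the per-block cost into constants, or one states $O(n^2 D)$ as in Theorem~\ref{t:randomdistort} — I would match whichever bound the earlier proof sketch actually uses).

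The decoder works blockwise and then applies the outer decoder. Given the distorted word $(\tilde x_1,\dots,\tilde x_D)$, the channel is $D$-piecewise, i.e.\ $G = G_1 \times \cdots \times G_D$ where each $G_i$ has maximum right-degree at most $2^t$. Crucially, since the product structure means block $i$ is distorted only through $G_i$ and the channel function acts coordinatewise, we can apply the decoding guarantee of Theorem~\ref{t:bruno} to each block \emph{independently}: for block $i$, the graph $G_i$ is a Hamming channel with distortion at most $2^t$, so Theorem~\ref{t:bruno} furnishes a decoder $\dec_i$ with $\Pr_{\rho_i}[\dec_i(\Ch_i(\enc_{\rho_i}(c_i))) = c_i] \ge 1-\epsilon_0$ for every $c_i$ and every channel function $\Ch_i$ of $G_i$. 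Running each $\dec_i$ on $\tilde x_i$ yields candidate symbols $\hat c_1,\dots,\hat c_D$, and then we run Spielman's outer decoder on $(\hat c_1,\dots,\hat c_D)$ to recover $(m_1,\dots,m_S)$. Each inner decoding runs in time polynomial in the time to enumerate the left neighborhood of a right node of $G_i$ (Remark~\ref{r:complexity}), but here we only need decoding time polynomial in $n$, which holds by exhaustive search over the $2^k \le 2^n$ codewords of the inner code; summing over $D$ blocks and adding the linear-time outer decoding gives the claimed $O((nD)^2)$ bound.

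For correctness: fix a message $m$ and a $D$-piecewise channel. Let $Z_i$ be the indicator that block $i$ decodes incorrectly, i.e.\ $\hat c_i \ne c_i$. By the inner guarantee, $\E[Z_i] \le \epsilon_0$, and — this is the point that needs a careful sentence — the $Z_i$ are \emph{independent} because the $\rho_i$'s are independent and block $i$'s decoding outcome depends only on $\rho_i$ (the channel function on block $i$ may depend on $x_i = \enc_{\rho_i}(c_i)$ and hence on $\rho_i$, but not on any other $\rho_j$, by the product structure). Hence $\sum_i Z_i$ is a sum of independent $\{0,1\}$ variables with mean $\le \epsilon_0 D$, and a Chernoff bound gives $\Pr[\sum_i Z_i > \delta D] \le e^{-\Omega(D)}$ provided $\epsilon_0 < \delta$. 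On the complementary event the number of corrupted outer symbols is below Spielman's correction radius $\delta D$, so the outer decoder recovers $(m_1,\dots,m_S)$ exactly; this proves \eqref{e:piecewiseH} with error $e^{-\Omega(D)}$.

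The main obstacle — and really the only substantive point beyond routine assembly — is arguing the independence of the per-block decoding failures in the face of a \emph{fully adversarial} piecewise channel. Because the Hamming scenario lets the channel see the codeword, on block $i$ it sees $x_i = \enc_{\rho_i}(c_i)$, which leaks information about $\rho_i$; one must check that this is harmless, namely that Theorem~\ref{t:bruno}'s guarantee already holds against any channel function (so in particular against the ``worst'' one conditioned on whatever the adversary learned), and that since $\rho_i \perp \rho_j$ the failure events remain independent across blocks even though within a block the adversary is fully adaptive. Once this is spelled out, the Chernoff-plus-outer-code argument closes the proof, exactly as in the sketch of Theorem~\ref{t:randomdistort} but with the adversarial inner code replacing the random-noise one. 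I would also remark (as in Remark~\ref{r:qualityJustesen}) that the rate $(1-\alpha)(1-t/n-o(1))$ is near-optimal for piecewise channels whose building blocks $G_i$ have left degrees not much smaller than their max right degree, by the sphere-packing bound of Section~\ref{s:ub} applied blockwise.
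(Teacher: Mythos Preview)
Your proposal is essentially the paper's own proof: concatenate Spielman's outer code with $D$ independent instances of the Theorem~\ref{t:bruno} code as inner codes, decode blockwise, and Chernoff the number of block failures below Spielman's correction radius. The parameter accounting, rate, and running-time analysis all match.

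One point deserves correction. You write that ``the channel function on block $i$ may depend on $x_i$ \ldots\ but not on any other $\rho_j$, by the product structure.'' That is not what the product structure gives you: a channel function for $G = G_1\times\cdots\times G_D$ is any map $(x_1,\ldots,x_D)\mapsto(y_1,\ldots,y_D)$ with $(x_i,y_i)\in E(G_i)$, so $y_i$ may depend on \emph{all} blocks, and hence on all $\rho_j$. The clean way to get independence is to condition on the transmitted codeword $(x_1,\ldots,x_D)$: this fixes $(y_1,\ldots,y_D)$, and the residual randomness in each $\rho_i$ (in the pairwise-independent construction, this is $a_i$ given $a_ic_i+b_i=x_i$) is independent across $i$; moreover the proof of Theorem~\ref{t:bruno} shows that for any fixed $y_i$ the conditional failure probability on block $i$ is at most $\epsilon$. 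Chernoff then applies conditionally, hence unconditionally. The paper's own proof glosses over this point as well, simply invoking ``Chernoff bounds,'' so your instinct that this is the one place requiring a careful sentence is right; you just need the right sentence.
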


The decoder is polynomial-time efficient because it replaces the  exhaustive search in the space of all possible codewords from Theorem~\ref{t:bruno}, with  searches in each segment that forms the piecewise space of codewords. With a high probability, a small fraction of these ``local" searches return incorrect results, but these few errors are repaired  by the outer code. Since each segment contains $n$-bit strings and we take $D=2^n$ segments, this process requires $D \cdot 2^n = D^2$ steps, which is less than quadratic in the length of the codeword (which is $nD$).  The details are presented in Section~\ref{s:proofranddistort}.


\section{Bounds}\label{s:bounds}
In this section we present two kind of bounds for universal codes:  upper bounds for the rate, and lower bounds for the amount of shared randomness.


\subsection{Upper bounds for the rate  of universal codes}\label{s:ub}

If the encoder and the decoder do not use randomness, an upper bound for the rate can be derived via the following standard sphere-packing argument. 
Consider an oblivious channel defined by a set $E$ of size~$T$.
The maximal number of messages we can send with $N$ codewords is equal to $N/T$, because for any 2 messages $m_1$ and $m_2$, the sets $\enc(m_1) +E$ and $\enc(m_2) + E$ must be disjoint. 
The same holds for the Hamming scenario, because we can view the channel as a bipartite graph, (2 nodes are connected if their difference is in $E$), and the right degree is at most~$T$ as well.
In the next theorem,  we adapt this argument for private codes.

\begin{theorem}\label{t:rate}
Let $\enc: \zo^{\kt} \times \mcR \mapping \zo^n$ be a private code that is  $(t, \epsilon)$-resilient in the oblivious scenario, or in the Hamming scenario. Then 
\[
\frac{\kt}{n} \leq 1 - \frac{\tk}{n} + \frac{1+\log(1/(1-\epsilon))}{n}.
\]
\end{theorem}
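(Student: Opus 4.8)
The plan is to mimic the deterministic sphere-packing argument, but carry out the counting over a uniformly random choice of shared randomness $\rho \in \mcR$. Fix a code $\enc$ that is $(t,\epsilon)$-resilient in, say, the oblivious scenario, and let $T = 2^t$, $K = 2^{\kt}$, $N = 2^n$. The key observation is that although for a fixed $\rho$ the encodings $\enc_\rho(m)$ need not have disjoint ``distortion balls'', resilience guarantees that the decoder succeeds on average over $\rho$. First I would pick a worst-case oblivious channel to feed into the definition. Concretely, I would like to choose a set $E$ of size $T$ so that many pairs $(m,\rho)$ are ``confusable''; since the decoder $\dec_\rho$ is a function of $\tx$ alone, two pairs $(m_1,\rho)$ and $(m_2,\rho)$ with $m_1 \neq m_2$ that satisfy $\enc_\rho(m_1) + e_1 = \enc_\rho(m_2) + e_2$ for some $e_1,e_2 \in E$ cannot both be decoded correctly. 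So the heart of the argument is a counting/averaging step: sum, over all $\rho$ and all messages $m$, the size of the ``image ball'' $\enc_\rho(m) + E \subseteq \zo^n$, obtaining exactly $|\mcR| \cdot K \cdot T$ (counting multiplicity), and compare this with $|\mcR| \cdot N$, the total number of (right-node, $\rho$) slots.

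More precisely, I would set up an expectation. For fixed $\rho$, let $N_\rho$ be the number of right nodes $\tx$ such that $\dec_\rho(\tx)$ is forced to be wrong for at least one of the messages whose ball contains $\tx$ — roughly, the ``overcounted'' part, of size at least $KT - N$ when $KT > N$. Averaging the per-message success probability over $m$ uniform and $\rho$ uniform, resilience gives success probability $\ge 1-\epsilon$; on the other hand, a double-counting argument shows this average is at most $N/(KT)$ (each right node $\tx$, for each $\rho$, can be the correct decoding for at most one message, so the number of correctly-decoded (message, $\rho$, error)-configurations is at most $|\mcR|\cdot N$, while the total number of configurations is $|\mcR|\cdot K \cdot T$). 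Hence $1-\epsilon \le N/(KT)$, i.e. $KT \le N/(1-\epsilon)$. Taking logarithms, $\kt + \tk \le n + \log(1/(1-\epsilon))$, and since $\kt,\tk,n$ are integers one can even absorb the extra $+1$; dividing by $n$ yields the claimed bound
\[
\frac{\kt}{n} \le 1 - \frac{\tk}{n} + \frac{1+\log(1/(1-\epsilon))}{n}.
\]
For the Hamming scenario I would run the identical argument, viewing the channel as the bipartite graph in which $\tx$ is a neighbour of $x$ iff $x - \tx \in E$ for a cleverly chosen $E$ of size $T$ (or, more directly, any graph with all right degrees $T$ and all left degrees $\ge 1$ realizing the same confusability pattern), so that each right node has $\le T$ left neighbours and the same double count goes through.

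The step I expect to be the main obstacle is making the averaging argument fully rigorous in the adversarial-channel setting: I must exhibit a single fixed channel (i.e. a single fixed $E$, not one depending on $\rho$) that simultaneously witnesses the overcounting for enough $(\rho,m)$ pairs, because the definition of resilience quantifies ``there exists $\dec$'' after the channel is fixed. The clean way around this is to take $E$ to be, e.g., a ball-like or generic set and observe that for \emph{every} choice of $E$ of size $T$ the map $(m,e)\mapsto \enc_\rho(m)+e$ has image of size $\le \min(KT, N)$, so the fraction of $(m,e)$ pairs on which $\dec_\rho$ can be correct is at most $N/(KT)$ regardless; then no adaptivity of $E$ to $\rho$ is needed. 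I would also double-check the boundary/rounding to justify the ``$+1$'', and note that if $KT \le N$ the bound is trivial, so we may assume $KT > N$.
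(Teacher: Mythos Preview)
Your proposal is correct and is essentially the paper's own argument. The paper fixes an arbitrary set $E$ of size exactly $T$, averages the success indicator over uniform $(m,e,\rho)$ to get $\ge 1-\epsilon$, then picks a single $\rho$ achieving at least this average and observes that on the set $A_\rho=\{(m,e):\dec_\rho(\enc_\rho(m)+e)=m\}$ the map $(m,e)\mapsto \enc_\rho(m)+e$ is injective (since $\dec_\rho$ is a function and $e$ is determined by $m$ and the image), giving $\#A_\rho\le 2^n$ and hence $(1-\epsilon)2^{k+t}\le 2^n$. Your version keeps the average over $\rho$ and uses the per-$\rho$ bound $\#A_\rho\le 2^n$ directly, which is the same counting; your worry in the last paragraph about needing $E$ to depend on $\rho$ is unfounded, exactly as you yourself conclude---any fixed $E$ of size $T$ works, and the paper simply takes one without further comment. (The ``$+1$'' in the statement is slack; both your argument and the paper's actually yield the sharper $k+t\le n+\log(1/(1-\epsilon))$.)
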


\begin{proof}
  We consider the oblivious scenario. For the Hamming scenario, the argument is similar.  
  Let $E$ be a set of size exactly $T$.
  For a random selection of $e \in E$, $m \in \zo^\kt$ and $\rho \in \mcR$, we have
  \[
    \Pr_{e,m,\rho} \left[\dec_\rho(\enc_\rho(m)+e) = m \right] \;\ge\; 1-\epsilon\,.
  \]
  For $\rho \in \mcR$, consider the set 
  \[
    A_\rho = \big\{(m,e) : \dec_\rho(\enc_\rho(m) + e) = m \big\}.
  \]
  For a random $\rho$, we have
  \[
    \mathbb E \left[\frac{\# A_\rho}{2^k \cdot 2^t}\right] \ge 1-\epsilon,
  \]
  because the left-hand side is precisely the probability above.
  This implies that there must exist a $\rho \in \mcR$ for which 
  $
    \# A_\rho \ge (1-\epsilon) 2^{k+t}.
  $
  Fix such a~$\rho$.
  Note that for no two  pairs $(m,e)$ in $A_\rho$, the value of $\enc_\rho(m)+e$ can be equal. 
  Hence, $2^n \ge \# A_\rho$. The statement of the theorem follows by combining these 2 inequalities.
\end{proof}

\begin{remark}\label{r:rateub}
The argument in the above proof establishes a stronger assertion.  Consider a Hamming channel defined by a bipartite graph $G$, and assume that every \emph{left} node has  degree at most $2^t$ (note that the noise level is defined using the  degree of \emph{right} nodes). Let $\enc$ be an encoding that defeats this channel, i.e., $\enc: \zo^{\kt} \times \mcR \mapping \zo^n$, and there exists $\dec$ such that  for every $m \in \zo^{\kt}$, with probability $1-\epsilon$ of $\rho \in \mcR$, for all neighbors $\tx$ of $\enc(m, \rho)$, $\dec(\tx, \rho) = m$. Then, the argument shows the same upper bound for the rate $k/n$, as the one in Theorem~\ref{t:rate}. Thus, the upper bound holds not only for universal codes which have to defeat \emph{all} Hamming channels, but also for codes that defeat any \emph{single} Hamming channel satisfying the above  left degree condition. This class of codes includes all channels in the oblivious scenario, because for such channels  the left degree and the right degrees are equal.
\end{remark}

\subsection{Lower bounds for the randomness of universal codes}\label{s:lb}

We first  note that there exist universal codes in which the encoder is randomized and the decoder is deterministic, and, thus they  \emph{do not share} randomness. We provide a non-explicit construction of such a code in Appendix~\ref{s:nonshared}. This code does not achieve an optimal rate. 
In an upcoming extended version of this paper, we show that for some choices of $k$ in the oblivious scenario, any universal code that is $(\eps,t)$-resilient and has optimal rate must use \emph{shared} randomness. In general the trade-off between shared randomness and rate for universal codes is very intricate and for a (lengthy) discussion we refer to the extended version.

Therefore, in what follows we restrict to private codes, \ie, to the model in which the universal encoder and the channel-dependent decoders share randomness, and the encoder does not have access to other types of randomness.  We show lower bounds for the number of random bits in both the Hamming and oblivious scenarios.\footnote{In Appendix~\ref{s:weakhamming},  we discuss a different model, which is intermediate between oblivious and Hamming.}

 We first show that  for  any private universal code in the oblivious scenario, the encoding function  must use at least $\Omega(\log t)$ random bits, regardless of rate, where $t$ is the noise level. 
The universal code for the oblivious scenario in Theorem~\ref{t:main} has $O(\log n)$  random bits, and has optimal rate in the asymptotical sense. Thus the number of  random bits in Theorem~\ref{t:main} matches the lower bound (up to the constant hidden in the $O(\cdot)$ notation),  in the case of  noise level $t=\Omega(n)$, which is typical. 

\begin{theorem}\label{t:lbobliv}
  If $\#\mcM \ge 2$, $\epsilon < 1/2$ and $\enc\colon\mcM \times \mcR \mapping \mcX$ is a private $(t, \epsilon)$-resilient code in the oblivious scenario, then $\#\mcR > t$, i.e., $\enc$ requires more than $\log t$ random bits.
\end{theorem}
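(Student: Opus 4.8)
The plan is to prove the contrapositive: if $\#\mcR \le t$, I will exhibit one oblivious channel of distortion at most $2^t$ that no decoder can cope with, so $\enc$ is not $(t,\eps)$-resilient. Fix two distinct messages $m_0 \neq m_1$ in $\mcM$ (possible since $\#\mcM \ge 2$), write $a_\rho = \enc_\rho(m_0)$ and $b_\rho = \enc_\rho(m_1)$, and set $\delta_\rho = a_\rho - b_\rho \in \mcX$. The guiding idea is to pick the error set $E$ so that for \emph{every} value $\rho$ of the shared randomness the set of possible received words when $m_0$ is sent coincides with the set when $m_1$ is sent; then no $\dec$ can tell the two messages apart at any $\rho$, and a short counting argument extracts one message/error pair that is decoded incorrectly with probability at least $1/2 > \eps$.

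The channel I would use is $E := H$, the subgroup of $\mcX$ generated by the $\#\mcR$ differences $\{\delta_\rho : \rho \in \mcR\}$. In the setting of the paper's codes $\mcX$ is a vector space over $\F_2$, so $H$ is a subspace of dimension at most $\#\mcR$ and $|E| = |H| \le 2^{\#\mcR} \le 2^t$, i.e. $E$ is a legal oblivious channel. Because $\delta_\rho \in H$ and $H$ is a subgroup, for each $\rho$ we have $a_\rho + E = a_\rho + H = b_\rho + H = b_\rho + E$; call this common coset $C_\rho$, which has size $|H|$.

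Now let $\dec$ be any decoding function for $E$. Fix $\rho$ and $y \in C_\rho$, and put $e_0 = y - a_\rho \in E$ and $e_1 = y - b_\rho \in E$. Since $\dec_\rho(y)$ is a single element of $\mcM$, it equals at most one of $m_0, m_1$, so at least one of the pairs $(m_0, e_0)$, $(m_1, e_1)$ is decoded incorrectly for randomness $\rho$; moreover, as $y$ ranges over $C_\rho$ these pairs are pairwise distinct (the $m_0$-components run over distinct elements of $E$, likewise the $m_1$-components, and $m_0 \neq m_1$). Hence at least $|C_\rho| = |H|$ of the events ``pair $(m,e)$ decoded incorrectly at $\rho$'' hold, over $m \in \{m_0, m_1\}$ and $e \in E$. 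Summing over the $\#\mcR$ values of $\rho$ gives at least $\#\mcR \cdot |H|$ such events, spread over only $2|H|$ pairs $(m,e)$; by averaging, some pair $(m^\ast, e^\ast)$ is decoded incorrectly for at least $\lceil \#\mcR/2 \rceil$ values of $\rho$, so $\Pr_{\rho}[\dec_\rho(\enc_\rho(m^\ast)+e^\ast) = m^\ast] \le 1/2 < 1-\eps$, contradicting $(t,\eps)$-resilience. This establishes $\#\mcR > t$.

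The step I expect to require the most care is the size bound $|E| = |H| \le 2^t$ for a \emph{general} additive group $\mcX$ (as opposed to an $\F_2$-vector space): there the subgroup generated by $\#\mcR$ elements can be far larger than $2^{\#\mcR}$, or even infinite, so $E = H$ need not be a legal channel. For $\F_2$-linear $\mcX$ there is nothing to do, and this already covers every code constructed in the paper (Theorems~\ref{t:main} and~\ref{t:codepolylog}). For the general case one can at least dispatch the regime $\eps \cdot \#\mcR < 1$ at once — a two-element collision channel $E = \{0, \delta_\rho\}$ then forces some pair to be wrong at a single $\rho$, which already brings its good set down to size $\le \#\mcR - 1 < (1-\eps)\#\mcR$ — while handling the remaining regime (which forces $t$ to be reasonably large) with a finite, approximately $H$-invariant $E$ is the part I would expect to be genuinely delicate, and possibly to need a mild restriction on $\mcX$.
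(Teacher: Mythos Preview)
Your proof is correct and takes essentially the same approach as the paper: the same channel $E$ (the $\F_2$-span of the differences $\enc_\rho(m_0)-\enc_\rho(m_1)$), the same key observation that $a_\rho+E=b_\rho+E$ for every $\rho$, and an averaging step to extract a bad pair $(m^\ast,e^\ast)$; the paper phrases the last step probabilistically (symmetry of the $\dec_\rho$-distribution under swapping $m_0\leftrightarrow m_1$, then fixing $e$), whereas you do an equivalent double count. Your caveat about general additive groups is well taken but applies equally to the paper's own proof, which also uses that the ``span'' with $\{0,1\}$-coefficients is closed under translation by each $v_\rho$ --- i.e.\ it too implicitly assumes $\mcX$ is an $\F_2$-vector space --- so you have matched the paper's level of generality.
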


\begin{proof}
  It is enough to prove the theorem for only 2 messages.  Let $\mcM = \{\texttt{a}, \texttt{b}\}$ and $\mcR = \{1,2,\ldots,D\}$.
  Consider the channel defined by the set $E$ given by the span of the vectors
  \[
    v_1 = \enc_1(\texttt{a}) - \enc_1(\texttt{b}), \quad \ldots, \quad v_D = \enc_D(\texttt{a}) - \enc_D(\texttt{b}).
  \]
  Thus, $E$ has size at most~$2^D$.
  We need to select $m \in \mcM$ and $e \in E$ such that the probability in~\eqref{eq:oblivious} is at most~$1/2$.
  In the requirement~\eqref{eq:oblivious}, the only relevant values of $\dec$ are vectors of the form
  \[
     \enc_\rho(\texttt{a}) + c_1v_1 + \cdots + c_Dv_D,
  \]
  with $\rho \in \mcR$ and $c \in \{0,1\}^D$.
  Select $\rho$ and $c$ randomly and consider the value of $\dec_\rho$ on the above vector, which is a value in~$\mcM$.
  Note that if we used message $\texttt{b}$ instead of $\texttt{a}$ in the expression above, 
  then the probabilities with which the messages appear do not change (since this corresponds to flipping all bits of~$c$). 
  Assume that the value $\texttt{b}$ appears with probability at least $1/2$.
  If this is not the case, we flip the roles of $\texttt{a}$ and $\texttt{b}$ in the expression above and the explanations below. 
  There exists a choice of $c \in \{0,1\}^D$ such that for at least half of the values $\rho \in \mcR$, 
  the value of $\dec_\rho$ for the above vector is equal to~$\texttt b$.
  Let $e = c_1v_1 + \cdots + c_Dv_D \in E$ be the corresponding vector. 
  For $m = \texttt{a}$, the probability in~\eqref{eq:oblivious} is at most~$1/2$. 
  Hence, for $\eps< 1/2$ the inequality is false, and this implies that if $D\le t$ 
  equation~\eqref{eq:oblivious} can not be satisfied.
\end{proof}

We prove a similar result for the Hamming scenario.

\begin{theorem}\label{t:hamlb}
  If $\#\mcM \ge 2$, $2^t \leq \# \mcX$, $\epsilon < 1/3$ and $\enc\colon\mcM \times \mcR \mapping \mcX$ is a private $(t, \epsilon)$-resilient code in the Hamming scenario, then $\#\mcR > 2^{2t-2}$, i.e., $\enc$ requires more than $2t-2$ random bits.
\end{theorem}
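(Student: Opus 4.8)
The plan is to mimic the structure of the proof of Theorem~\ref{t:lbobliv}, but now exploiting the extra power of a Hamming channel, which gets to see the codeword before deciding how to distort it. As before it suffices to treat two messages, $\mcM = \{\texttt{a},\texttt{b}\}$, and write $\mcR = \{1,\dots,D\}$ with $D = \#\mcR$. For each $\rho$ set $x_\rho^{\texttt a} = \enc_\rho(\texttt a)$ and $x_\rho^{\texttt b} = \enc_\rho(\texttt b)$. The adversary's job is to build a single bipartite graph (equivalently a ``confusion'' structure) on $\mcX$ with all right-degrees at most $2^t$, together with channel functions, so that no decoder can separate $\texttt a$ from $\texttt b$ with probability $>1-\epsilon$ for \emph{both} messages simultaneously. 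The key difference from the oblivious case is that here the channel may map $x_\rho^{\texttt a}$ and $x_\rho^{\texttt b}$ to a \emph{common} vertex $\tx_\rho$ of its own choosing — it does not need $\tx_\rho$ to be reachable by adding a fixed error vector. So the natural attack is: for as many $\rho$ as possible, pick a target $\tx_\rho$ and declare both $x_\rho^{\texttt a}$ and $x_\rho^{\texttt b}$ to be left-neighbors of $\tx_\rho$; on message $m$ the channel function sends $x_\rho^m \mapsto \tx_\rho$. For such $\rho$, $\dec_\rho(\tx_\rho)$ returns a fixed element of $\mcM$, so it is wrong for one of the two messages. If we can arrange this for more than a $(1-\epsilon)$-fraction — indeed for enough $\rho$ that the ``bad'' fraction exceeds $\epsilon$ for at least one message — we contradict resilience.

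The obstacle, and the reason the bound is $2^{2t-2}$ rather than merely $t$, is the right-degree constraint: each target vertex $\tx$ may be the image of at most $2^t$ left nodes. A given pair $\{x_\rho^{\texttt a}, x_\rho^{\texttt b}\}$ uses up $2$ units of the budget $2^t$ at $\tx_\rho$ \emph{for each message separately} — but more importantly, if we try to route many $\rho$'s to the \emph{same} target $\tx$, then on message $\texttt a$ the vertex $\tx$ receives all the $x_\rho^{\texttt a}$, and there can be at most $2^t$ of them; likewise for $\texttt b$. So a single target handles at most $2^t$ values of $\rho$. With $r$ distinct targets we can cover at most $r\,2^t$ values of $\rho$, but the targets $\tx_1,\dots,\tx_r$ themselves must be distinct vertices, and — here is the subtle point — we also need the codewords routed to different targets to be distinct from each other so that the graph is well-defined and right-degrees elsewhere stay bounded; realistically the cleanest count is that $r$ targets, each absorbing $\le 2^t$ of the $x_\rho^{\texttt a}$'s and $\le 2^t$ of the $x_\rho^{\texttt b}$'s, can only ``confuse'' about $2^{2t}/$(constant) values of $\rho$ in total once we insist the confusion be genuine for at least one of the two messages. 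Running the averaging argument of Theorem~\ref{t:lbobliv} — among the confused $\rho$'s, a $c\in\{0,1\}$-type choice (here: which message each $\tx_\rho$ decodes to) makes one of $\texttt a,\texttt b$ fail on at least half of them — we get that the confused set has size $\le 2/(1-2\epsilon)\cdot(\text{number of }\rho\text{ we failed to confuse})$ is impossible unless $D > 2^{2t-2}$.

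Concretely I would proceed as follows. \emph{Step 1:} Greedily group the $D$ values of $\rho$ by choosing targets: pick any not-yet-covered $\rho$, let $\tx$ be a fresh vertex of $\mcX$ (possible since $2^t \le \#\mcX$ and we only ever use $r$ of them with $r$ small), and attach to $\tx$ up to $2^t$ pairs $\{x_{\rho'}^{\texttt a}, x_{\rho'}^{\texttt b}\}$ — but being careful: attaching pair $\rho'$ to $\tx$ means $\tx$ gains both $x_{\rho'}^{\texttt a}$ and $x_{\rho'}^{\texttt b}$ as neighbors, so a target absorbs at most $2^{t-1}$ pairs. \emph{Step 2:} Observe that if we manage to cover a set $R'\subseteq\mcR$ with $|R'| > 2\epsilon D/(1-2\epsilon)$ — wait, rather: build the channel so that all of $\mcR$ is covered if $D \le 2^{2t-2}$ (using $\le D/2^{t-1} \le 2^{t-1}$ distinct fresh targets, which is fine), then every $\rho$ has $\dec_\rho(\tx_\rho)$ deterministic. \emph{Step 3:} Apply the pigeonhole/averaging step of Theorem~\ref{t:lbobliv}: since each covered $\rho$ contributes an error to exactly one of $\texttt a$, $\texttt b$, at least $D/2$ of them err on a single message $m^\star$, so $\Pr_\rho[\dec_\rho(\Ch(\enc_\rho(m^\star)))=m^\star] \le 1/2 < 1-\epsilon$, contradiction. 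Hence $D > 2^{2t-2}$. \emph{The main thing to get right} is the bookkeeping in Step 1 — ensuring all the $x_\rho^{\texttt a}$ and $x_\rho^{\texttt b}$ used across \emph{all} targets are distinct enough that right-degrees stay $\le 2^t$ everywhere and left-degrees stay $\ge 1$ (the left-degree condition is free since every left node can default to mapping to itself); degenerate coincidences like $x_\rho^{\texttt a} = x_\rho^{\texttt b}$ or collisions across different $\rho$ only \emph{help} the adversary and can be folded into the same target, so the worst case is the all-distinct one, which is exactly what the $2^{t-1}$-pairs-per-target count handles.
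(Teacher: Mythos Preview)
Your Step~1/Step~2 bookkeeping has a real gap, precisely at the claim that ``collisions across different $\rho$ only \emph{help} the adversary.'' This is false. Take $\enc_\rho(\texttt a)=x^\star$ for every $\rho$ and the $\enc_\rho(\texttt b)$ all distinct. Your grouping puts $x^\star$ in \emph{every} group, but the channel function is a map on codewords, not on~$\rho$: once you commit to $\Ch(x^\star)=\tx_1$, message~$\texttt a$ is routed to $\tx_1$ for \emph{all}~$\rho$. The defender now sets $\dec_\rho(\tx_j)=\texttt b$ if $\rho$ lies in group~$j$ and $\dec_\rho(\tx_j)=\texttt a$ otherwise. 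Message~$\texttt b$ is then always decoded correctly (the only neighbor of $\enc_\rho(\texttt b)$ in your graph is $\tx_{j(\rho)}$, and $\dec_\rho(\tx_{j(\rho)})=\texttt b$), while message~$\texttt a$ errs only for $\rho$ in group~$1$, i.e., with probability $2^{-(t-1)}$ rather than~$1/2$. So collisions are not a degeneracy you can fold in; they are exactly the obstruction, and the ``worst case is all-distinct'' heuristic is wrong.

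The paper confronts this head-on with a genuinely different construction. It records the encoding in an $N\times N$ matrix whose $(x,y)$-entry is $\#\{\rho:\enc(\texttt a,\rho)=x,\ \enc(\texttt b,\rho)=y\}$ and splits into three cases according to whether heavy rows (weight~$\ge T$), heavy columns, or neither carry at least a third of the total weight~$T^2$. Your all-distinct intuition matches the ``neither'' case, which the paper handles directly. The collision scenario above is the heavy-row case, and there the key device is a set of $T$ fresh ``pointer'' right-nodes: every heavy row is connected to \emph{all} pointers, and the at most $T^2$ nonzero columns are spread among the pointers, $\le T$ per pointer. The channel routes $\enc(\texttt b,\rho)$ to its assigned pointer; since every heavy $\enc(\texttt a,\rho)$ is adjacent to every pointer already, the common-neighbor event no longer hinges on the single value $\Ch(x^\star)$. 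This two-level fan-out (heavy rows to all pointers, light columns each to one pointer) is the idea your plan is missing, and without it the degree-$2^t$ budget cannot be made to do the work your Step~3 assumes.
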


Again, it is enough to prove the statement for two messages.  Let $\mcM = \{\texttt{a}, \texttt{b}\}$ and $\mcR = \{1,2,\ldots,D\}$. Thus, we are given a universal code $\enc :  \mcM \times [D] \mapping [N]$ for some  arbitrary $N$ and the code is resilient in the Hamming scenario up to distortion $2T$ with probability $\epsilon$, where $T=2^{t-1}$.  This means that for every bipartite graph with $N$ left nodes and $N$ right nodes, with degree of every right node $\leq 2T$,   the event (when $\rho$ is chosen at random in $[D]$)
\begin{equation*}\tag{*}\label{eq:cond}
  \enc(a,\rho) \text{ and  } \enc(b,\rho) \text{ have a common neighbor.}
\end{equation*}
has probability at most $\epsilon$.
We show in the next lemma that if $\eps <   1/3$, then $D  > T^2$, from which the conclusion follows

\begin{lemma}\label{lem:diagonalization}
 For every encoding function $\enc \colon \{a,b\} \times [T^2] \rightarrow [N]$, 
  there exists a bipartite graph of the above type such that the event in~\eqref{eq:cond} has probability at least $\eps \geq  1/3$.
\end{lemma}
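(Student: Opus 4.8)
The plan is to exhibit a single channel graph $G$ of the allowed type (left vertex set $[N]$, right vertex set of size $N$, every right degree at most $2T$, every left degree at least $1$) for which, for \emph{every} $\rho\in[T^2]$, the two codewords $\enc(a,\rho)$ and $\enc(b,\rho)$ lie in a common right-neighbourhood; then the event of~\eqref{eq:cond}, namely that $\enc(a,\rho)$ and $\enc(b,\rho)$ have a common neighbour, occurs with probability $1$, which comfortably beats the required $1/3$.

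First I would view the data as the family of $T^2$ pairs $P_\rho=\{\enc(a,\rho),\enc(b,\rho)\}$ of left vertices ($P_\rho$ being a singleton if $\enc(a,\rho)=\enc(b,\rho)$, with repetitions allowed). Since $[T^2]$ has $T\cdot T$ elements, I would partition it into $T$ blocks $B_1,\dots,B_T$ of size exactly $T$; set $L_i=\bigcup_{\rho\in B_i}P_\rho$, so that $|L_i|\le 2T$ because $B_i$ contributes at most $T$ pairs. The graph $G$ is then defined by putting, for each $i\le T$, a right vertex $r_i$ whose left-neighbourhood is exactly $L_i$. Every such $r_i$ has degree $|L_i|\le 2T$, and for $\rho\in B_i$ both $\enc(a,\rho)$ and $\enc(b,\rho)$ belong to $L_i=N_G(r_i)$, so $r_i$ is their common neighbour; hence the event of~\eqref{eq:cond} holds for all $\rho$.

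Only two clean-up points remain, and the second is where one must be a bit careful. (i) A Hamming channel requires every left vertex to have degree at least $1$, so I would add further right vertices, each joined to at most $2T$ of the left vertices not already covered by some $L_i$; at most $\lceil N/2T\rceil$ of them are needed. (ii) The graph may use only the $N$ prescribed right vertices; a short count gives $T+\lceil N/2T\rceil\le N$ whenever $N>2T$ (using $T\ge 1$), and when $N\le 2T$ one can instead use a single right vertex adjacent to all $N$ left vertices — it has degree $N\le 2T$, it is a common neighbour of every pair, and $1\le N$. In all cases we obtain a legal graph with the event of~\eqref{eq:cond} of probability $1\ge 1/3$. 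The combinatorial core — equal blocks of size $T$ yielding neighbourhoods of size at most $2T$ — is immediate; the only real bookkeeping is the cardinality check in step (ii).
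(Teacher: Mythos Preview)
Your proof is correct and in fact simpler and sharper than the paper's own argument: you obtain probability $1$ for the event in~\eqref{eq:cond}, whereas the paper only establishes probability at least $1/3$.

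The paper proceeds quite differently. It forms the $N\times N$ matrix whose $(x,y)$-entry counts the number of $\rho$ with $\enc(a,\rho)=x$ and $\enc(b,\rho)=y$, and then does a three-way case split according to whether the heavy rows or the heavy columns carry at least a third of the total weight $T^2$. In the ``neither'' case it uses identity edges plus the nonzero entries of the truncated matrix; in the heavy-row (or heavy-column) case it introduces a set of $T$ pointer vertices, connects every heavy row to every pointer and each nonzero column to a single pointer, and argues that with probability at least $1/3$ the resulting pointer is a common neighbour. Your block-partition idea --- split $[T^2]$ into $T$ blocks of size $T$ and hang one right vertex off the at most $2T$ codewords arising from each block --- bypasses all of this and covers \emph{every} $\rho$.

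Two minor remarks. First, the paper notes parenthetically that its graph also has \emph{left} degree at most $2T$, so the lower bound holds even for channels with bounded left degree. Your construction shares this feature: a left vertex lies in at most $T$ of the sets $L_i$, and the padding step adds at most one more edge, so left degrees are at most $T+1\le 2T$. Second, your cardinality check in step~(ii) is fine, but it could be shortened: the definition of a Hamming channel places no lower bound on right degrees, so once you have at most $N$ ``useful'' right vertices you may simply pad with isolated ones to reach exactly $N$.
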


\begin{proof}
  We construct a bipartite graph with the set of left nodes and right nodes both equal to $[N]$, and  with left and right degrees at most~$2T$ (thus the lower bound is valid even for channels where the left degree is also bounded by $2T$).
  Consider the matrix obtained by setting the $(x,y)$-th entry equal to the number of 
  random strings $\rho$ for which~$\enc(a,\rho) = x$ and $\enc(b,\rho) = y$.
  Since there are $T^2$ strings $\rho$, the sum of all entries of this matrix is $T^2$ as well.

  The  {\em weight} of a column is the sum of all its entries. Similarly for the weight of a row.  A column is heavy if its weight is $\geq T$ and a  heavy row is defined in the same way.
 Note that there are at most $T$ heavy  rows and at most $T$ heavy columns.
  We consider 3 cases:
  \begin{itemize}[leftmargin=*]
    \item The set of heavy columns  have total weight at least $T^2/3$.
    \item The set of heavy rows    have total weight at least $T^2/3$.
    \item None of the conditions above are true.
  \end{itemize}
  In the last case the construction is easy. We set all entries of heavy columns and rows equal to zero. 
  The remaining matrix has weight at least~$T^2/3$, and all its rows and columns have weight less than~$T$ (because they are not heavy).

 We define the   bipartite graph in which
  a left node $x$ is connected to a right node $y$ if $x=y$ or the $(x,y)$ entry of the matrix is positive.

  Since the matrix contains nonnegative integers, every column has less than $T$ positive entries, and hence 
  every left node has degree at most~$T$. By a symmetric argument with rows, 
  we conclude that also the right degrees are at most~$T$. 

  We prove that  the event \eqref{eq:cond} has probability at least $1/3$. Indeed, select $\rho$ randomly, and let $x = \enc(a,\rho)$ and
  $y = \enc(b,\rho)$. 
  With probability at least $1/3$ the entry $(x,y)$ is positive, and this implies that $x$ is a neighbor of both $x$ and $y$.
  In the last case the lemma is satisfied.

  \medskip
  Note that the first and second case are symmetric after flipping the first and second message in~$\enc$.
  Hence, it remains to prove the claim for the second case. 
  In the matrix, we set all rows that have weight less than~$T$ equal to zero. 
  The assumption states that the remaining matrix has weight at least~$T^2/3$.

  The idea to prove  \eqref{eq:cond}, is to consider a set of $T$ values $y$, which we  call {\em pointers}. 
  We connect each heavy row to every pointer.
  Each nonzero column will be connected to a single pointer as well. 
  Since there are at most $T^2$ nonzero columns, 
  we can indeed satisfy the degree bound using at most $T$ pointers. 
  Finally, choose $\Ch(y)$ to be this pointer for each nonzero column $y$.
  Now the inequality fails for $m=2$, since with probability $1/3$, we have that $\enc(a,\rho)$ is a heavy row and that $\enc(b,\rho)$ is a nonzero column. 
  Hence, they are both connected to the pointer~$\Ch(\enc(b,\rho))$. Now the details.

  By the assumption $N \ge 2T$  and taking into account that there are at most $T$ heavy  rows, we can select $T$ rows containing only zeros. The $T$ choosen rows
  are called {\em pointers}.
  We assign  to each   nonzero column  a pointer so that no pointer is assigned to more than $T$ columns.
  Note that there are at most $T^2$ nonzero columns and $T$ pointers, and thus this assignment  is possible.

  The bipartite graph connects a left node $x$ to a right node $y$ 
  \begin{itemize}
    \item if $x$ is a heavy row and $y$ is a pointer, or
    \item if $x$ is a nonzero column and $y$ is its associated pointer.
  \end{itemize}
  The conditions on the degree are satisfied, because every left node 
  is only connected to pointers, and there are at most $T$ of them. 
  Every right node $y$ has degree at most $2T$, because we only need to check 
  this for pointers $y$, and they are connected to $T$ heavy rows and to 
  at most $T$ nonzero associated columns.

  Finally, we need to prove that the event  \eqref{eq:cond} has probability at least $1/3$. For each nonzero column  $y$, let $\Ch(y)$ be the associated pointer, and so also a neighbor of  $y$. 
  With probability $1/3$ for a random $\rho$, the value of $\enc(a,\rho)$ will be a heavy row and $\enc(b,\rho)$ a nonzero column. 
  This means that $\Ch(\enc(b,\rho))$ is a pointer, and hence connected to all heavy rows, thus in particular it is also a neighbor of ~$\enc(a,\rho)$.
  Thus, the event in  \eqref{eq:cond}  happens with probability at least~$1/3$.
\end{proof}

\section{Final comments}  

In our main results, Theorem~\ref{t:bruno},  Theorem~\ref{t:main}, and Theorem~\ref{t:codepolylog}, the encoding function is computable in time polynomial in $n$, but the channel-dependent decoding functions are not efficiently computable (except for the special cases that have been mentioned). 
This is to be expected given the strong universality property of the code.  We  have shown that for memoryless oblivious channels, and also for piecewise Hamming channels, there are universal codes with polynomial-time encoding and decoding (Theorem~\ref{t:randomdistort} and Theorem~\ref{t:piecewiseH}). It would be interesting to find other classes of channels that admit efficient universal codes. It seems natural to consider codes that are resilient to channels that \emph{compute} the distortion using algorithms with low computational complexity.  We have in mind channels that are similar to the computational channels proposed by Lipton~\cite{lip:c:codes}, but which are more general because  the distortion is bounded using our general setting for the Hamming scenario or the oblivious scenario, not by the Hamming weight of the error vector as in~\cite{lip:c:codes}.   Obtaining codes that are resilient to all channels with general distortion capabilities,  that are computable by algorithms in a given complexity class (say,  LOGSPACE, or ${\rm NC}^1$, or finite automata)  and that have efficient encoding and decoding would be very interesting even if they have non-optimal rate.


\section*{Acknowledgements} We are grateful to Andrei Romashchenko for  the helpful conversations we had. We also thank Alexander Shen, for suggesting the definition of ``invertible function,''  which turned out to be important for this paper. 
\bibliography{theory-3}

\bibliographystyle{alpha}

\newpage

\appendix

\section{Appendix}

\subsection{Proofs of Theorem~\ref{t:randomdistort} and Theorem~\ref{t:piecewiseH}}\label{s:proofranddistort}

\textbf{Proof of Theorem~\ref{t:randomdistort}.}  The construction uses the Justesen concatenation scheme, which combines an \emph{outer code}, with several \emph{inner codes.} First the message written with symbols from  the alphabet $\F_2^k$ (for some integer $k$)  is mapped by the encoder of the outer code  into a codeword over the same alphabet, and next each symbol of the codeword is mapped using the encoder of one the inner codes.

\emph{The outer code:}  Following Cheraghchi~\cite{che:c:codescondenser}, we use the linear time encodable/decodable  code constructed by Spielman~\cite{spi:j:code} as the outer code.
\begin{theorem}[~\cite{spi:j:code}]
\label{t:spielman}
For every constant $\alpha < 1$ and every positive integer $k$, there exist a constant $\beta_{\text{Spielman}} > 0$ and an explicit family of codes $(C_S)_{S \in \nat}$ over the alphabet $\F_2^k$, such that $C_S$ encodes messages of length $S$,   has rate $1-\alpha$,  and is error correcting for a fraction $\beta_{\text{Spielman}}$  of errors.The encoder and the decoder run in time that is linear in the bit-length of the codewords.  

More explicly,  for every $S \in \nat$, the encoder of $C_S$ maps $(\F_2^k)^S$ into $(\F_2^k)^D$, with $S/D \geq 1 - \alpha$,   and for every codeword $x \in (\F_2^k)^D$, and every $\tx$ such that that the relative Hamming distance between $x$ and $\tx$ is $\beta_{\text{Spielman}}$, the decoder on input $\tx$ returns the message encoded as $x$. The encoder and the decoder run in time $O(kD)$.
\end{theorem}

\emph{The inner codes:} The inner codes are obtained from the universal code from Proposition~\ref{p:sh1}. Recall that the encoder of this code is a function $\enc (m, H)$, where $H$ is a random 
$(t + \log(1/\epsilon)) \times n$ matrix $H$ with elements in $\F_2$. For $k=n - (t+ \log(1/\epsilon)$,  the encoder maps a $k$-bit message $m$ into an $n$-bit codeword,  which is the $m$-th element of the null space of $H$ (using some canonical ordering of the elements in the null space).  The value of $\epsilon$ will be picked later.  Let $d = n (t+\log (1/\epsilon))$. We use, as inner codes,  $D = 2^d$ codes $\enc_{H_1}, \enc_{H_2}, \ldots, \enc_{H_D}$, which are obtained from $\enc$ in Proposition~\ref{p:sh1}, by fixing the randomness to every $d$-bits string, i.e., for each $H \in \zo^d$,  $\enc_{H}(\cdot) \stackrel{def.}{=} \enc(\cdot, H)$. Each inner code maps a $k$-bit string, viewed in the natural way as an element of $\F_2^k$  into an $n$-bit string, which similarly is viewed in the natural way as an element of $\F_2^n$.

Equipped with  the outer code and the inner codes, we construct a new encoder ${\cal E}$  as  follows.

\emph{Concatenation scheme:} First, the $S$-symbol input message  $(m_1, \ldots, m_S)$ is encoded with the outer code into a $D$-symbol word $(c_1, \ldots, c_D)$.
Next, each $c_i$ is encoded with the inner code $\enc_{\rho_i}$.
\[
\begin{array}{l}
 (m_1, \ldots, m_S)  \\

\quad \quad \bigg \downarrow   \text{\quad \quad\quad outer code encoding;  each $m_i$ and $c_i$ is $k$-bits long } \\

 (c_1, \ldots, c_D)   \\

\quad \quad \bigg \downarrow \text{\quad \quad  \quad inner codes encoding } \\

 (\enc_{\rho_1}(c_1), \ldots,  \enc_{\rho_D}(c_D))
\end{array}
\]

Thus ${\cal E}$ maps binary strings of length $k\cdot  S$ into codewords that are binary strings of length  $n \cdot D$.

Consider now a $D$-memoryless channel $\cha$ in the oblivious scenario adjusted def for the oblivious scenario  that has distortion bounded by $t$. 
Recall that this means that: 
\begin{enumerate}
\item  there is  a set $E \subseteq \F_2^n$ of size $T = 2^t$, and 
\item  $\cha$ takes as  input a $D$-tuple $(x_1, \ldots, x_D) \in (\F_2^n)^D$ , and outputs $(y_1, \ldots, y_D)$, where each $y_i = x_i+e_i$,  for $e_i$  chosen uniformly at random in $E$,  independently of the other choices.  
\end{enumerate}

Suppose the sender encodes the message $(m_1, \ldots, m_S)$ into  the codeword $(x_1, \ldots, x_D)$ (where each $x_i \in \F_2^n$) and the channel distorts it into $(y_1, \ldots, y_D)$. 
\[
\begin{array}{l}
(m_1, \ldots, m_S) \\

\quad \quad \bigg \downarrow   \text{\quad \quad\quad  encoding with the concatenated  outer/inner code } \\

 (x_1, \ldots, x_D)  \\

\quad \quad \bigg \downarrow   \text{\quad \quad\quad channel distortion; $y_i=x_i + e_i$, for $e_i$ randomly chosen in $E$ } \\

 (y_1, \ldots, y_D)   \\

\end{array}
\]

Henceforth, we consider that $(x_1, \ldots, x_D)$ is fixed, and $(y_1, \ldots, y_D)$ is a random variable depending on the randomness of the channel. The decoder of the concatenated code, first calls the decoders of the $D$ inner codes respectively on each component of $(y_1, \ldots, y_D)$, which return the word $(z_1, \ldots, z_D)$, and next calls the decoder of the outer code on this latter word. We show that with high probability, this procedure reconstructs $(m_1, \ldots, m_S)$.

For each $e \in E$, we say that the matrix $H$ is \emph{good for $e$}  if $H e_1 \not= He$ for all $e_1 \in E - \{e\}$. In the proof of Proposition~\ref{p:sh1},   it is shown that if $H$ is good for $e$, then the decoder on input $x+e$ reconstructs $x$ and that for each $e \in E$, at least a fraction of $(1-\epsilon)$ of $H$'s are good for $e$.  By a standard averaging argument, it follows that  there is a set (of ``good" matrices) GOOD$_{\text{rand}}$ containing $(1-\sqrt{\epsilon})$ fraction of $H$'s  in $\zo^d$ and  a set (of ``good" noise vectors)  GOOD$_{\text{noise}}$  containing $(1-\sqrt{\epsilon})$   fraction of $e$'s  in $E$, such that every $H$ in GOOD$_{\text{rand}}$ is good for every $e$ in  GOOD$_{\text{noise}}$. By rearranging the tuple $(y_1, \ldots, y_D)$, we can assume that the first $(1-\sqrt{\epsilon})D$ components correspond to the ``good" matrices.  On every  $i$ in this segment of ``good" components,  if  $e_i =y_i - x_i$  is a ``good" noise vector,  the inner decoder on input $y_i$ correctly reconstructs $x_i$. Note that the probability (over the randomness of the channel)  that $e_i$ is a ``good" noise vector  is at least $1 - \sqrt{\epsilon}$.

Let $\mu$ be  the expected number of components of $(y_1, \ldots, y_D)$ on which the inner decoders are incorrect. Note that 
\[
\mu \le \sqrt{\epsilon} \cdot D + (1-\sqrt{\epsilon}) \cdot D \cdot \sqrt{\epsilon}.  
\]
In the sum above, the first term  corresponds to the errors made by the inner decoders in the  bad segment and the second term corresponds to the errors made by decoders in the good segment  for which the channel is choosing a bad $e$.  Thus, $\mu < \mu_H$, where $\mu_H \stackrel{def.}{=} 2 \sqrt{\epsilon}\cdot D$. We take $\gamma = \sqrt{\epsilon} \cdot D$.
\[
\prob [ \# \text{ errors } \ge 3 \sqrt{\epsilon} \cdot D] = \prob [ \# \text{ errors } \ge \mu_H + \gamma] \le e^{-2 \gamma^2/D} = e^{-2\epsilon \cdot D}. 
\]
 We have used a form of the Chernoff bounds~\footnote{Let $X = \sum_{i=1}^D X_i$, where $X_i$, $i \in \{1,D\}$ are independently distributed in $[0,1]$. Let $\mu_H \geq \mu$, where $\mu$ is the expected value of $X$. Then for every $\gamma > 0$, $\prob[X \geq \mu_H + \gamma] < e^{-2 \gamma^2/D}$.} for the case when we know an upperbound of the expected value (see Exercise 1.1 (a) in~\cite{dub-pan:b:concentration})

Take $\epsilon$ such that $3 \sqrt{\epsilon} \le \beta_{\text{Spielman}}$. Then, with probability at least $1 -  e^{-2\epsilon \cdot D}$, the inner decoders are correct on all except at most  
$\beta_{\text{Spielman}}$ fraction of positions. In such a case, the outer decoder is able to reconstruct the codeword $(x_1, \ldots, x_D)$ and then the message that is encoded into this codeword.

We now evaluate the runtime of the encoder and the decoder. The encoder calls first the encoder of the outer code, which takes time $O(nD)$, and then calls the encoders of the $D$ inner codes, each one running $O(n^2)$ steps (by Remark~\ref{r:complexity1}). Thus, the total time for encoding is $O(n^2 D)$, which is quasi-linear in $nD$ (the bit-length of a codeword).  The decoder first calls the $D$ inner deccoders, and each one runs in time $O(T n^2)$, so this step takes $O(DTn^2)$, which is $O((nD)^2)$ because $T = O(D)$. Next, it calls the decoder of the outer code, which runs in time $O(nD)$. So the total time is less than $O((Dn)^2)$, so quadratic in the bit-length of a codeword.

\bigskip

\textbf{Proof   of Theorem~\ref{t:piecewiseH}.}  We use again a concatenation scheme with an outer code and an innner code. The outer code is Spielman's error correcting code~\cite{spi:j:code}, presented in Theorem~\ref{t:spielman}.  The inner code is the code from Theorem~\ref{t:bruno}. Its encoder
$\enc_\rho(m)$ maps a $k$-bit message $m$ into an $n$-bit codeword, and $\rho$ represents the  random string used for both encoding and decoding (we recall that we are in the shared randomness setting). The outer code works with strings having symbols from the alphabet $\Sigma^k$.  The encoder of ${\cal E}$ first calls the encoder of the outer code, which  maps an $S$-symbol message $(m_1, \ldots, m_S)$ into a $D$-symbol codeword $(c_1, \ldots, c_D)$.  Next, each $c_i$ is encoded with the inner code $\enc_{\rho_i}$. The encoder and the decoder of ${\cal E}$ share randomness $\rho=(\rho_1, \ldots, \rho_D)$. 

\[
\begin{array}{l}
 (m_1, \ldots, m_S)  \\

\quad \quad \bigg \downarrow   \text{\quad \quad\quad outer code encoding;  each $m_i$ and $c_i$ is $k$-bits long } \\

 (c_1, \ldots, c_D)   \\

\quad \quad \bigg \downarrow \text{\quad \quad  \quad inner code encoding using randomness $\rho = (\rho_1, \ldots, \rho_D)$ } \\

 (\enc_{\rho_1}(c_1), \ldots,  \enc_{\rho_D}(c_D))
\end{array}
\]


Thus, the encoder of ${\cal E}$ maps a $kS$-bit string into an $nD$-bit string, and consequently has rate $(kS) /(nD) = (S/D) \cdot (k/n) \ge (1-\alpha) \cdot (1-t/n - o(1))$.

Suppose a piecewise Hamming channel  with graph $G = G_1 \times \ldots \times G_D$ distorts $(x_1, \ldots, x_D)$ into $(y_1, \ldots, y_D)$.
\[
\begin{array}{l} 
 (x_1, \ldots, x_D)  \\

\quad \quad \bigg \downarrow   \text{\quad \quad\quad channel distortion; $y_i$ is a neighbor of $x_i$ in $G_i$, chosen adversarially} \\

 (y_1, \ldots, y_D)   \\

\end{array}
\]

 We now describe the decoder of ${\cal E}$ corresponding to $G$. It first calls $\dec_{G_i, \rho_i}$ on $y_i$ (the decoder corresponding to the Hamming channel with graph $G_i$, and using randomess $\rho_i$), for all $i \in \{1, \ldots,  D\}$. With probability $(1-\epsilon)$, $\dec_{G_i, \rho_i}(y_i)$ correctly returns $x_i$. Thus the expected number of errors (from all inner decoders) is bounded by $\epsilon \cdot D$. It follows that the probability that the number of errors is larger than $2 \epsilon \cdot D$ is at most $e^{-2 \epsilon^2 D}$ (by Chernoff bounds). We choose $\epsilon$ so that $2 \epsilon \leq \beta_{\text{Spielman}}$.  It follows that with probability $1- e^{-2 \epsilon^2 D}$, the fraction of errors made by the inner decoders is smaller than the fraction of errors that can be corrected by Spielman's code. Therefore the decoding procedure of Spielman's code returns  the correct message with high probability.

We next evaluate the runtime of the encoder and the decoder of ${\cal E}$.  The encoder calls the encoder of the outer code, which runs for $O(nD)$ steps. Next, it calls $D$-times the inner encoder, and each one runs in $O(n)$ sterps. Thus the total time is $O(nD)$, that is the encoder runs in linear time in the bit-length of the codeword. The decoder calls the $D$ inner decoders, and each one runs in $2^n \cdot n$ steps. Then it calls the decoder of the outer code, which runs in $O(nD)$. Therefore, taking into account that $D = 2^n$, the decoder runs in time $O((nD)^2)$, that is in quadratic time in the bit-length of the codeword.

\subsection{The condenser from Theorem~\ref{t:guv} is linear}
\label{s:linearguv}

We observe that a very minor modification of the condenser constructed by Guruswami, Umans, and Vadhan in~\cite[Theorem 4.3, also Theorem~1.7]{guv:j:extractor} converts it into a linear condenser. A similar, but more general  modification  (because it works for finite fields of arbitrary characteristic, while our version is for characteristic $2$) has been made by Cheraghchi and Indyk~\cite{che-ind:j:linearcondenser}.

The condenser $C(f,y)$ from~\cite{guv:j:extractor} is viewing the first argument as a polynomial $f \in \F_q$ of degree at most $n-1$, where $q = 2^t$, so $\F_q$ is a field of characteristic $2$. More precisely if $f(Z) = f_0 + f_1 Z + \ldots + f_{n-1}Z^{n-1}$, then the first argument of the condenser is $ (f_0, \ldots, f_{n-1})$, which is represented as a binary string of length $nt$. The second argument $y$ is an element of $\F_q$, thus a binary string of length $t$. The condenser is also using $E[Z]$,  an irreducible polynomial of degree $n$ over $\F_q$, and  a parameter $h$ which can be taken to be a power of $2$. (Note: Requiring $h$ to be a power of $2$ is the only modification from the version in~\cite{guv:j:extractor}.)

The condenser is defined as
\begin{equation}\label{e:guvcondenser}
C(f, y)  = [ y, f(y), (f^h \bmod E)(y),   (f^{h^{2}} \bmod E)(y),  \ldots,  (f^{h^{m-1}} \bmod E)(y)].
\end{equation}

We need to show that each $(f^{h^i}(Z) \bmod E(Z))(y)$ is linear in $f$.

Let us fix $y \in \F_q$ and  consider (for some aribitrary $i$) $A_y : (\F_2)^{nt} \mapping  (\F_2)^t$, defined by
\[
A_y(f)  = [f^{h^i}(Z) \bmod E(Z)](y) \\
= [(f_0 + f_1 Z + \ldots + f_{n-1} Z^{n-1}) ^{h^i} \bmod E(Z)](y) 
\]
It is known that if $a$ and $b$ are elements of a field of characteristic $p$ and $h$ is a power of $p$,
\[
(a+b)^{h^i} = a^{h^i} + b^{h^i}
\]
In our case, the coefficients of the polynomials belong to $\F_q$, which has characteristic $2$, and $h^i$ is a power of $2$. Therefore,
\[
\begin{array}{ll}
A_y(f+g) &=  \big[(f+g)^{h^i}(Z) \bmod E(Z)\big](y)  \\ \\
               & = \big[\big((f_0+g_0) + (f_1+g_1)Z + \ldots + (f_{n-1}+g_{n-1})Z^{n-1}\big)^{h^i} \bmod E(Z)\big](y) \\ \\
& = \big[\big( (f_0+g_0)^{h^i}+  (f_1+g_1)^{h^i}Z^{h^i} + \ldots + (f_{n-1}+g_{n-1}\big)^{h^i}Z^{(n-1)h^i}) \bmod E(Z)\big](y) \\ \\
& = \big[\big(f_0^{h^i}+  f_1^{h^i}Z^{h^i} + \ldots + f_{n-1}^{h^i}Z^{(n-1)h^i}\big) \bmod E(Z)\big](y)  \\ \\
& \quad\quad  +\big[\big(g_0^{h^i}+  g_1^{h^i}Z^{h^i} + \ldots + g_{n-1}^{h^i}Z^{(n-1)h^i}\big) \bmod E(Z)\big](y) \\ \\
& = \big[\big(f_0 + f_1 Z + \ldots + f_{n-1} Z^{n-1}\big) ^{h^i}\ \bmod E(Z)\big](y) \\  \\
& \quad \quad  +  \big[\big(g_0 + g_1 Z + \ldots + g_{n-1} Z^{n-1}\big) ^{h^i}\bmod E(Z)\big](y) \\ \\
& = A_y(f) + A_y(g).
\end{array}
\]
Thus, each component of $C(f,y)$ from the equation~\eqref{e:guvcondenser} is linear and therefore for each  $y$  there exists a $mt$-by-$nt$ matrix $H_y$ with entries in 
$\F_2$ such that $C(f,y) = H_y f$.


\subsection{The condenser from Theorem~\ref{t:tu} is linear.}
 \label{s:lineartu}

The condenser constructed by Ta-Shma and Umans has a structure similar to the condenser from~\cite{guv:j:extractor}. The first argument is denoted $f$ and the second argument (the ``seed") is a pair $(x,y)$.  They use a parameter $h$ which is a power of a prime number $p$. 

The first argument is a polynomial of two variables of the form
\begin{equation}\label{e:polytu}
f(X,Y)  = \sum_{i=0, \ldots, n-1, j=0,1} \alpha_{i,j} X^i Y^j, 
\end{equation}
where the coefficients are in $\F_h$, the fields with $h$ elements, multiplications in $Y$ are modulo an irreducible polynomial $P(Y)$ of degree $2$ 
(formally, $Y$ is an element of $\F_h[X]/P(X)$)  with coefficients in $\F_h$, and the multiplications in $X$ are done modulo an irreducible polynomial $E(X)$ with coefficients in $\F_q = \F_h[Y]/p(Y)$ (formally, $X$ is an element of $\F_q[X]/E(X)$). 

The second argument consists of $x \in \F_q$ and $y \in\F_h$. The condenser is defined by
\[
C(f, (x,y)) = (C_0(f)(x,y), \ldots, C_{m-1}(f)(x,y)),
\]
where each component consists of a polynomial $C_i(f)$ with variables $X$ and $Y$ evaluated at $X=x$ and $Y=y$. The polynomial $C_i(f)$ has the form
\[
\alpha_0 f + \alpha_1 f^h + \ldots + \alpha_{m-1} f^{h^{m-1}}.
\]
For any two polynomials $f_1, f_2$ of the form~\eqref{e:polytu} and for all $\ell$, we have $(f_1+f_2)^{h^\ell} = f_1^{h^\ell} +  f_2^{h^\ell}$, because the polynomials have coefficients in a field with characteristic $p$ and $h^\ell$ is a power of the same prime number $p$, and therefore we can use the same argument as in Section~\ref{s:linearguv}. This implies that $C(f_1+f_2, (x,y)) = C(f_1,(x,y)) + C(f_2, (x,y))$.

\subsection{The condenser  from Theorem~\ref{t:rrv} is linear.}
 \label{s:linearrrv}

The condenser $C_{\rrv} :\zo^n \times \zo^{d}  \rightarrow \zo^{m}$ from~\cite{rareva:j:extractor} (which is actually an extractor) views the first argument $x$ as the specification of a function $u_x(\cdot,\cdot)$ of two variables (in a way that we present below), defines some functions $g_1(y), h_1(y), \ldots, g_m(y), h_m(y)$ (where $y$ is the second argument), each one computable in time polynomial in $n$, and then sets 
\begin{equation}
\label{e:linext}
C_{\rrv}(x,y) = u_x(g_1(y), h_1(y)), \ldots, u_x(g_m(y),h_m(y)),
\end{equation} i.e., the $i$-th bit is
$u_x(g_i(y),h_i(y))$. Thus, it is enough to check that $f_{v,w}(x) = u_x(v,w)$ is linear in $x$. Let us now describe $u_x$. The characteristic sequence of $u_x$ is the Reed-Solomon code of $x$. More precisely,  for some $s$, $x$ is viewed as a polynomial $p_x$ over the field $\F_{2^s}$. The elements of $\F_{2^s}$ are viewed as $s$-dimensional vectors over $\F_2$ in the natural way. Note that in this view the evaluation of $p_x$ at point $v$ is a linear transformation of $x$, i.e.,  $p_x(v) = B_v x$ for some $s$-by-$n$ matrix $B_v$  with entries from $\F_2$. Finally, $u_x(v,w)$ is defined as the inner product $w \cdot p_x(v)$ and therefore $u_x(v,w) = (w B_v)x$, and thus it is a linear function in $x$. Now we plug $h_i(y)$ as $w$ and $g_i(y)$ as $v$, and we build the matrix $A_y$, by taking its $i$-th row to be $h_i(y) B_{g_i(y)}$. Using the Equation~(\ref{e:linext}), we obtain that $C_{\rrv}(x,y) = A_y \cdot x$.

\subsection{On universal codes without shared random bits}
\label{s:nonshared}

  There exists a code that is $(t,\epsilon)$-resilient in the oblivious scenario with rate $1-2t/n-o(1)$ and does not use shared randomness.
  The encoding function uses a large amount of randomness, but it is not shared with the decoder. Unfortunately, the code is not explicit.
\begin{theorem}\label{th:nonexplicitCodes}
  There exists an encoding function $\enc \colon \mcM \times \mcR \rightarrow \mcX$ with $n \le 2t + k + O(\log (tk/\epsilon))$ 
  that is $(t,\eps)$-resilient in the oblivious scenario without shared randomness, where $n = \log \# \mcX$ and $k = \log \# \mcM$.
\end{theorem}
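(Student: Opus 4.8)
\emph{Proof plan.} The approach is a random-coding argument in the style of Shannon/Langberg private codes: pick the encoding uniformly at random, show by a union bound over all candidate noise sets together with a Chernoff estimate that a good encoding exists, and then let the decoder be deterministic and tailored to the actual noise set $E$. Set $K := \#\mcM = 2^k$ and $T := 2^t$, take $\mcX$ to be any finite group of order $N := 2^n$ with $n := 2t + k + C\log(tk/\epsilon)$ for a large enough absolute constant $C$, and let $\mcR$ have size $R$, the least integer exceeding $\tfrac{2}{\epsilon^2}\big(\ln K + 2nT\ln 2\big)$; I will check at the end that these choices force $2KRT \le \epsilon N$. Draw $\enc$ by letting the $KR$ values $\{\enc_\rho(m)\}_{m\in\mcM,\,\rho\in\mcR}$ be independent and uniform on $\mcX$.

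Call $\enc$ \emph{bad for} a pair $(m_0,B)$, where $m_0\in\mcM$ and $B\subseteq\mcX$ with $|B|\le T$, if $\enc_{\rho_0}(m_0)\in C_{m_0}+B$ holds for more than $\epsilon R$ of the $\rho_0\in\mcR$, where $C_{m_0} := \{\enc_\rho(m): m\in\mcM\setminus\{m_0\},\ \rho\in\mcR\}$ is the set of codeword slots of the other messages. The first step is to show $\Pr_\enc[\exists(m_0,B):\ \enc\text{ is bad for }(m_0,B)] < 1$. Fix $(m_0,B)$ and condition on all slots $\enc_\rho(m)$ with $m\ne m_0$; this fixes the set $D := C_{m_0}+B$, of size at most $(K-1)RT < KRT$, while the $\enc_{\rho_0}(m_0)$ remain i.i.d.\ uniform on $\mcX$. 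Hence $\#\{\rho_0:\enc_{\rho_0}(m_0)\in D\}$ is $\mathrm{Binomial}(R,p)$ with $p := |D|/N < KRT/N \le \epsilon/2$, so its mean $Rp$ is below $\epsilon R/2$, and Hoeffding's inequality bounds the probability that it exceeds $\epsilon R$ by $\exp(-\epsilon^2R/2)$. A union bound over the $K$ messages and the at most $2^{nT+t+1}$ admissible sets $B$, combined with the defining inequality for $R$, makes the total probability strictly below $1$. Fix an $\enc$ that is bad for no pair.

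Given an oblivious channel $E\subseteq\mcX$ with $\#E\le T$, define the deterministic, randomness-free, $E$-dependent decoder $\dec\colon\mcX\to\mcM$ by: $\dec(\tilde x)$ is the unique $m$ with $\tilde x\in\{\enc_\rho(m):\rho\in\mcR\}+E$ if exactly one such $m$ exists, and a fixed default message otherwise. To verify $(t,\epsilon)$-resilience, fix $m_0\in\mcM$, $e\in E$, and $\rho_0\in\mcR$, and set $\tilde x := \enc_{\rho_0}(m_0)+e$. Then $m_0$ is always a valid explanation of $\tilde x$, and a message $m_1\ne m_0$ is also a valid explanation precisely when $\enc_{\rho_1}(m_1)+e_1 = \tilde x$ for some $\rho_1$ and $e_1\in E$, i.e.\ precisely when $\enc_{\rho_0}(m_0)\in C_{m_0}+(E-e)$. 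Since $|E-e|\le T$ and $\enc$ is not bad for $(m_0,E-e)$, this fails for at least $(1-\epsilon)R$ of the $\rho_0$, and for each such $\rho_0$ the decoder returns $m_0$; thus $\Pr_\rho[\dec(\enc_\rho(m_0)+e)=m_0]\ge 1-\epsilon$, which is Definition~\ref{def:resilientOblivious} with no shared randomness.

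Finally one checks the parameters. From the definition of $R$ one gets $\log_2 R = t + O(\log(1/\epsilon)+\log k+\log n)$, and since the choice $n = 2t+k+O(\log(tk/\epsilon))$ forces $\log n,\log k,\log(1/\epsilon) = O(\log(tk/\epsilon))$, this gives $\log_2 R \le t + C'\log(tk/\epsilon)$ for an absolute constant $C'$; then $1 + k + \log_2 R + t + \log_2(1/\epsilon) \le k + 2t + (C'+2)\log(tk/\epsilon) \le n$ once $C\ge C'+2$, which is exactly $2KRT\le\epsilon N$. The main obstacle is precisely this balancing act: the union bound ranges over $\approx 2^{nT}$ candidate noise sets, which forces $\log R \gtrsim t$, while the spreading requirement $KRT\le\tfrac\epsilon2 N$ then forces $n\gtrsim k+2t+\log R+\log(1/\epsilon)$; reconciling both inside $n=2t+k+O(\log(tk/\epsilon))$ is what makes the argument go through, and the factor $2t$ (rather than the optimal $t$) is an artifact of that union bound. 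Everything else is routine — the random-coding step is a standard first-moment argument, and the Binomial/Hoeffding estimate is immediate once the slots of the other messages are revealed first.
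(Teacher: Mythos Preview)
Your proof is correct and follows essentially the same route as the paper's: random coding, conditioning on the other messages' slots to get a Binomial count, a Chernoff/Hoeffding tail bound, and a union bound over all candidate noise sets of size at most $T$ (the paper unions over triples $(E,m,e)$ with $N^T\cdot M\cdot T$ terms, you union over pairs $(m_0,B)$ with $K\cdot 2^{nT+t+1}$ terms, which is the same up to constants). The decoders differ cosmetically---the paper uses a greedy search whereas you test for uniqueness---but both rely on the identical combinatorial guarantee that $\enc_{\rho_0}(m_0)\in C_{m_0}+(E-e)$ for at most an $\epsilon$-fraction of~$\rho_0$.
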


\noindent
A similar result can be shown for the Hamming scenario, but in this case the rate is: $\tfrac{1}{2}-\tfrac{t}{n} - o(1)$.

\begin{proof}
  We show that a random code satisfies the properties. First we define some parameters for later reference.
  Let $T = 2^t$ and $M = \# \mcM$ (which equals $2^k$). Let $N$ be such that 
  \[
    N = 2RMT/\epsilon, \quad \text{with} \quad R = \tfrac 6 \eps (\log (MT) + T \log N).
  \]
  
  For each $\rho \in \mcR$ and $m \in \mcM$, select $\enc_\rho(m)$ randomly in $\mcX$. 
  We show that for all sets $E$ of size $T$, all $m \in \mcM$ and all $e \in E$, we have
  \begin{equation}\label{e:goal}
    \# \left(\enc_\mcR(m) + e \right) \cap \left(\enc_\mcR (\mcM \setminus\{m\}) + E \right) \;\,\le\;\, \epsilon R,
  \end{equation}
  where for $\mcM' \subseteq \mcM$,
  \[
    \enc_\mcR(\mcM') + E \;=\; \left\{\enc_{\rho'}(m') + e' : \rho' \in \mcR, m' \in \mcM', e' \in E \right\}
  \]
  and similar for $\enc_\mcR(m)+e$.
  This implies that given a set $E$, we obtain a decoding function $\dec$ that satisfies~\eqref{eq:oblivious} using a greedy search, 
  i.e., $\dec(\tx)$ is the first message $m$ that appears in a search for triples $(\rho,m,e)$ that satisfy $\enc_\rho(m) + e = x$.
  It remains to prove that a random function $\enc$ satisfies this property.
  
  Fix some $E,m,e$ in the condition~\eqref{eq:oblivious}.
  Assume we have already randomly selected $\enc_\mcR(m')$ for all $m' \not= m$, and we will now select $\enc_\rho(m)$ for all~$\rho \in \mcR$.
  The probability that a random $x \in \mcX$ belongs to $B := \enc_\mcR (\mcM \setminus\{m\}) + E$
  is at most~$MT/N$. This is bounded by $\epsilon/(2R)$ by choice of~$N$.  Therefore, if we choose $R$ many such elements $x$ at random, the expected number of them that fall in $B$ is at most $\epsilon/2$. 
  The probability that more  than $\epsilon R$ elements are in $B$  is at most
  \[
    \exp(-\epsilon R/6)
  \]
  by the Chernoff bound in multiplicative form. 
  The probability that this happens for some set $E$ of size $T$, some element $m \in M$ and some $e \in E$, is at most
  \[
    N^T \cdot M \cdot T \cdot \exp(-\epsilon R/6)
  \]
  by the union bound.
  By the choice of $R$,  this is less than~$1$.
  Hence, with positive probability the conditions are satisfied, and in particular the encoding function exists.
\end{proof}

\subsection{The additive Hamming scenario:  intermediate resilience between the oblivious and the Hamming scenarios}\label{s:weakhamming}

In the oblivious scenario, a universal code is resilient to channels that do not have access to the transmitted codeword $x$ and that corrupt it by  adding a noise vector $e$ from a fixed set $E$.
In the Hamming scenario, a universal code is resilient to channels that have access to the transmitted codeword $x$ and that  corrupt it by adding   a  noise vector $e$ which depends on $x$ (because $e=x + \tx$, where $\tx$ is a neighbor of $x$ in the bipartite graph). 
In this section we consider universal codes that are resilient to an intermediate type of channels: they have access to $x$ (like in the Hamming scenario), but can only add $e$ from a fixed set $E$ (like in the oblivious scenario).
In other words, in the additive Hamming scenario we use the same definition of resilience applied to Hamming channels of the following special form. 
The bipartite graphs have left set $\mcX$ and right set $\mctX = \mcX$, and the edges are given by
\[
  \bigcup_{x \in \mcX} \{x\} \times \left(\{x\} + E\right),
\]
for some set~$E$.

\begin{figure}[ht]
\centering
\begin{tikzpicture}[->,>=stealth',shorten >=1pt,scale=0.8,auto,node distance=3cm, transform shape]
 \node[] (1) at (1,4) {$\rho$};
  \node[] (2) at (1,2) {$m$};
  \node[fNode](4) at (3,2) {$\enc$};
  \node[rNode](5) at (5,2) {$\cha$};
  \node[fNode](6) at (8.5,2){$\dec$};
  \node(7)[] at (10.5,2) {$m$};
 \path[every node/.style={font=\sffamily\small}]
    (1) edge node  [left]{}(4)
    (2)    edge node [left]{}(4)
(4) edge [] node [above]{$x$} (5)
      (5) edge   node [above]{$\tx = x + \cha(x)$} (6)
        (6) edge (7);
  \draw (1) -| (6);
\end{tikzpicture}
  \caption{The additive Hamming scenario: $\cha(x)$ is chosen from a fixed set $E$ of size $2^t$.}
\label{f:weakhamming}
\end{figure}
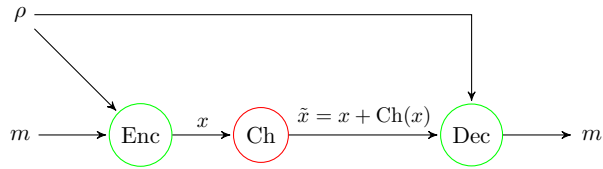

\begin{definition}\label{def:weakHamming} 
  Let $\mcX$ be an additive group.  A private code $\enc \colon \mcM \times \mcR \rightarrow \mcX$ is {\em $(t,\epsilon)$-resilient in the additive Hamming scenario} 
 if for every set  $E \subseteq \mcX$ of size at most $2^t$, 
  there exists a decoding function $\dec \colon \mctX \times \mcR \rightarrow \mcM$ such that
  for all channel functions $\Ch \colon \mcX \mapping E$ and all $m \in \mcM$
  \[
    \Pr_{\rho \in \mcR}  \left[ \dec_\rho( \enc_\rho(m) + \Ch(\enc_\rho(m))) = m \right] \;\, \ge\;\, 1-\epsilon.
    \]
\end{definition}

Recall that there exists a $(t, \epsilon)$ universal code in the oblivious scenario with rate $1-t/n - o(1)$ that uses $O(\log n)$ random bits, and a $(t, \epsilon)$ universal code in the Hamming  scenario with rate $1-t/n - o(1)$ that uses $2n$ random bits.  The question is how many random bits are needed for  a universal code in the additive Hamming scenario to achieve the same rate.

We show that there is a universal code in the additive Hamming scenario that uses $n + O(\log n)$ random bits. This universal code is obtained from the following general result that shows that a code for the oblivious scenario can be transformed into a code for the additive Hamming scenario at the cost of using more shared randomness.


\begin{proposition}\label{prop:}
 If there exists a $(t,\epsilon)$-resilient code $\enc \colon \mcM \times \mcR \rightarrow \mcX$ in the oblivious scenario, then 
  there exists a $(t,\epsilon)$-resilient code $\enc' \colon \mcM \times \mcR' \rightarrow \mcX$ in the additive Hamming scenario with $\mcR' = \mcR \times \mcX$.
\end{proposition}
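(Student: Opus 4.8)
The plan is to transform a code $\enc$ for the oblivious scenario into a code $\enc'$ for the additive Hamming scenario by letting the encoder use its extra shared randomness to pick a uniformly random shift $r\in\mcX$, i.e.\ set $\enc'_{(\rho,r)}(m) = \enc_\rho(m) + r$. The intuition is that a channel in the additive Hamming scenario sees the codeword $x=\enc_\rho(m)+r$ and chooses $\Ch(x)\in E$ as a function of it, but because $r$ is uniformly random and unknown to the channel, the value $x$ reveals nothing about $(\rho,m)$ to the channel beyond what it would learn in the oblivious model; more precisely, for each fixed $\rho,m$, as $r$ ranges over $\mcX$ the codeword $x$ is uniform over $\mcX$, so the distribution of the noise vector $e=\Ch(x)$ that gets added is some distribution over $E$ that is statistically independent of $\rho$ once we condition appropriately. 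The decoder $\dec'$, knowing $\rho$ and $r$, receives $\tx = \enc_\rho(m)+r+e$, subtracts $r$ to get $\enc_\rho(m)+e$, and then runs the oblivious decoder $\dec_\rho$ for the set $E$.

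The key steps, in order, are as follows. First, fix an arbitrary set $E\subseteq\mcX$ with $\#E\le 2^t$ and an arbitrary channel function $\Ch\colon\mcX\to E$; we must produce $\dec'$ satisfying the resilience inequality. Second, define $\dec'_{(\rho,r)}(\tx) = \dec_\rho(\tx - r)$, where $\dec_\rho$ is the oblivious decoder guaranteed for the set $E$ by the hypothesis. Third, compute, for a fixed message $m$,
\[
  \Pr_{(\rho,r)}\bigl[\dec'_{(\rho,r)}(\enc'_{(\rho,r)}(m) + \Ch(\enc'_{(\rho,r)}(m))) = m\bigr]
  = \Pr_{\rho,r}\bigl[\dec_\rho\bigl(\enc_\rho(m) + \Ch(\enc_\rho(m)+r)\bigr) = m\bigr].
\]
Now the trick: substitute the change of variables $x = \enc_\rho(m) + r$; for each fixed $\rho$, as $r$ is uniform over $\mcX$ so is $x$, and $r = x - \enc_\rho(m)$. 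Fourth, observe that the event inside depends on $x$ only through $e := \Ch(x)\in E$, so the probability equals $\mathbb{E}_{\rho}\,\mathbb{E}_{x}\bigl[\dec_\rho(\enc_\rho(m)+\Ch(x)) = m\bigr]$. Fifth, swap the order: write this as $\mathbb{E}_{x}\,\Pr_\rho[\dec_\rho(\enc_\rho(m) + \Ch(x)) = m]$; for each fixed $x$, the vector $e = \Ch(x)$ is a fixed element of $E$, so by the oblivious resilience hypothesis applied to this particular $e$ and message $m$, the inner probability is at least $1-\epsilon$. Averaging over $x$ preserves the bound, giving $\ge 1-\epsilon$, as required.

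The only genuinely delicate point — and the step I would be most careful about — is the independence/decoupling in steps three through five: one must check that after the substitution $x = \enc_\rho(m)+r$, the randomness $r$ no longer appears except through $x$, and that $\rho$ and $x$ become independent uniform variables (over their respective domains) so that the expectations can legitimately be separated and reordered by Fubini. This works precisely because $r$ is a fresh uniform shift drawn independently of $\rho$, and addition by a uniform group element is a bijection of $\mcX$; care is only needed to confirm the channel function $\Ch$ is allowed to depend on the full codeword $x$ (it is, by Definition~\ref{def:weakHamming}) but cannot see $\rho$ or $r$ separately. Once that is in place, the rate is unchanged and the shared randomness grows from $\mcR$ to $\mcR\times\mcX$, which is exactly the claim; combined with Theorem~\ref{t:main} this yields a universal code in the additive Hamming scenario using $n + O(\log n)$ shared random bits.
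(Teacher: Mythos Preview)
Your proposal is correct and follows essentially the same argument as the paper: define $\enc'_{(\rho,r)}(m)=\enc_\rho(m)+r$ and $\dec'_{(\rho,r)}(\tx)=\dec_\rho(\tx-r)$, then observe that the random shift $r$ makes the codeword uniform over $\mcX$ and independent of $\rho$, so the channel's added error $\Ch(\enc_\rho(m)+r)$ is distributed in $E$ independently of $\rho$, reducing the additive Hamming condition to a convex combination of instances of the oblivious guarantee. The paper presents exactly this construction and reasoning (with $z$ in place of your $r$), and your identification of the independence step as the one point needing care matches the paper's own emphasis.
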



The encoding function $\enc'$ is obtained by adding a random element from $\mcX$ to $\enc$. 
In this way, for each fixed message $m \in \mcM$, the distribution of added elements by the channel function does not depend on the encoding function, 
and this allows us to apply the oblivious scenario. We present the details.

\begin{proof}
  By definition of $(t,\epsilon)$-resilience in the oblivious scenario, we have that for all $m$ and $e \in E$:
  \begin{equation}\tag{\ensuremath{*}}\label{e:assumption}
    \Pr_\rho \left[\dec_\rho(\enc_\rho(m) + e) = m \right] \;\ge\; 1-\epsilon
  \end{equation}
  The new encoding $\enc'$ evaluates $\enc$ and adds a random $z \in \mcX$, where $\rho' = (\rho,z)$.  Thus $\enc'_{\rho'}(m) = \enc_\rho(m) + z$.
  The new decoding $\dec'$ function subtracts $z$ and evaluates $\dec$, thus $\dec'_{\rho'}(\tilde{x}) = \dec_\rho(\tilde{x}-z)$.

  We now verify the requirement in Definition~\ref{def:weakHamming} with $\rho' = (\rho,z)$. 
  Let $\cha$ be a channel function and let $m \in \mcM$.
  Note that since $\enc'$ adds $z$ at the end of its evaluation, and $\dec'$ starts by subtracting $z$, the 
  variable $z$ only appears in this condition through the channel function in the term
  \[
    \cha(\enc_\rho(m) + z).
  \]
  Since $\rho$ and $z$ are independent, the value of $\enc_\rho(m) + z$ is uniform in~$\mcX$ and independent of $\rho$.
  Hence, we could replace the above quantity by $\cha(z)$.
  Let $\xi = \cha(z)$ be this random variable, which has values in~$E$.
  The condition of  definition~\ref{def:weakHamming} can now be written as
  \[
    \Pr_{\rho,\xi} \left[\dec_\rho(\enc_\rho(m) + \xi) = m \right] \;\ge\; 1-\epsilon
  \]
  This is a convex combination of \eqref{e:assumption} for various $e \in E$, and 
  hence the inequality holds.
\end{proof}

\if01
\begin{proposition}[Invertible function $\rightarrow$ code in the additive Hamming scenario]
\label{l:sh3}
Let $(F_n)$ be a family of functions such that for every $n$, $F_n :\zo^n \times \zo^{d_n} \mapping \zo^{t_n+\Delta_n}$ is linear,  $(t_n, \epsilon_n)$-invertible and $t_n + \Delta_n \leq n$.
 \smallskip

Then there exists a private code $\enc$  that is  $(\tk_n, \epsilon_n)$-resilient in the additive  Hamming scenario, with rate $1- (\tk_n+\Delta_n)/n$, and such that the encoder and the decoder share  $n+d_n$ random bits.
\end{proposition}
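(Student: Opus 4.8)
The plan is to obtain the asserted code for the additive Hamming scenario by \emph{composing} two results already proved in the paper: Proposition~\ref{p:sh2}, which turns a linear invertible function into a code for the oblivious scenario, and Proposition~\ref{prop:}, which turns any code for the oblivious scenario into one for the additive Hamming scenario at the cost of $\log\#\mcX$ extra shared random bits. The only real work is bookkeeping: checking that the parameters produced by this composition are exactly those claimed.

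Concretely, fix $n$. The hypothesis that $F_n\colon\zo^n\times\zo^{d_n}\mapping\zo^{t_n+\Delta_n}$ is linear and $(t_n,\epsilon_n)$-invertible, together with the hypothesis $t_n+\Delta_n\le n$ (which is precisely what guarantees that the null space of each $(t_n+\Delta_n)\times n$ parity-check matrix used in Proposition~\ref{p:sh2} has dimension at least $n-(t_n+\Delta_n)$, so that the encoder there is well defined), lets us apply Proposition~\ref{p:sh2}. This produces a private code $\enc_n\colon\zo^{\,n-(t_n+\Delta_n)}\times\zo^{d_n}\mapping\zo^n$ that is $(t_n,\epsilon_n)$-resilient in the oblivious scenario, has rate $1-(t_n+\Delta_n)/n$, and uses $d_n$ shared random bits. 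Now apply Proposition~\ref{prop:} to $\enc_n$ with $\mcX=\zo^n$: it outputs a private code $\enc'_n\colon\zo^{\,n-(t_n+\Delta_n)}\times\bigl(\zo^{d_n}\times\zo^n\bigr)\mapping\zo^n$ that is $(t_n,\epsilon_n)$-resilient in the additive Hamming scenario. Since Proposition~\ref{prop:} leaves the message set and the codeword set unchanged, the rate is still $1-(t_n+\Delta_n)/n$, and the shared randomness is now $\log\#\bigl(\zo^{d_n}\times\zo^n\bigr)=n+d_n$ bits. Taking $\enc=(\enc'_n)_n$ is exactly the family in the statement.

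I do not expect a genuine obstacle here, since both ingredients are already established; the single point worth checking explicitly is that the error probability survives the composition. In Proposition~\ref{prop:} the new encoder adds a fresh uniform $z\in\zo^n$ to $\enc_n$, so $\enc_{n,\rho}(m)+z$ is uniform in $\zo^n$ and independent of $\rho$; hence the channel's deterministic choice $e=\cha(\enc_{n,\rho}(m)+z)\in E$ is independent of $\rho$, and the additive-Hamming success probability is a convex combination over $e\in E$ of the oblivious-scenario success probabilities $\Pr_\rho[\dec_{n,\rho}(\enc_{n,\rho}(m)+e)=m]\ge 1-\epsilon_n$, hence is itself $\ge 1-\epsilon_n$, giving $(t_n,\epsilon_n)$-resilience. (Alternatively, one can inline the proofs of Propositions~\ref{p:sh2} and~\ref{prop:}: share $(\rho,z)$, encode $m$ as $\enc_{n,\rho}(m)+z$ where $\enc_{n,\rho}(m)$ is the $m$-th element of the null space of the matrix $H_\rho$ with $F_n(x,\rho)=H_\rho x$, and on receiving $\tx$ compute the syndrome $H_\rho(\tx-z)=H_\rho e=F_n(e,\rho)$, run the inverter of $F_n$ on it with list $E$ to recover $e$, then $\enc_{n,\rho}(m)=\tx-z-e$, then $m$; the randomization by $z$ is what makes the noise vector independent of $\rho$ so that the invertibility guarantee applies.)
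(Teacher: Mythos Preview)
Your proposal is correct and essentially equivalent to the paper's proof. The paper gives the direct (``inlined'') construction you sketch in your parenthetical remark: share $(\rho,a)$, set $x=a+(\text{$m$-th element of }\mathrm{Null}(H_\rho))$, and on receiving $\tx$ compute the syndrome $H_\rho\tx-H_\rho a=H_\rho e(x)$ and invert; the key step is the observation that $x$ is uniform and independent of~$\rho$, so the invertibility bound applies to the (now $\rho$-independent) noise. Your primary route---compose Proposition~\ref{p:sh2} with Proposition~\ref{prop:}---is simply the modular packaging of the same construction, and the parameter bookkeeping you carry out is exactly right.
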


\begin{proof}
The proof in Proposition~\ref{p:sh2} does not work for the additive Hamming scenario, because now the noise $e$ depends on $\rho$ and consequently we cannot infer that the inverter $g$ finds $e$ with high probability.  We make a modification:  when  $\enc$ builds the codeword, he  adds a random mask $a$,  so that the codeword and, consequently also the noise $e$,  are independent of $\rho$. 

We next present the details. We use the same notation as in Proposition~\ref{p:sh2} (thus we drop the subscript $n$, $H_\rho$ is the matrix defining the linear function $F(\cdot, \rho)$, etc.).
The encoding and decoding procedures are as follows:
\begin{enumerate}
\item The encoder $\enc$  and the decoder $\dec$  share a random string $\rho \in \zo^d$ and a random string $a \in \zo^n$.
\item Consider a fixed message $m$  of length  $n - (\tk + \Delta)$. $\enc$, using $\rho$ and $a$,  computes the codeword $x$ of length $n$ as follows:
\[ x = a + (m \mbox{-th  element in the null space of } H_\rho).
\]
 Note that the dimension of the null space of $H_\rho$ is at least  $n - (\tk + \Delta)$, because the rank of $H_\rho$ is at most $\tk + \Delta$.  Thus the encoder is well defined.

Also, note that for every $\rho_0 \in \zo^d$, the distribution of $x$, conditioned on $\rho=\rho_0$, is the uniform distribution over $(\F_2)^n$. Thus, $x$ and $\rho$ are independent. 

\item The decoder $\dec$ has $\rho$ and $a$ (shared with the encoder) and he receives  $\tx = x+e(x) $, where  $e(x)  = \cha(x) $ is the noise added by the channel, and which belongs to the fixed set $E$ of size at most $2^t$ that defines the channel. We have  $x-a \in {\rm NULL}(H_\rho)$ and thus $H_\rho \tx = H_\rho(x - a + a +e(x)) =H_\rho a + H_\rho e(x)$.

$\dec$ computes  $p = H_\rho \tx - H_\rho a$.  By the above observation, $F(e(x), \rho) = H_\rho e(x)  = p$.  Now, we use the inverter $g$ corresponding to $F$ for the list of suspects $E$. More precisely, $\dec$ runs the inverter $g$ on input $F(e(x), \rho)$  \marius{and list $E$,}  and prints out the output of $g$.
Thus, the probability that $\dec$ outputs $e(x)$ is equal to \marius{[modifications below,  because the syntax of g has changed.]}
\begin{equation}\label{e:prob1}
\prob_{\rho,a} [g_E (F(e(x), \rho)) = e(x)].
\end{equation}
This probability is equal to 
\[
\sum_{e \in E} \prob [g_E (F(e, \rho))= e \wedge e=e(x)] = \sum_{e \in E} \prob  [g_E(F(e, \rho))= e] \cdot \prob[e=e(x)].
\]
The above equality holds because $x$ and $\rho$ are independent and the event ``$g_E(F(e, \rho)) = e$" does not depend on $a$.
The inverter $g$ guarantees that for every $e \in E$,  $\prob [g_E(F(e, \rho))= e]  \geq 1-\epsilon$. It follows that the probability in~\eqref{e:prob1}  is also at least $1-\epsilon$.

\end{enumerate}
\end{proof}
Now, if we plug the invertible function from Theorem~\ref{t:inv} in the above lemma, we obtain the following result.
\begin{theorem}\label{t:weakhcode}
There exists a universal code in the additive Hamming scenario with the same parameters as the universal code for the oblivious scenario in Theorem~\ref{t:main}, except that the number of shared random bits is $n + O(\log n + \log(1/\epsilon))$.
\end{theorem}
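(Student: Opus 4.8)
The plan is to derive Theorem~\ref{t:weakhcode} directly from the oblivious-scenario code of Theorem~\ref{t:main} together with the generic reduction of Proposition~\ref{prop:}. First I would apply Theorem~\ref{t:main} to obtain a polynomial-time private code $\enc\colon\zo^k\times\zo^d\mapping\zo^n$ that is $(\tm,\epsilon)$-resilient in the oblivious scenario, with $k\ge n-\tm-c\big(\tfrac{\tm}{\log n}\log(1/\epsilon)+\log(n/\epsilon)\big)$ and $d\le c'(\log n+\log(1/\epsilon))$. Feeding this code into Proposition~\ref{prop:} (with $\mcX=\zo^n$) yields a code $\enc'$ that is $(\tm,\epsilon)$-resilient in the additive Hamming scenario over the randomness set $\mcR'=\mcR\times\zo^n$; since $\log\#\mcR'=d+n$, the shared randomness becomes $n+O(\log n+\log(1/\epsilon))$, while the message length $k$, and hence the rate and the error parameter, are unchanged. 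This is exactly the claimed statement. The very same code is also produced by plugging the linear invertible function of Theorem~\ref{t:inv} into Proposition~\ref{l:sh3}, which is the route the placement of the theorem in the text anticipates; in both descriptions the extra $n$ shared bits are the random mask $z\in\zo^n$ that is added to the codeword so that, for each fixed message, the transmitted codeword — and therefore the channel's noise choice $\cha(\enc_\rho(m))$ — becomes uniform and independent of the seed $\rho$.

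Beyond this parameter bookkeeping there is little to do. One checks the admissibility condition of Proposition~\ref{prop:} / Proposition~\ref{l:sh3}, namely $\tm+\Delta\le n$, which is precisely the regime in which Theorem~\ref{t:main} has positive rate, so no new hypothesis is introduced. One also verifies that efficiency is preserved: the encoder $\enc'$ merely runs $\enc$ and adds the shared mask $z$, so it stays polynomial-time in $n$; the decoder $\dec'$ subtracts $z$ and then calls the oblivious-scenario decoder, so by Remark~\ref{r:complexity} it runs in time polynomial in the time needed to enumerate $E$ and in space polynomial in $n$ given oracle access to $E$. Thus the complexity claims carry over verbatim, and the only change relative to Theorem~\ref{t:main} is the additive $n$ in the randomness count.

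I do not expect a genuine obstacle here, since all the technical weight already sits in the two results being combined: Theorem~\ref{t:main} does the work of building the low-randomness oblivious-scenario code out of condensers, and Proposition~\ref{prop:} contains the one subtlety specific to the additive Hamming scenario. That subtlety is that in this scenario the noise vector $\cha(\enc_\rho(m))$ may correlate with the seed $\rho$, which would break the convex-combination argument reducing resilience to the oblivious case; masking $\enc_\rho(m)$ by a fresh uniform $z$ decorrelates it at the cost of the $n$ extra shared bits. What remains for Theorem~\ref{t:weakhcode} is only to instantiate the reduction and to observe that the resulting parameters are exactly those of Theorem~\ref{t:main} with $n$ added to the shared randomness; the one point worth double-checking is that the constant rescaling of $\epsilon$ performed inside the proof of Theorem~\ref{t:inv} (and hence implicit in Theorem~\ref{t:main}) only affects the hidden constants in $d$ and $\Delta$, which it does.
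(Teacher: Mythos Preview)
Your proposal is correct and matches the paper's approach. The paper obtains Theorem~\ref{t:weakhcode} by plugging the invertible function from Theorem~\ref{t:inv} into Proposition~\ref{l:sh3}, which is exactly the second route you describe; your primary route via Theorem~\ref{t:main} and Proposition~\ref{prop:} is the equivalent packaging used in the surrounding (active) text, and as you note the two produce the same code.
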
  
\fi
Thus, a resilient code in the oblivious scenario for $\mcX = \{0,1\}^n$ that uses $r$ bits of shared randomness, can be transformed into a code for the additive Hamming scenario that has the same rate and uses $n+r$ bits of randomness.
In particular, there exist $(t,\epsilon)$-resilient codes that use $d = n+O(\log n)$ randomness by Theorem~\ref{t:main}. 
This is better than the value $2n$ in Theorem~\ref{t:bruno} for general Hamming channel,  but is far from the $O(\log n)$ value achieved by the universal code for the oblivious scenario from Theorem~\ref{t:main}. In the next theorem, we provide a lower bound, which shows that the number of random bits has to be $\Omega(n)$ for values $k, t = \Omega(n)$ (which  are typical in most applications).

\begin{theorem}\label{p:hamlb}
  Let $\epsilon<1$. If a code $\enc\colon \zo^k \times \zo^d \rightarrow \zo^n$ is $(t, \epsilon)$-resilient in the additive Hamming scenario, then $d \ge \min\{k,t/2\}-O(1)$. 
\end{theorem}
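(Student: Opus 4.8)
The plan is to combine the two lower-bound arguments already developed in the paper (Theorem~\ref{t:lbobliv} for the oblivious scenario and Theorem~\ref{t:hamlb}/Lemma~\ref{lem:diagonalization} for the Hamming scenario), adapting them to the additive Hamming setting. Since the additive Hamming channel is more powerful than the oblivious channel (it may choose the error vector $e \in E$ depending on the codeword), any lower bound on shared randomness proved for the oblivious scenario via an adversarial choice of $e$ carries over verbatim here; so the real work is to recover the stronger $\Omega(\min\{k,t\})$ (rather than $\Omega(\log t)$) bound, which is where the extra power of the channel must be exploited.

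First I would reduce to the case of two messages as in the earlier proofs: if we fix two messages $\texttt{a},\texttt{b}\in\zo^k$, the encoding restricted to these two messages already must be resilient, so a lower bound in that case suffices. Write $D = 2^d$ and $\mcR = \{1,\dots,D\}$. Following the proof of Theorem~\ref{t:lbobliv}, consider the set $E$ spanned (over $\F_2$) by the difference vectors $v_\rho = \enc_\rho(\texttt{a}) - \enc_\rho(\texttt{b})$ for $\rho\in\mcR$; this set has size at most $2^D$, but crucially we want to bound the dimension better. The key observation is that for the code to be $(t,\epsilon)$-resilient we need $\dim(\mathrm{span}\{v_\rho\}) \le t$ is \emph{not} automatically forced — instead, we argue contrapositively: if $d < \min\{k,t/2\}-O(1)$, then the $v_\rho$ span a space of dimension at most $d$, which is well below $t$, and we have a lot of ``room'' (because $t/2 > d$) to build an adversarial channel. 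The adversarial channel, given a codeword $x$, will be allowed to add any vector in a carefully chosen $E$ of size $2^t$ that confuses $\texttt{a}$ with $\texttt{b}$ on many values of $\rho$. The additive feature lets us make the added vector depend on $x = \enc_\rho(m)$, hence implicitly on $\rho$, which is what gives the quadratic improvement over the oblivious bound — this is precisely the mechanism used in Lemma~\ref{lem:diagonalization}, where the channel function $\Ch$ routes each nonzero column to a ``pointer,'' using the codeword to decide the error.

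Concretely, I would set up a matrix indexed by the possible values $\enc_\rho(\texttt{a})$ (rows) and $\enc_\rho(\texttt{b})$ (columns), with entry $(x,y)$ equal to $\#\{\rho : \enc_\rho(\texttt{a})=x,\ \enc_\rho(\texttt{b})=y\}$, whose entries sum to $D = 2^d$. As in Lemma~\ref{lem:diagonalization}, I case-split on whether the weight concentrates on ``heavy'' rows, ``heavy'' columns, or neither; in each case I construct a channel of the form $x\mapsto x + e(x)$ with $e(x)$ drawn from a set $E$ of size $2^t$ (the set of differences routed through at most $2^{t}$ ``pointer'' vectors, or a zeroed-out sub-block), such that on an $\epsilon' \ge$ constant fraction of $\rho$, the distorted codeword for $\texttt{a}$ and the distorted codeword for $\texttt{b}$ coincide — forcing the decoder to err with probability bounded below by a constant. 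The size bound $2^t$ on $E$ is what requires $t \ge 2d - O(1)$ (roughly $2^t \ge \#\text{pointers} \times \#\text{something}$, mirroring the $D > T^2$ bound in Lemma~\ref{lem:diagonalization} with $T=2^{t/2}$), and the constraint $k\ge d-O(1)$ arises because we need at least $2^{d}$ distinct values among $\enc_\rho(\texttt{a})$ to even run the counting argument (if $k$ is tiny the code trivially fails for a different reason, or the $\min$ just picks up $k$). The main obstacle I anticipate is getting the \emph{additive} structure to cooperate with the pointer construction: in Lemma~\ref{lem:diagonalization} the channel was a general Hamming channel, but here the edge set must have the special form $\bigcup_x \{x\}\times(\{x\}+E)$ for a single fixed $E$ of size $2^t$. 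So I must verify that the pointer-routing channel can be realized as ``add a vector from a fixed set $E$'': the set of all differences $\{y - x\}$ over edges $(x,y)$ of the pointer graph must be packed into $2^t$ distinct values, which is exactly the counting that yields the $t/2$ in the bound. Once that packing is checked, the resilience inequality fails for one of the two messages with probability $\ge$ some constant $> \epsilon$ for suitable $\epsilon<1$, completing the contradiction.
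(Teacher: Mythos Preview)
Your plan diverges entirely from the paper's argument, and the step you yourself flag as ``the main obstacle'' is a genuine gap that you do not close.

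The paper does not touch the pointer/matrix construction of Lemma~\ref{lem:diagonalization}. Instead it uses the incompressibility method. Fixing $\enc$ (and measuring Kolmogorov complexity relative to a description of it), the channel is taken to be $E=\{u\in\zo^n : C(u)<t\}$, which has size at most~$2^t$. One chooses a message $m$ with $t/2\le C(m)\le t/2+O(1)$ (this exists because $k\ge t/2$; when $k<t/2$ one simply replaces $t$ by $2k$, which is where the $\min\{k,t/2\}$ comes from). If $d\le t/2-c$ then for every $\rho$ one has $C(\enc_\rho(m))\le C(m)+d+O(1)<t$, so $x:=\enc_\rho(m)\in E$ and the channel may add $x$ to $x$, outputting $0^n$. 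But $C(\dec_\rho(0^n))\le d+O(1)<C(m)$, hence $\dec_\rho(0^n)\neq m$ for \emph{every} $\rho$. This defeats resilience for all $\epsilon<1$, not merely below some fixed constant.

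Your approach, by contrast, tries to realize the pointer graph of Lemma~\ref{lem:diagonalization} as an additive channel, i.e.\ to show that the set of edge-differences $\{y-x:(x,y)\text{ an edge}\}$ fits inside a set $E$ of size $2^t$. You assert that ``this is exactly the counting that yields the $t/2$,'' but you never perform the count, and it is not clear it comes out as claimed: the pointer graph has on the order of $T^2$ edges with no structure forcing their differences to collide, so the naive bound on $|E|$ is $\Theta(D)$ rather than $\Theta(\sqrt{D})$. Even granting some packing, you would still need to produce a \emph{single} channel function $\Ch\colon\mcX\to E$ (independent of $\rho$) that forces confusion, whereas the lemma only shows that $\enc_\rho(\texttt a)$ and $\enc_\rho(\texttt b)$ share a neighbour---translating that into an additive $\Ch$ is additional work you do not do. Finally, your treatment of the $\min\{k,t/2\}$ is off: restricting to two fixed messages discards $k$ altogether, and the sentence ``we need at least $2^d$ distinct values among $\enc_\rho(\texttt a)$'' confuses the number of messages with the number of codeword values taken as $\rho$ varies, which are unrelated quantities.
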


\begin{proof}
  Fix some encoding function $\enc$ that uses $d$ random bits, and assume
  \[
    d \le \min\{k,t/2\}-c 
  \] 
  for some constant $c$. We show that for large $c$, the code $\enc$ can not be $(t,\epsilon)$-resilient.
  We use the incompressibility method. Let $C(x)$ denote the Kolmogorov complexity of a string~$x$, which is
  given by the minimal length of a program that outputs~$x$. For this we need to fix a programming language, and we 
  choose a language that makes the complexity function minimal up to an $O(1)$ constant, see~\cite{suv:b:kolmenglish,liVitanyi2019} for more background.
  In our proof, we assume that the programs have an extra input which is a description of the encoding function~$\enc$, (but we omit this in the notation). We also assume that $t/2 \le k$ (otherwise, we just take in the arguments below $t = 2k$).
  
  Let $0^n$ be the string containing $n$ zeros. Note that $C(0^n) \le O(1)$, because $n$ is a parameter of $\enc$, to which our programs have access.
  We consider a channel  in the additive Hamming scenario that for codewords of length $n$ adds a noise vector $e \in E = \{u \in \zo^n \mid C(u) < t\}$. 
  Note that $E$ has size smaller than $2^t$ and, thus,  this channel has distortion at most~$2^t$.

  Consider a message $m$ with 
  \[
    t/2 \le C(m) \le  t/2 + O(1).
  \]

  Such a message exists, since the total number of messages is~$2^k$ and~$t/2 \le k$.
  Note that for all choices of randomness $\rho \in \zo^d$ we have\footnote{
    We prove this by concatenating $\rho$ to a program for $m$. 
    From this concatenation we can always retrieve back the splitting point, since $\rho$ has length $d$, which is a parameter of~$\enc$.
    }
  \[
    C(\enc_\rho(m)) \;\le\; C(m) + \text{length}(\rho) + O(1) \;\le\; t-c + O(1)  \; < \;  t,
  \]
  where the last inequality holds for a sufficiently large large $c$. Thus if we denote $x = \enc_\rho(m)$, then $x \in E$, and the channel can add the noise vector $x$ to $x$ obtaining $\tx=x+x = 0^n$.
  On the other hand, 
  \[
    C(\dec_\rho(0^n)) \;\le\; \text{length}(\rho) + O(1) \;\le\; t/2 - c + O(1)   \; < \;  C(m)
  \]
  Thus, for large $c$ and for all $\rho$, we have $\dec_\rho(0^n) \not= m$. 


  In other words, $\dec_\rho(\enc_\rho(m) + \cha(\enc_\rho(m))) \not= m$ for all~$\rho$, thus $\enc$ is not $(t,\epsilon)$-resilient for $\eps<1$.
\end{proof}

\if01
\subsection{Proofs of Theorem~\ref{t:randomdistort} and Theorem~\ref{t:piecewiseH}}\label{s:proofranddistort}

\textbf{Proof of Theorem~\ref{t:randomdistort}.}  The construction uses the Justesen concatenation scheme, which combines an \emph{outer code}, with several \emph{inner codes.} First the message written with symbols from  the alphabet $\F_2^k$ (for some integer $k$)  is mapped by the encoder of the outer code  into a codeword over the same alphabet, and next each symbol of the codeword is mapped using the encoder of one the inner codes.

\emph{The outer code:}  Following Cheraghchi~\cite{che:c:codescondenser}, we use the linear time encodable/decodable  code constructed by Spielman~\cite{spi:j:code} as the outer code.
\begin{theorem}[~\cite{spi:j:code}]
\label{t:spielman}
For every constant $\alpha < 1$ and every positive integer $k$, there exist a constant $\beta_{\text{Spielman}} > 0$ and an explicit family of codes $(C_S)_{S \in \nat}$ over the alphabet $\F_2^k$, such that $C_S$ encodes messages of length $S$,   has rate $1-\alpha$,  and is error correcting for a fraction $\beta_{\text{Spielman}}$  of errors.The encoder and the decoder run in time that is linear in the bit-length of the codewords.  

More explicly,  for every $S \in \nat$, the encoder of $C_S$ maps $(\F_2^k)^S$ into $(\F_2^k)^D$, with $S/D \geq 1 - \alpha$,   and for every codeword $x \in (\F_2^k)^D$, and every $\tx$ such that that the relative Hamming distance between $x$ and $\tx$ is $\beta_{\text{Spielman}}$, the decoder on input $\tx$ returns the message encoded as $x$. The encoder and the decoder run in time $O(kD)$.
\end{theorem}

\emph{The inner codes:} The inner codes are obtained from the universal code from Proposition~\ref{p:sh1}. Recall that the encoder of this code is a function $\enc (m, H)$, where $H$ is a random 
$(t + \log(1/\epsilon)) \times n$ matrix $H$ with elements in $\F_2$. For $k=n - (t+ \log(1/\epsilon)$,  the encoder maps a $k$-bit message $m$ into an $n$-bit codeword,  which is the $m$-th element of the null space of $H$ (using some canonical ordering of the elements in the null space).  The value of $\epsilon$ will be picked later.  Let $d = n (t+\log (1/\epsilon))$. We use, as inner codes,  $D = 2^d$ codes $\enc_{H_1}, \enc_{H_2}, \ldots, \enc_{H_D}$, which are obtained from $\enc$ in Proposition~\ref{p:sh1}, by fixing the randomness to every $d$-bits string, i.e., for each $H \in \zo^d$,  $\enc_{H}(\cdot) \stackrel{def.}{=} \enc(\cdot, H)$. Each inner code maps a $k$-bit string, viewed in the natural way as an element of $\F_2^k$  into an $n$-bit string, which similarly is viewed in the natural way as an element of $\F_2^n$.

Equipped with  the outer code and the inner codes, we construct a new encoder ${\cal E}$  as  follows.

\emph{Concatenation scheme:} First, the $S$-symbol input message  $(m_1, \ldots, m_S)$ is encoded with the outer code into a $D$-symbol word $(c_1, \ldots, c_D)$.
Next, each $c_i$ is encoded with the inner code $\enc_{\rho_i}$.
\[
\begin{array}{l}
 (m_1, \ldots, m_S)  \\

\quad \quad \bigg \downarrow   \text{\quad \quad\quad outer code encoding;  each $m_i$ and $c_i$ is $k$-bits long } \\

 (c_1, \ldots, c_D)   \\

\quad \quad \bigg \downarrow \text{\quad \quad  \quad inner codes encoding } \\

 (\enc_{\rho_1}(c_1), \ldots,  \enc_{\rho_D}(c_D))
\end{array}
\]

Thus ${\cal E}$ maps binary strings of length $k\cdot  S$ into codewords that are binary strings of length  $n \cdot D$.

Consider now a $D$-memoryless channel $\cha$ in the oblivious scenario adjusted def for the oblivious scenario  that has distortion bounded by $t$. 
Recall that this means that: 
\begin{enumerate}
\item  there is  a set $E \subseteq \F_2^n$ of size $T = 2^t$, and 
\item  $\cha$ takes as  input a $D$-tuple $(x_1, \ldots, x_D) \in (\F_2^n)^D$ , and outputs $(y_1, \ldots, y_D)$, where each $y_i = x_i+e_i$,  for $e_i$  chosen uniformly at random in $E$,  independently of the other choices.  
\end{enumerate}

Suppose the sender encodes the message $(m_1, \ldots, m_S)$ into  the codeword $(x_1, \ldots, x_D)$ (where each $x_i \in \F_2^n$) and the channel distorts it into $(y_1, \ldots, y_D)$. 
\[
\begin{array}{l}
(m_1, \ldots, m_S) \\

\quad \quad \bigg \downarrow   \text{\quad \quad\quad  encoding with the concatenated  outer/inner code } \\

 (x_1, \ldots, x_D)  \\

\quad \quad \bigg \downarrow   \text{\quad \quad\quad channel distortion; $y_i=x_i + e_i$, for $e_i$ randomly chosen in $E$ } \\

 (y_1, \ldots, y_D)   \\

\end{array}
\]

Henceforth, we consider that $(x_1, \ldots, x_D)$ is fixed, and $(y_1, \ldots, y_D)$ is a random variable depending on the randomness of the channel. The decoder of the concatenated code, first calls the decoders of the $D$ inner codes respectively on each component of $(y_1, \ldots, y_D)$, which return the word $(z_1, \ldots, z_D)$, and next calls the decoder of the outer code on this latter word. We show that with high probability, this procedure reconstructs $(m_1, \ldots, m_S)$.

For each $e \in E$, we say that the matrix $H$ is \emph{good for $e$}  if $H e_1 \not= He$ for all $e_1 \in E - \{e\}$. In the proof of Proposition~\ref{p:sh1},   it is shown that if $H$ is good for $e$, then the decoder on input $x+e$ reconstructs $x$ and that for each $e \in E$, at least a fraction of $(1-\epsilon)$ of $H$'s are good for $e$.  By a standard averaging argument, it follows that  there is a set (of ``good" matrices) GOOD$_{\text{rand}}$ containing $(1-\sqrt{\epsilon})$ fraction of $H$'s  in $\zo^d$ and  a set (of ``good" noise vectors)  GOOD$_{\text{noise}}$  containing $(1-\sqrt{\epsilon})$   fraction of $e$'s  in $E$, such that every $H$ in GOOD$_{\text{rand}}$ is good for every $e$ in  GOOD$_{\text{noise}}$. By rearranging the tuple $(y_1, \ldots, y_D)$, we can assume that the first $(1-\sqrt{\epsilon})D$ components correspond to the ``good" matrices.  On every  $i$ in this segment of ``good" components,  if  $e_i =y_i - x_i$  is a ``good" noise vector,  the inner decoder on input $y_i$ correctly reconstructs $x_i$. Note that the probability (over the randomness of the channel)  that $e_i$ is a ``good" noise vector  is at least $1 - \sqrt{\epsilon}$.

Let $\mu$ be  the expected number of components of $(y_1, \ldots, y_D)$ on which the inner decoders are incorrect. Note that 
\[
\mu \le \sqrt{\epsilon} \cdot D + (1-\sqrt{\epsilon}) \cdot D \cdot \sqrt{\epsilon}.  
\]
In the sum above, the first term  corresponds to the errors made by the inner decoders in the  bad segment and the second term corresponds to the errors made by decoders in the good segment  for which the channel is choosing a bad $e$.  Thus, $\mu < \mu_H$, where $\mu_H \stackrel{def.}{=} 2 \sqrt{\epsilon}\cdot D$. We take $\gamma = \sqrt{\epsilon} \cdot D$.
\[
\prob [ \# \text{ errors } \ge 3 \sqrt{\epsilon} \cdot D] = \prob [ \# \text{ errors } \ge \mu_H + \gamma] \le e^{-2 \gamma^2/D} = e^{-2\epsilon \cdot D}. 
\]
 We have used a form of the Chernoff bounds~\footnote{Let $X = \sum_{i=1}^D X_i$, where $X_i$, $i \in \{1,D\}$ are independently distributed in $[0,1]$. Let $\mu_H \geq \mu$, where $\mu$ is the expected value of $X$. Then for every $\gamma > 0$, $\prob[X \geq \mu_H + \gamma] < e^{-2 \gamma^2/D}$.} for the case when we know an upperbound of the expected value (see Exercise 1.1 (a) in~\cite{dub-pan:b:concentration})

Take $\epsilon$ such that $3 \sqrt{\epsilon} \le \beta_{\text{Spielman}}$. Then, with probability at least $1 -  e^{-2\epsilon \cdot D}$, the inner decoders are correct on all except at most  
$\beta_{\text{Spielman}}$ fraction of positions. In such a case, the outer decoder is able to reconstruct the codeword $(x_1, \ldots, x_D)$ and then the message that is encoded into this codeword.

We now evaluate the runtime of the encoder and the decoder. The encoder calls first the encoder of the outer code, which takes time $O(nD)$, and then calls the encoders of the $D$ inner codes, each one running $O(n^2)$ steps (by Remark~\ref{r:complexity1}). Thus, the total time for encoding is $O(n^2 D)$, which is quasi-linear in $nD$ (the bit-length of a codeword).  The decoder first calls the $D$ inner deccoders, and each one runs in time $O(T n^2)$, so this step takes $O(DTn^2)$, which is $O((nD)^2)$ because $T = O(D)$. Next, it calls the decoder of the outer code, which runs in time $O(nD)$. So the total time is less than $O((Dn)^2)$, so quadratic in the bit-length of a codeword.

\bigskip

\textbf{Proof   of Theorem~\ref{t:piecewiseH}.}  We use again a concatenation scheme with an outer code and an innner code. The outer code is Spielman's error correcting code~\cite{spi:j:code}, presented in Theorem~\ref{t:spielman}.  The inner code is the code from Theorem~\ref{t:bruno}. Its encoder
$\enc_\rho(m)$ maps a $k$-bit message $m$ into an $n$-bit codeword, and $\rho$ represents the  random string used for both encoding and decoding (we recall that we are in the shared randomness setting). The outer code works with strings having symbols from the alphabet $\Sigma^k$.  The encoder of ${\cal E}$ first calls the encoder of the outer code, which  maps an $S$-symbol message $(m_1, \ldots, m_S)$ into a $D$-symbol codeword $(c_1, \ldots, c_D)$.  Next, each $c_i$ is encoded with the inner code $\enc_{\rho_i}$. The encoder and the decoder of ${\cal E}$ share randomness $\rho=(\rho_1, \ldots, \rho_D)$. 

\[
\begin{array}{l}
 (m_1, \ldots, m_S)  \\

\quad \quad \bigg \downarrow   \text{\quad \quad\quad outer code encoding;  each $m_i$ and $c_i$ is $k$-bits long } \\

 (c_1, \ldots, c_D)   \\

\quad \quad \bigg \downarrow \text{\quad \quad  \quad inner code encoding using randomness $\rho = (\rho_1, \ldots, \rho_D)$ } \\

 (\enc_{\rho_1}(c_1), \ldots,  \enc_{\rho_D}(c_D))
\end{array}
\]


Thus, the encoder of ${\cal E}$ maps a $kS$-bit string into an $nD$-bit string, and consequently has rate $(kS) /(nD) = (S/D) \cdot (k/n) \ge (1-\alpha) \cdot (1-t/n - o(1))$.

Suppose a piecewise Hamming channel  with graph $G = G_1 \times \ldots \times G_D$ distorts $(x_1, \ldots, x_D)$ into $(y_1, \ldots, y_D)$.
\[
\begin{array}{l} 
 (x_1, \ldots, x_D)  \\

\quad \quad \bigg \downarrow   \text{\quad \quad\quad channel distortion; $y_i$ is a neighbor of $x_i$ in $G_i$, chosen adversarially} \\

 (y_1, \ldots, y_D)   \\

\end{array}
\]

 We now describe the decoder of ${\cal E}$ corresponding to $G$. It first calls $\dec_{G_i, \rho_i}$ on $y_i$ (the decoder corresponding to the Hamming channel with graph $G_i$, and using randomess $\rho_i$), for all $i \in \{1, \ldots,  D\}$. With probability $(1-\epsilon)$, $\dec_{G_i, \rho_i}(y_i)$ correctly returns $x_i$. Thus the expected number of errors (from all inner decoders) is bounded by $\epsilon \cdot D$. It follows that the probability that the number of errors is larger than $2 \epsilon \cdot D$ is at most $e^{-2 \epsilon^2 D}$ (by Chernoff bounds). We choose $\epsilon$ so that $2 \epsilon \leq \beta_{\text{Spielman}}$.  It follows that with probability $1- e^{-2 \epsilon^2 D}$, the fraction of errors made by the inner decoders is smaller than the fraction of errors that can be corrected by Spielman's code. Therefore the decoding procedure of Spielman's code returns  the correct message with high probability.

We next evaluate the runtime of the encoder and the decoder of ${\cal E}$.  The encoder calls the encoder of the outer code, which runs for $O(nD)$ steps. Next, it calls $D$-times the inner encoder, and each one runs in $O(n)$ sterps. Thus the total time is $O(nD)$, that is the encoder runs in linear time in the bit-length of the codeword. The decoder calls the $D$ inner decoders, and each one runs in $2^n \cdot n$ steps. Then it calls the decoder of the outer code, which runs in $O(nD)$. Therefore, taking into account that $D = 2^n$, the decoder runs in time $O((nD)^2)$, that is in quadratic time in the bit-length of the codeword.
\fi

\if01
\newpage
 \marius{[proof for the hamming scenario, shared randomness, and random distortion]}

The construction uses the Justessen concatenation scheme, which combines an \emph{outer code}, with several \emph{inner codes.} First the message written over the alphabet $\F_2^k$ (for some integer $k$)  is mapped by the encoder of the outer code  into a codeword over the same alphabet, and next each symbol of the codeword is mapped using the encoder of one the inner codes.

\emph{The outer code:}  Following Cheraghchi~\cite{che:c:codescondenser}, we use the linear time encodable/decodable  code constructed by Spielman~\cite{spi:j:code} as the outer code.
\begin{theorem}[~\cite{spi:j:code}]
For every constant $\alpha < 1$ and every positive integer $k$, there exist a constant $\beta_{\text{Spielman}} > 0$ and an explicit family of codes $(C_S)_{S \in \nat}$ over the alphabet $\F_2^k$, such that $C_S$ encodes messages of length $S$,   has rate $1-\alpha$,  and is error correcting for a fraction $\beta_{\text{Spielman}}$  of errors.The encoder and the decoder run in time that is linear in the bit-length of the codewords.  

More explicly,  for every $S \in \nat$, the encoder of $C_S$ maps $(\F_2^k)^S$ into $(\F_2^k)^D$, with $S/D \geq 1 - \alpha$,   and for every codeword $x \in (\F_2^k)^D$, and every $\tx$ such that that the relative Hamming distance between $x$ and $\tx$ is $\beta_{\text{Spielman}}$, the decoder on input $\tx$ returns $x$. The encoder and the decoder run in time $O(kD)$.
\end{theorem}

\emph{The inner codes:} We use $D = 2^d$ codes $\enc_{\rho_1}, \enc_{\rho_2}, \ldots, \enc_{\rho_D}$, which are obtained from the private code $\enc$ in Theorem~\ref{t:bruno}, by fixing the randomness to every $d$-bits string, i.e., for each $\rho \in \zo^d$,  $\enc_{\rho}(\cdot) \stackrel{def.}{=} \enc(\cdot, \rho)$. Each inner code maps a $k$-bit string, viewed in the natural way as an element of $\F_2^k$  into a $n$-bit string, which similarly is viewed in the natural way as an element of $\F_2^n$.

Equipped with  the outer code and the inner codes, we construct a new encoder $E$  as  follows.

\emph{Concatenation scheme:} First, the $S$-symbol input message  $(m_1, \ldots, m_S)$ is encoded with the outer code into a $D$-symbol word $(c_1, \ldots, c_D)$. Then, we mask each $c_i$ with a random vector $a_i$, obtaining $c'_i = c_i + a_i$. 
Next, each $c'_i$ is encoded with the inner code $\enc_{\rho_i}$.
\[
\begin{array}{l}
 (m_1, \ldots, m_S)  \\

\quad\quad \bigg \downarrow   \text{\quad \quad \quad outer code encoding; } \\

 (c_1, \ldots, c_D) \\

\quad \quad \bigg \downarrow \text{\quad\quad\quad masking}\\

(c_1+a_1, \ldots, c_D + a_D) \\

\quad \quad   \bigg \downarrow \text{\quad \quad  \quad inner codes encoding  }  \\

 (\enc_{\rho_1}(c_1+a_1), \ldots,  \enc_{\rho_D}(c_D+a_D))
\end{array}
\]

Thus $E$ maps binary strings of length $k\cdot  S$ into codewords that are binary strings of length  $n \cdot D$.

Consider now a $D$-memoryless channel $\cha$ that has distortion bounded by $t$. 
Recall that this means that: 
\begin{enumerate}
\item  there is  a bipartite graph $G = (L=\zo^n, R, E \subseteq L \times R)$ and all right degrees are bounded by $T = 2^t$, and 
\item  $\cha$ takes as  input a $D$-tuple $(x_1, \ldots, x_D)$ of left nodes, and outputs $(y_1, \ldots, y_D)$, where each $y_i$ is chosen uniformly at random  among the right neighbors of $x_i$, and the $D$ random choices are independent.
\end{enumerate}

Suppose the sender transmits the codeword $(x_1, \ldots, x_D)$ (where each $x_i \in \F_2^n$) and the channel distorts it into $(y_1, \ldots, y_D)$. The decoder of the concatenated code, first calls the decoders of the $D$ inner codes respectively on each component of $(y_1, \ldots, y_D)$, which return the word $(z_1, \ldots, z_D)$, and next calls the decoder of the outer code on this latter word. We show that with high probability (over the randomness used by the channel), this procedure reconstructs $(x_1, \ldots, x_D)$.

For each edge $(x,y)$ of the graph $G$, we say that $(a,\rho)$ is \emph{good}  if  at most one element of  $a+ \{\enc(m, \rho) \mid \text{$m$ message}\}$ is a neighbor of $y$ (i.e.,   at most one codeword of $\enc_\rho$ shifted by $a$ is a neighbor of $y$). The analysis in the proof of Theorem~\ref{t:bruno} shows that   for each edge $(x,y)$, $(1-\epsilon)$ fraction of pairs $(a, \rho)$'s are good. By a standard averaging argument, it follows that  there is a set (of ``good" $(a, \rho)$'s) GOOD$_{\text{rand}}$ containing $(1-\sqrt{\epsilon})$ fraction of $\rho$'s  in $\zo^n \times \zo^d$ and  a set (of ``good" edges)  GOOD$_{\text{edge}}$  containing $(1-\sqrt{\epsilon})$   fraction of edges   of $G$, such that every $(a, \rho)$ in GOOD$_{\text{rand}}$ is good for every edge $(x,y)$ in  GOOD$_{\text{edge}}$. By rearranging the tuple $(y_1, \ldots, y_D)$, we can assume that the first $(1-\sqrt{\epsilon})D$ components correspond to the ``good" $\rho$'s.  On every  $i$ in this segment of ``good" components,  if  $y_i$ is ``good,"  the inner decoder is correct. Note that the probability (over the randomness of the channel)  that $y_i$ is ``good"  is at least $1 - \sqrt{\epsilon}$.  \marius{There's a problem here. It may happen that most (or even all) neighbors of $x_i$ are ``bad,"  and then the last sentence is wrong.  I think the argument can be salvaged for the oblivious scenario: Instead of looking at a right node $y$, we look at a noise vector $e$ in $E$, etc.} 

Let $\mu$ be  the expected number of components on which the inner decoders are incorrect. Note that 
\[
\mu \le \sqrt{\epsilon} \cdot D + (1-\sqrt{\epsilon}) \cdot D \cdot \sqrt{\epsilon}.  
\]
In the sum above, the first term  corresponds to the errors made by the inner decoders in the  bad segment and the second term corresponds to the errors made by decoders in the good segments for which the channel is choosing a bad $y$.  Thus, $\mu < \mu_H$, where $\mu_H \stackrel{def.}{=} 2 \sqrt{\epsilon}\cdot D$. We take $\gamma = \sqrt{\epsilon} \cdot D$.
\[
\prob [ \# \text{ errors } \ge 3 \sqrt{\epsilon} \cdot D] = \prob [ \# \text{ errors } \ge \mu_H + \gamma] \le e^{-2 \gamma^2/D} = e^{-2\epsilon \cdot D}. 
\]
 We have used a form of the Chernoff bounds for the case when we know an upperbound of the expected value (see Exercise 1.1 (a) in~\cite{dub-pan:b:concentration}).

Take $\epsilon$ such that $3 \sqrt{\epsilon} \le \beta_{\text{Spielman}}$. Then, with probability at least $1 -  e^{-2\epsilon \cdot D}$, the inner decoders are correct on all except at most  
$\beta_{\text{Spielman}}$ fraction of positions. In such a case, the outer decoder is able to reconstruct $(x_1, \ldots, x_D)$.

\fi

\if01
\newpage

\marius{initial attempt of a proof for the Hamming version, but  there is a problem, see below.}

The construction uses the Justessen concatenation scheme, which combines an \emph{outer code}, with several \emph{inner codes.} First the message written over the alphabet $\F_2^k$ (for some integer $k$)  is mapped by the encoder of the outer code  into a codeword over the same alphabet, and next each symbol of the codeword is mapped using the encoder of one the inner codes.

\emph{The outer code:}  Following Cheraghchi~\cite{che:c:codescondenser}, we use the linear time encodable/decodable  code constructed by Spielman~\cite{spi:j:code} as the outer code.
\begin{theorem}[~\cite{spi:j:code}]
For every constant $\alpha < 1$ and every positive integer $k$, there exist a constant $\beta_{\text{Spielman}} > 0$ and an explicit family of codes $(C_S)_{S \in \nat}$ over the alphabet $\F_2^k$, such that $C_S$ encodes messages of length $S$,   has rate $1-\alpha$,  and is error correcting for a fraction $\beta_{\text{Spielman}}$  of errors.The encoder and the decoder run in time that is linear in the bit-length of the codewords.  

More explicly,  for every $S \in \nat$, the encoder of $C_S$ maps $(\F_2^k)^S$ into $(\F_2^k)^D$, with $S/D \geq 1 - \alpha$,   and for every codeword $x \in (\F_2^k)^D$, and every $\tx$ such that that the relative Hamming distance between $x$ and $\tx$ is $\beta_{\text{Spielman}}$, the decoder on input $\tx$ returns $x$. The encoder and the decoder run in time $O(kD)$.
\end{theorem}

\emph{The inner codes:} We use $D = 2^d$ codes $\enc_{\rho_1}, \enc_{\rho_2}, \ldots, \enc_{\rho_D}$, which are obtained from the private code $\enc$ in Theorem~\ref{t:bruno}, by fixing the randomness to every $d$-bits string, i.e., for each $\rho \in \zo^d$,  $\enc_{\rho}(\cdot) \stackrel{def.}{=} \enc(\cdot, \rho)$. Each inner code maps a $k$-bit string, viewed in the natural way as an element of $\F_2^k$  into a $n$-bit string, which similarly is viewed in the natural way as an element of $\F_2^n$.

Equipped with  the outer code and the inner codes, we construct a new encoder $E$  as  follows.

\emph{Concatenation scheme:} First, the $S$-symbol input message  $(m_1, \ldots, m_S)$ is encoded with the outer code into a $D$-symbol word $(c_1, \ldots, c_D)$.
Next, each $c_i$ is encoded with the inner code $\enc_{\rho_i}$.
\[
 (m_1, \ldots, m_S)  \xrightarrow[\text{\quad outer code \quad }]{} (c_1, \ldots, c_D)   \xrightarrow[\text{\quad \quad  \quad inner codes \quad \quad \quad }]{} (\enc_{\rho_1}(c_1), \ldots,  \enc_{\rho_D}(c_D))
\]

Thus $E$ maps binary strings of length $k\cdot  S$ into codewords that are binary strings of length  $n \cdot D$.

Consider now a $D$-memoryless channel $\cha$ that has distortion bounded by $t$. 
Recall that this means that: 
\begin{enumerate}
\item  there is  a bipartite graph $G = (L=\zo^n, R, E \subseteq L \times R)$ and all right degrees are bounded by $T = 2^t$, and 
\item  $\cha$ takes as  input a $D$-tuple $(x_1, \ldots, x_D)$ of left nodes, and outputs $(y_1, \ldots, y_D)$, where each $y_i$ is chosen uniformly at random  among the right neighbors of $x_i$, and the $D$ random choices are independent.
\end{enumerate}

Suppose the sender transmits the codeword $(x_1, \ldots, x_D)$ (where each $x_i \in \F_2^n$) and the channel distorts it into $(y_1, \ldots, y_D)$. The decoder of the concatenated code, first calls the decoders of the $D$ inner codes respectively on each component of $(y_1, \ldots, y_D)$, which return the word $(z_1, \ldots, z_D)$, and next calls the decoder of the outer code on this latter word. We show that with high probability (over the randomness used by the channel), this procedure reconstructs $(x_1, \ldots, x_D)$.

For each right node $y$ of the graph $G$, the analysis in the proof of Theorem~\ref{t:bruno} shows that the event ``more than one codeword is in the neighborhood  of $y$" has probability $< \epsilon$ (the probability is over the randomness of the universal code in Theorem~\ref{t:bruno}). We say that $\rho \in \zo^d$ is \emph{good for $y$} if it satisfies the property in the above event. Thus, for each $y$, $(1-\epsilon)$ fraction of $\rho$'s are good. By a standard averaging argument, it follows that  there is a set (of ``good" $\rho$'s) GOOD$_{\text{rand}}$ containing $(1-\sqrt{\epsilon})$ fraction of $\rho$'s  in $\zo^d$ and  a set (of ``good" $y$'s)  GOOD$_{\text{node}}$  containing $(1-\sqrt{\epsilon})$   fraction of $y$'s  in $R$, such that every $\rho$ in GOOD$_{\text{rand}}$ is good for every $y$ in  GOOD$_{\text{node}}$. By rearranging the tuple $(y_1, \ldots, y_D)$, we can assume that the first $(1-\sqrt{\epsilon})D$ components correspond to the ``good" $\rho$'s.  On every  $i$ in this segment of ``good" components,  if  $y_i$ is ``good,"  the inner decoder is correct. Note that the probability (over the randomness of the channel)  that $y_i$ is ``good"  is at least $1 - \sqrt{\epsilon}$.  \marius{There's a problem here. It may happen that most (or even all) neighbors of $x_i$ are ``bad,"  and then the last sentence is wrong.  I think the argument can be salvaged for the oblivious scenario: Instead of looking at a right node $y$, we look at a noise vector $e$ in $E$, etc.} 

Let $\mu$ be  the expected number of components on which the inner decoders are incorrect. Note that 
\[
\mu \le \sqrt{\epsilon} \cdot D + (1-\sqrt{\epsilon}) \cdot D \cdot \sqrt{\epsilon}.  
\]
In the sum above, the first term  corresponds to the errors made by the inner decoders in the  bad segment and the second term corresponds to the errors made by decoders in the good segments for which the channel is choosing a bad $y$.  Thus, $\mu < \mu_H$, where $\mu_H \stackrel{def.}{=} 2 \sqrt{\epsilon}\cdot D$. We take $\gamma = \sqrt{\epsilon} \cdot D$.
\[
\prob [ \# \text{ errors } \ge 3 \sqrt{\epsilon} \cdot D] = \prob [ \# \text{ errors } \ge \mu_H + \gamma] \le e^{-2 \gamma^2/D} = e^{-2\epsilon \cdot D}. 
\]
 We have used a form of the Chernoff bounds for the case when we know an upperbound of the expected value (see Exercise 1.1 (a) in~\cite{dub-pan:b:concentration}).

Take $\epsilon$ such that $3 \sqrt{\epsilon} \le \beta_{\text{Spielman}}$. Then, with probability at least $1 -  e^{-2\epsilon \cdot D}$, the inner decoders are correct on all except at most  
$\beta_{\text{Spielman}}$ fraction of positions. In such a case, the outer decoder is able to reconstruct $(x_1, \ldots, x_D)$.

\fi

\if01
We show that in the oblivious scenario, to obtain a rate that is better than $1-t/n-o(1)$, we need at least a logarithmic amount of shared randomness.
For this, we consider private encoding functions that have access to both shared and non-shared randomness, i.e., we use encoding functions 
$\enc\colon \mcM \times \mcR \times \mcS$ where $\mcR$ represents shared randomness and~$\mcS$ non-shared randomness.

\begin{proposition}\label{prop:muchSharedRandomness}
  Let $\enc \colon \mcM \times \mcR \times \mcS \rightarrow \mcX$ be a private code with $\# \mcM = 3$.
  There exists a constant $\eps>0$ such that $\enc$ can only be $(\eps,t)$-resilient if $n \ge 1.49t$, where $n = \log \# \mcX$.
\end{proposition}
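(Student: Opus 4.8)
The plan is to build, from the structure of $\enc$, a single worst-case oblivious channel $E$ and then argue via a two-hypothesis (Le Cam) bound that no decoder can beat it, in the spirit of the proofs of Theorems~\ref{t:lbobliv} and~\ref{t:hamlb} but aimed at the \emph{rate} rather than at the randomness. Two features shape the argument. The third message is essential: with $\#\mcM=2$ even a linear code already defeats the worst $E$ once $n\gtrsim t$ (cf.\ Theorem~\ref{t:main} with $k=1$), so the bound must exploit a genuinely $3$-way obstruction. And, because the decoder is chosen \emph{after} $E$, the naive collision channel $E=\{0,\enc_\rho(a)-\enc_\rho(b)\}$ is useless: for each fixed $e\in E$ it spoils only the single $\rho$ it was built from. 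What we must produce instead is \emph{information-theoretic} ambiguity, namely a fixed $E$, a message $m$, a shift $e\in E$ and a competing pair $(m',e')$ such that the joint distribution of (shared randomness, received word) under ``$m$ sent through $e$'' is statistically close to that under ``$m'$ sent through $e'$'', for a constant fraction of the shared randomness at once. (The statement should be read with $\#\mcR$ bounded: taking $\mcR$ to be the set of all codeword triples and $\enc$ the projection gives a $(\eps,t)$-resilient $3$-message code with $n=t+O(\log 1/\eps)$, so I prove the bound in the small-shared-randomness regime, which is the one the preceding discussion refers to and the one entering the trade-off the authors defer to the extended version.)

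Concretely, write $\mcM=\{a,b,c\}$, and for each value $\rho$ of the shared randomness let $\mu_m^\rho$ be the distribution on $\mcX$ of $\enc(m,\rho,\cdot)$ over the non-shared randomness. Resilience says exactly that for every $E$ with $|E|\le 2^t$ there is a partition $(D_a^\rho,D_b^\rho,D_c^\rho)$ of $\mcX$ for each $\rho$ with $\mathbb E_\rho[\mu_m^\rho(D_m^\rho-e)]\ge 1-\eps$ for all $m$ and all $e\in E$. I would take $E$ to be a uniformly random $t$-dimensional subspace $V$ of $\mathbb F_2^n$, so that $E-E=V$: a transmission $(m,e)$ collides in distribution with $(m',e')$ precisely when $\mu_m^\rho$ and $\mu_{m'}^\rho+\delta$ are statistically close for $\delta=e'-e\in V$. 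The heart of the argument is the claim that when $n<1.49\,t$, for a constant fraction of $\rho$ there exist two messages, say $a$ and $b$, and some $\delta\in V$ with $\mathrm{TV}(\mu_a^\rho,\,\mu_b^\rho+\delta)\le 1-\eta$ for a universal constant $\eta>0$. Granting this, at such a $\rho$ one of the two transmissions $(a,\delta)$ and $(b,0)$ is decoded incorrectly with probability at least $\tfrac12(1-\mathrm{TV})\ge \eta/2$; averaging over the constant fraction of good $\rho$, and using that $\#\mcR$ is small so that after a pigeonhole over the $O(1)$ message pairs and the elements of $V$ occurring as the witnessing $\delta$ a \emph{single} triple $(E,m,e)$ already sees a constant fraction of all $\rho$, yields a fixed $E$, message and shift with error probability at least $\eps_0$ for an absolute constant $\eps_0$, contradicting $(\eps,t)$-resilience for $\eps<\eps_0$.

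The constant $1.49$ is the output of the counting that establishes the claim. Suppose the encoder places almost all of the non-shared mass of each message on a set of size $s$. If $s$ is a constant fraction of $2^n$, a random $t$-dimensional $V$ is large enough that some $\delta\in V$ aligns two of the three supports on a constant fraction of their mass: the three pairwise difference-sets cannot all avoid $V$, and the threshold works out to $n\le\tfrac32 t-O(1)$ (three pairs, each alignment costing a factor $\approx 2^{t-n}$ in expectation, pigeonholed over the $\approx 2^t$ directions of $V$). If $s$ is small the encoding is essentially deterministic and one instead builds $E$ directly from the dominant inter-message differences; the two regimes meet, and the $\tfrac32$ degrades to $1.49$ once one pays for the averaging over $\mcR$, the passage from supports to the true distributions $\mu_m^\rho$ (absorbed into $\eps$ by the usual restriction to typical sets), and the Le Cam slack. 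I expect the \textbf{main obstacle} to be precisely this interaction with the shared randomness: the adversary must commit to a single $E$ before seeing $\rho$, so the favorable alignment has to occur for many $\rho$ at once, which is only guaranteed for bounded $\#\mcR$; pushing the argument to polynomially large $\#\mcR$, the regime in which Theorem~\ref{t:main} operates, appears to need a $\rho$-averaged, adaptive choice of $E$ and is, I believe, the intricate trade-off the authors postpone.
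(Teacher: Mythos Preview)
Your proposal has a genuine gap at the pigeonhole step. For each $\rho$ in your good set $G$ you produce a witnessing shift $\delta_\rho\in V$, but these $\delta_\rho$ may all be distinct: pigeonholing over ``the elements of $V$ occurring as the witnessing $\delta$'' yields one $\delta$ used at least \emph{once}, not used a constant \emph{fraction} of the time. Working the averaging through, the single worst $(m,e)$ your argument produces has error probability only $\Omega(1/\#\mcR)$ (or $\Omega(2^{-t})$ if you average over all of $V$), not the absolute constant you claim. Your scheme therefore only bites when $\#\mcR=O(1)$, whereas the paper's proof idea handles $\#\mcR$ up to~$T$. The counting behind the $1.49$ is also not what you describe: a single element of a support-difference set $A-B$ landing in $V$ aligns mass $\approx 1/|A|$, far from the constant-TV overlap you need, so the ``threshold $n\le\tfrac32 t$'' heuristic does not establish your key TV claim.

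The paper avoids the pigeonhole loss entirely by \emph{tailoring} $E$ to the encoder rather than taking it random. Following the construction behind Theorem~\ref{t:lbobliv} (and its extension to non-shared randomness in the remark that follows it), one takes $E$ to be built from spans of the difference vectors $\enc_{\rho,\sigma}(a)-\enc_{\rho,\sigma}(b)$; then by construction the relevant shift lies in $E$ for \emph{every} $\rho$ simultaneously, so no pigeonhole over $\delta$ is needed. This forces $\#\mcS\gtrsim T^{1-2\eps}$ whenever $\#\mcR\le T$. The constant $1.49$ then enters not through an alignment threshold but through a sphere-packing step: with three messages one argues that at least two of them must have images $\enc_{\rho,\mcS}(m)$ of size $\ge T^{0.49}$ (two messages with concentrated images would already be defeated by the span construction applied to that pair), and packing those large images together with their $T$ translates under $E$ into $\mcX$ gives $N\ge T^{1.49}$.
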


\noindent
We first show that private codes use at least $\Omega(\log t)$ randomness.
Since shared randomness is stronger than non-shared, we may assume the private functions only have shared randomness, 
and we drop the argument $\sigma \in \mcS$ from the encoding function.
We show that such codes use at least $\log T$ random bits. 

\begin{lemma}
  Assume $M = 2$ and $\eps < 1/2$. 
  If some private code $\enc \colon \mcM \times \mcR \rightarrow \mcX$ satisfies the above condition, then $\# \mcR \ge \log T$.
\end{lemma}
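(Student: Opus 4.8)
\noindent
The plan is to reuse the adversarial argument of Theorem~\ref{t:lbobliv}, of which this lemma is essentially a restatement (via $\log T = t$, and restricting to two messages). Write $\mcM = \{\texttt{a},\texttt{b}\}$ and $\mcR = \{1,\dots,D\}$ with $D = \#\mcR$, and suppose for contradiction that $D \le \log T$. As the adversarial oblivious channel I would take the $\F_2$-span
\[
  E \;=\; \mathrm{span}_{\F_2}(v_1,\dots,v_D), \qquad v_\rho \;=\; \enc_\rho(\texttt{a}) - \enc_\rho(\texttt{b}) \in \mcX .
\]
Since $E$ is spanned by $D$ vectors, $\#E \le 2^D \le T$, so this is a legitimate channel of distortion at most $2^t$, and it depends only on the code. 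Resilience then supplies a decoder $\dec$ with $\Pr_\rho[\dec_\rho(\enc_\rho(m)+e)=m]\ge 1-\eps$ for all $m \in \mcM$ and all $e\in E$; the goal is to pin down one pair $(m,e)$ violating this, which contradicts $\eps<1/2$.

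The heart of the argument is a symmetry between the two messages. First I would observe that any vector $\enc_\rho(\texttt{a})+\sum_i c_i v_i$ (with $c\in\{0,1\}^D$) is also of the form $\enc_\rho(\texttt{b})+\sum_i c'_i v_i$, where $c'$ is $c$ with its $\rho$-th bit flipped, because $\enc_\rho(\texttt{b})=\enc_\rho(\texttt{a})+v_\rho$. Consequently, drawing $\rho$ uniformly from $\mcR$ and $c$ uniformly and independently from $\{0,1\}^D$, the random value $\dec_\rho(\enc_\rho(\texttt{a})+\sum_i c_i v_i)$ has exactly the same distribution as $\dec_\rho(\enc_\rho(\texttt{b})+\sum_i c_i v_i)$, since flipping one coordinate is a bijection on $\{0,1\}^D$. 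The two possible outcomes $\texttt{a},\texttt{b}$ have probabilities summing to $1$, so one of them, say $\texttt{b}$ (otherwise flip the roles of $\texttt{a}$ and $\texttt{b}$, using the symmetry just noted and the fact that $E$ is unchanged), satisfies
\[
  \Pr_{\rho,c}\bigl[\dec_\rho(\enc_\rho(\texttt{a})+\textstyle\sum_i c_i v_i)=\texttt{b}\bigr]\;\ge\;\tfrac12 .
\]
Averaging over $c$, I would then fix $c^\ast\in\{0,1\}^D$ with $\Pr_\rho[\dec_\rho(\enc_\rho(\texttt{a})+\sum_i c^\ast_i v_i)=\texttt{b}]\ge \tfrac12$ and set $e=\sum_i c^\ast_i v_i\in E$. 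For the message $m=\texttt{a}$ this gives $\Pr_\rho[\dec_\rho(\enc_\rho(\texttt{a})+e)=\texttt{a}]\le \tfrac12<1-\eps$, the desired contradiction. Hence $D>\log T$, and in particular $\#\mcR\ge\log T$; a code with more than two messages reduces to this case by restricting to any two of them.

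The step I expect to require the most care is this symmetrization: checking that the multiset of decoder inputs relevant to the resilience condition is precisely $\{\enc_\rho(\texttt{a})+\sum_i c_i v_i\}_{\rho,c}$, that the $\rho$-th-bit flip genuinely interchanges the roles of the two messages rather than merely permuting within one of them, and that the degenerate indices $\rho$ with $v_\rho=0$ cause no trouble (they are harmless, and $0\in E$ in any case). Everything else is a routine counting and averaging argument, so I do not anticipate a real obstacle beyond getting the symmetry bookkeeping exactly right.
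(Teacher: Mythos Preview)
Your proof is correct and follows essentially the same route as the paper's: the same span channel $E$, the same symmetrization between the two messages over random $(\rho,c)$, and the same averaging to extract a bad $e$. Your bookkeeping is in fact slightly more precise than the paper's---you correctly identify the bijection as flipping only the $\rho$-th bit of $c$ (since $\enc_\rho(\texttt{b})=\enc_\rho(\texttt{a})+v_\rho$ in characteristic $2$), whereas the paper says ``flipping all bits of $c$''; either way the uniform distribution on $c$ is preserved and the argument goes through.
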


\begin{proof}
  Let $\enc$ be an encoding function that uses shared randomness from the set $\mcR = \{1,2,\ldots,r\}$. 
  Let $\mcM = \{\texttt{a}, \texttt{b}\}$.  
  Consider the channel defined by the set $E$ given by the span of the vectors
  \[
    v_1 = \enc(\texttt{a},1) - \enc(\texttt{b},1), \quad \ldots, \quad v_r = \enc(\texttt{a},r) - \enc(\texttt{b},r).
  \]
  Thus, $E$ has size at most~$2^r$.
  We need to select $m \in \mcM$ and $e \in E$ such that the probability in~\eqref{eq:oblivious} is at most~$1/2$.
  In the requirement~\eqref{eq:oblivious}, the only relevant values of $\dec$ are vectors of the form
  \[
     \enc_\rho(\texttt{a}) + c_1v_1 + \cdots + c_rv_r,
  \]
  with $\rho \in \mcR$ and $c \in \{0,1\}^r$.
  Select $\rho$ and $c$ randomly and consider the value of $\dec_\rho$ on the above vector, which is a value in~$\mcM$.
  Note that if we used message $\texttt{b}$ instead of $\texttt{a}$ in the expression above, 
  then the probabilities with which the messages appear does not change, (since this corresponds to flipping all bits of~$c$).
  Assume that the value $\texttt{b}$ appears at least half of the time. If this is not 
  the case, we flip the roles of $\texttt{a}$ and $\texttt{b}$ in the expression above and the explanations below. 
  There exists a choice of $c \in \{0,1\}^r$ such that for at least half of the values $\rho \in \mcR$, 
  the value of $\dec_\rho$ for the above vector is equal to~$\texttt b$.
  Let $e = c_1v_1 + \cdots + c_rv_r \in E$ be the corresponding vector. 
  For $m = \texttt{a}$, the probability in~\eqref{eq:oblivious} is at most~$1/2$. 
  Hence, for $\eps< 1/2$ the inequality is false, and this implies that if $\#\mcR \le \log T$ 
  equation~\eqref{eq:oblivious} can not be satisfied.
\end{proof}

\begin{remark}\label{rem:lowerbound}
The statement of the lemma is for $\eps$ arbitrarily close to~$1/2$.
If we only need a violation of  \eqref{eq:oblivious} with a smaller value of $\eps$, 
we can obtain a result for private codes
that use non-shared randomness drawn from a set $\mcS$ of size at most~$2^{-2\eps r}T$.
Here and below, we assume that $2\eps r$ is integer.
  Indeed, for a small error probability $\eps$ in \eqref{eq:oblivious}, 
it is enough to apply the reasoning to only a subset of $\mcR$ of size at most $2\eps r$, and this is 
requires at most $2^{2\eps r}$ error vectors. 
Hence, we can repeat the construction for all elements $\sigma \in \mcS$, 
and the channel $E$ is the union of all these error sets. 
This provides us with a channel with distortion $2^{2\eps r}\#\mcS$. 
Hence, if $\# \mcS \le 2^{-2\eps r}T$, equation~\eqref{eq:oblivious} can not be satisfied.
\end{remark}

\begin{proof}[Proof idea of Proposition~\ref{prop:muchSharedRandomness}.]
  Let $N = \#\mcX$. 
  Assume that at most $\log T$ shared randomness and $T$ is large. Also assume that $2\eps\log T$ is integer. 
  By remark~\ref{rem:lowerbound}, we conclude that either
  equation~\eqref{eq:oblivious} fails if~$\eps$ is small, or that we need~$\# \mcS > T^{1-2\eps}$. 
  In the first case, we are finished, so we may assume~$\# \mcS > T^{1-2\eps}$. 

  Let us now erroneously assume that $\enc_{\rho,\sigma}(m)$ always has different values for all $\sigma$. Then, we could apply 
  a sphere packing argument and conclude that $N \ge T^{2-2\eps}(1-\epsilon)$.
  But unfortunately, this might not be the case. From the argument of the lemma, 
  we can conclude that we have a significant amount of different sets of vectors $v_1, \ldots, v_r$ corresponding to 
  different choice of $\sigma \in \mcS$, but this does not exclude that we may have $\enc_{\rho,\sigma}(m)$ being a constant. 
  However, one can show that if 3 messages are available, then for 2 of them, the sets $\enc_{\rho,\mcS}$ have size at least~$T^{0.49}$.
  This can then be used in a sphere packing argument.
\end{proof}
\fi

\end{document}